\newcounter{savealgorithm}
\newenvironment{subalgorithms}
{%
	\stepcounter{algorithm}%
	\edef\currentthealgorithm{\thealgorithm}%
	\setcounter{savealgorithm}{\value{algorithm}}%
	\setcounter{algorithm}{0}%
	\renewcommand{\thealgorithm}{\currentthealgorithm\alph{algorithm}}%
}
{%
	\setcounter{algorithm}{\value{savealgorithm}}%
}
\newtheorem{lemma}{Lemma}[section]
\newtheorem{theorem}[lemma]{Theorem}
\newtheorem{proposition}[lemma]{Proposition}
\newtheorem{corollary}[lemma]{Corollary}
\newtheorem{observation}[lemma]{Observation}
\newtheorem*{claim*}{Claim}
\newtheorem*{corollary*}{Corollary}
\newtheorem*{observation*}{Observation}
\theoremstyle{definition}\newtheorem{definition}[lemma]{Definition}
\theoremstyle{definition}\newtheorem{building-block}[lemma]{Building Block}
\newcommand{\accept}[0]{\textsc{accept}\xspace}
\newcommand{\reject}[0]{\textsc{reject}\xspace}
\newcommand{\cond}{\middle |}
\newcommand{\floor}[1]{{\left\lfloor{#1}\right\rfloor}}
\newcommand{\ceil}[1]{{\left\lceil{#1}\right\rceil}}
\newcommand{\Let}{\State \textbf{let} }
\newcommand{\E}{\mathop{{\rm E}\/}}
\newcommand{\Ct}{\mathop{{\rm Ct}\/}}
\newcommand{\eps}{\varepsilon}
\newcommand{\supp}{\mathrm{supp}}
\title{Support Testing in the Huge Object Model}
\author{Tomer Adar\thanks{Technion - Israel Institute of Technology, Israel. Email: \href{mailto:tomer-adar@campus.technion.ac.il}{tomer-adar@campus.technion.ac.il}.} \and Eldar Fischer\thanks{Technion - Israel Institute of Technology, Israel. Email: \href{mailto:eldar@cs.technion.ac.il}{eldar@cs.technion.ac.il}. Research supported by an Israel Science Foundation grant number 879/22.} \and Amit Levi\thanks{University of Haifa and Technion - Israel Institute of Technology, Israel. Email: \href{mailto:alevi@ds.haifa.ac.il}{alevi@ds.haifa.ac.il}.}}
\begin{document}
	\begin{titlepage}
		\maketitle
		\thispagestyle{empty}
		
		\begin{abstract}
			The Huge Object model is a distribution testing model in which we are  given access to independent samples from an unknown distribution over the set of strings $\{0,1\}^n$, but are only allowed to query a few bits from the samples.
			We investigate the problem of testing whether a distribution is supported on $m$ elements in this model. It turns out that the behavior of this property is surprisingly intricate, especially when also considering the question of adaptivity.
			
			We prove lower and upper bounds for both adaptive and non-adaptive algorithms in the one-sided and two-sided error regime. Our bounds are tight when $m$ is fixed to a constant (and the distance parameter $\eps$ is the only variable). For the general case, our bounds are at most $O(\log m)$ apart.
			In particular, our results show a surprising $O(\log \eps^{-1})$ gap between the number of queries required for non-adaptive testing as compared to adaptive testing. For one sided error testing, we also show that a $O(\log m)$ gap between the number of samples and the number of queries is necessary.
			Our results utilize a wide variety of combinatorial and probabilistic methods.
		\end{abstract}
		
	\end{titlepage}
	%%\tableofcontents
	%%\thispagestyle{empty}
	%\end{titlepage}
	%%\newpage
	\pagenumbering{arabic}
	
	\section{Introduction}
	Property testing~\cite{RS96,GGR98} is a framework concerned with analyzing global properties of an input while reading only a small part thereof, in the form of queries. Over the past few decades property testing has become an active field of study in theoretical computer science (see e.g,~\cite{G17}). The study of distribution property testing was first implicitly explored in \cite{gr11}, and explicitly formulated in \cite{batuFFKRW2001} and \cite{batu-FRSW2000}. In the standard model of distribution testing, an algorithm can access a sequence of independent sampled elements drawn from an unknown input distribution $\mu$, and it either accepts or rejects the input based on this sequence. An $\eps$-testing algorithm for a property of distributions is required to accept every input distribution that satisfies the property with high probability (e.g., $\frac{2}{3}$), and to reject with high probability (e.g., $\frac{2}{3}$) every input distribution whose variation distance from every distribution satisfying the property is greater than $\eps$.
	
	The standard model of distribution testing assumes that the elements drawn from the distribution are fully accessible, which might be unreasonable if they are ``huge''. The Huge Object model, whose study was initiated in \cite{gr22}, treats the sampled elements as long strings that have to be queried. In this model, for example, it is possible that the algorithm has two non-identical samples without being able to distinguish them. This ``two-phase'' characteristic of the Huge Object model (``sample then query'', rather than only taking samples or only querying a string) exhibits rich behavior with respect to adaptive querying, as studied in detail in \cite{adar23}.
	
	In the standard model of distribution testing, \cite{vv11} and \cite{vv17} show a tight bound of $\Theta(m / \log m)$ samples for two-sided error $\eps$-testing of having a support size bounded by $m$ in the standard model, for every fixed $\eps$. An upper bound of $O(\eps^{-1} m)$ samples for one-sided algorithms is implicitly shown in \cite{adar23}, and here we show that it is tight (Proposition \ref{prop:lbnd-supp-na-catch}). Based on these tight bounds, the bounded support property is considered to be fully understood in the standard model for one-sided testing, and mostly understood in the two-sided case (for every fixed $m$ there is still a gap between $\Omega(\eps^{-1})$  and $O(\eps^{-2})$ for two-sided testing).
	
	One would expect that having bounded support, which is arguably the simplest of distribution properties, would have simple and easily understood testing bounds also in the Huge Object model. As in the standard model, it is the only label-invariant property that is testable using one-sided error algorithms (Proposition \ref{prop:lbl-inv-and-one-side-is-supp}). However, it turns out that the behaviour of this property under the Huge Object model is surprisingly intricate. One unexpected feature that we show here is a gap between the number of queries required for non-adaptively testing for this property as compared to adaptive testing. Indeed there is no adaptivity in the standard distribution testing model, but one would not expect the label-invariant (and even mapping-invariant as per the definition in \cite{gr22}) property of having bounded support to exhibit such a gap.
	
	\subsection{Definition of the model}
	The Huge Object model differs from the standard sampling model in its distance measure and in the way that the algorithm gathers information about the input distribution.
	
	\subsubsection*{Algorithmic model}
	
	A probabilistic algorithm $\mathcal{A}$ with $q$ queries and $s$ samples, whose input is a distribution $P$ over $\{0,1\}^n$ accessible via the Huge Object model, is an algorithm that acts in the following manner: at every stage, the algorithm may ask for a new sample $v$ that is provided by drawing it according to $P$, independently of all prior samples, or may ask to query a coordinate $j\in\{1,\ldots,n\}$ of an old sample $u$ (the algorithm may use internal coin tosses to make its decisions). When this query is made, the algorithm is provided with $u_j\in\{0,1\}$ as its answer. The algorithm has no access to the sampled vectors apart from queries. At the end, after taking not more than a total of $s$ samples and making a total of not more than $q$ queries, the algorithm provides its output.
	
	We say that the algorithm is \emph{non-adaptive} if it makes all its sampling and querying decisions in advance, prior to receiving all query answers in bulk. Only the final output of a non-adaptive algorithm may depend on the received answers. The formal definition of adaptivity appears in Subsection \ref{sec:additional-prelims:subsec:algos}.
	
	\subsubsection*{Distances}
	
	Here we define some measures of distance. Note that we usually use $d(\cdot,\cdot)$ without mentioning the measure that we use, if its context is unambiguous. For distributions over $\{0,1\}^n$, $d(\cdot,\cdot)$ usually refers to the earth mover's distance.
	
	\begin{definition}[String distance]
		Let $u, v \in \{0,1\}^n$ be two strings. We define their \emph{distance} as the normalized Hamming distance,
		\[d_H(u,v) = \frac{1}{n}\left|\left\{ 1 \le i \le n \;\cond\; u_i \ne v_i \right\}\right| = \Pr_{i \sim \{1,\ldots,n\}}[u_i \ne v_i]\]
		We define the distance of $u\in\{0,1\}^n$ from a set $A\subseteq \{0,1\}^n$ as $d_H(u,A) = \min_{v \in A} d_H(u,v)$.
	\end{definition}
	
	\begin{definition}[Transfer distribution]
		Let $P$ and $Q$ be distributions over finite sets $\Omega_1$ and $\Omega_2$, respectively. A distribution $T$ over $\Omega_1 \times \Omega_2$ is a \emph{transfer distribution} from $P$ to $Q$ if for every $a \in \Omega_1$, $\Pr_{(u,v) \sim T}[u=a] = P(a)$, and for every $b \in \Omega_2$, $\Pr_{(u,v) \sim T}[v=b] = Q(b)$. The set of transfer distributions from $P$ to $Q$ is denoted by $\mathcal{T}(P,Q)$. Note that this is a compact set when considered as a set of real-valued vectors.
	\end{definition}
	
	\begin{definition}[Variation distance]
		Let $\mu$ and $\nu$ be two distributions over a finite set $\Omega$. Their \emph{variation distance} is defined as:
		\[d_\mathrm{var}(\mu,\nu)
		= \frac{1}{2}\sum_{u \in \Omega} \left|\mu(u) - \nu(u)\right|
		= \max_{E \subseteq \Omega} \left|\Pr_\mu\left[E\right] - \Pr_\nu\left[E\right]\right|
		= \min_{T \in \mathcal{T}(\mu,\nu)} \Pr_{(u,v) \sim T}\left[u \ne v\right] \]
	\end{definition}
	
	\begin{definition}[Earth mover's distance]
		Let $P$ and $Q$ be two distributions over $\{0,1\}^n$. Their \emph{earth mover's distance} is defined as:
		\[d_\mathrm{EMD}(P,Q) = \min_{T\in \mathcal{T}(P,Q)} \E_{(u,v) \sim T}\left[d_H(u,v)\right] \]
		The above minimum exists since it is in particular the minimum of a continuous function over a compact set.
	\end{definition}
	
	\subsubsection*{Testing model}
	\begin{definition}[A property]
		A \emph{property} $\mathcal{P}$ is a sequence $\mathcal{P}_1,\mathcal{P}_2,\ldots$ such that for every $n \ge 1$, $\mathcal{P}_n$ is a compact subset of the set of all distributions over $\{0,1\}^n$.
	\end{definition}
	
	\begin{definition}[Distance of a distribution from a property]
		Let $\mathcal{P} = (\mathcal{P}_1,\mathcal{P}_2,\ldots)$ be a property and $P$ be a distribution over $\{0,1\}^n$ for some $n$. The \emph{distance of $P$ from $\mathcal{P}$} is defined as $d_\mathrm{EMD}(P,\mathcal{P}) = \min_{Q \in \mathcal{P}_n}\{d_\mathrm{EMD}(P,Q)\}$.
	\end{definition}
	
	\begin{definition}[$\eps$-test]
		Let $\mathcal{P}$ be a property of distributions over $\{0,1\}^n$. We say that a probabilistic algorithm $\mathcal{A}$ is an \emph{$\eps$-test} for $\mathcal{P}$ if:
		\begin{itemize}
			\item For every $P \in \mathcal{P}$, $\mathcal{A}$ accepts with probability higher than $\frac{2}{3}$.
			\item For every probability distribution $P$ over $\{0,1\}^n$ that is $\eps$-far from $\mathcal{P}$ (satisfying $d(P,\mathcal{P})>\eps$), $\mathcal{A}$ rejects with probability higher than $\frac{2}{3}$
		\end{itemize}
	\end{definition}
	
	\begin{definition}[one-sided and two-sided $\eps$-test]
		Consider the setting of the above definition. If additionally for every input $P \in \mathcal{P}$, $\mathcal{A}$ accepts $\mathcal{P}$ with probability $1$ (rather than ``higher than $\frac{2}{3}$''), then we say that $\mathcal{A}$ is a \emph{one-sided $\eps$-test} for $\mathcal{P}$. Otherwise, we say that $\mathcal{A}$ has \emph{two-sided error}.
	\end{definition}
	
	%\ todo{\sout{(a) Exact definition of distances and testing (probabilities including 1-sided and distances but no algrotihmic structure) (b) Informal definition of adaptivity and say where it is defined formally}}
	
	\subsection{Summary of our results}
	
	\paragraph{Table of results}
	The following is a table summarizing the bounds presented here for $\eps$-testing for being supported by at most $m$ elements, along with previously known ones provided for reference (Section \ref{sec:quick-bounds} contains a sketch on how to derive them). The hidden coefficients in the $O(\cdot)$ and the $\Omega(\cdot)$ notations are global numerical constants. The new results appear in purple.
	
	\begin{center}
		\begin{tabular}{ |c|c|c| }
			\hline
			Model & One-sided Error & Two-sided Error \\
			\hline
			Standard model & $\Theta(\eps^{-1} m)$ & $\Omega(\eps^{-1} m / \log m)$ \cite{vv11} \\
			(Sample complexity) & Folklore, see \cite{adar23} & $O(\eps^{-2} m / \log m)$ \cite{vv17} \\
			\hline
			Huge Object & {\color{purple}$\Omega(\eps^{-1} m (\log \eps^{-1} + \log m))$} & {\color{purple} $\Omega(\eps^{-1} \log \eps^{-1}) $}\\
			Non-adaptive & {\color{purple}$O(\eps^{-1}m \log \eps^{-1} \log m)$} & $O(\eps^{-3} m \log \eps^{-1})$ \cite{vv17} + \cite{gr22} \\
			\hline
			Huge Object & {\color{purple}$\Omega(\eps^{-1} m \log m)$} & $\Omega(\eps^{-1} m / \log m)$  \cite{vv11} \\
			Adaptive & {\color{purple}$O(\eps^{-1}m \log m \cdot \min\{\log \eps^{-1}, \log m\})$} & \\
			\hline
		\end{tabular}
	\end{center}

	%\ todo{Change and \sout{add formal statements following the definition} do tools part}
	
	The overview in Section \ref{sec:overview} provides an informal guide on deriving our results, whose proofs appear in the sections following it.
	
	The following are some conclusions to be drawn from the bounds given above. We use $\mathcal{S}_m$ to denote the property of being supported by at most $m$ elements (formally defined in Definition \ref{def:Sm-SA-Sf}).
	
	\paragraph{Adaptive vs.\ non-adaptive two-sided asymptotic gap} The most surprising result is that testing a distribution for being supported by at most two elements cannot be done using a number of queries linear in $\eps^{-1}$, even with two-sided error. This result applies for every $m \ge 2$, and the exact bound is $\Omega(\eps^{-1} \log \eps^{-1})$ (with the implicit coefficient being independent of $m$). To the best of our knowledge, combined with the $O(\eps^{-1})$ adaptive upper bound of \cite{adar23}, ``being supported by at most two elements'' is the first explicit example of a property that is closed under mapping (and in particular is label-invariant) which has different asymptotic bounds for the number of queries for adaptive algorithms and non-adaptive ones in the Huge Object model (see Theorem \ref{th:m-supp-na-lbnd-two-sided}).
	
	A possible explanation for this is that being label-invariant in the Huge Object model is different from being so in the standard model, because applying a permutation on the labels may change their distinguishability, and in particular it may change the distance from the property.
	
	In this paper we provide a thorough investigation of $\mathcal{S}_m$ utilizing a variety of methods. In particular, we show several gaps such as the above mentioned one. However, the behaviour of the bounded support property in the Huge Object model, especially when considering it as a problem with two variables (namely the maximal support sized $m$ and the distance parameter $\eps$) is still not completely understood. We do have tight bounds for the fixed constant $m$ cases (where only $\eps$ is variable) for all algorithm types, and bounds up to logarithmic factors for the more general cases.
	
	\paragraph{One-sided bounds and a gap from the standard model}
	We have tight bounds for $\eps$-testing of $\mathcal{S}_m$ for every fixed $m$ (and variable $\eps$) for both non-adaptive algorithms and adaptive ones. These bounds are also tight for every fixed $\eps$ (and variable $m$). Additionally, our bounds show a gap between the standard model (considering sample complexity) and the Huge Object model (considering query complexity). Consider the bounded support property as a sequence of individual properties, where for every $m \ge 2$, the $m$-th property is $\mathcal{S}_m$. We show that, if we only allow one-sided error tests, there is an $O(\log m)$ gap between the standard model of distribution testing and the Huge Object model. In the standard model, there exists a one-sided test for $\mathcal{S}_m$ at the cost of $O(\eps^{-1}m)$ samples. In the Huge Object model, there is a lower bound of $\Omega(\eps^{-1} m \cdot \log m)$ many queries for every one-sided $\eps$-test, even if it is adaptive. Note that the gap is between the number of \emph{samples} in the standard model and the number of \emph{queries} in the Huge Object model, which is the natural measure of complexity in this model.
	
	\subsection*{New tools}
	
	\paragraph{A new algorithmic paradigm}
	For the adaptive one-sided upper bound, we define a standalone algorithmic primitive, the ``fishing expedition'' paradigm, that repeatedly executes a subroutine until it reaches a predefined goal or when it finds out that it is no longer cost-effective (even if it did not reach the goal). We believe that this primitive will also be useful in future endeavors.
	
	\paragraph{A hybrid probabilistic-extremal analysis}
	We define a concept of ``valid composition''. Loosely speaking, it is an ordered subset of samples that become closer to each other as the sequence progresses, but are still $\eps$-far from each other. We use a hybrid probabilistic-extremal argument to show that for an input distribution that is $\eps$-far from $m$-support, with high probability, there exists a valid composition with at least $m+1$-elements.
	
	The hybrid probabilistic-extremal argument works as follows: we define some rank of valid compositions. If for every individual valid composition with at most $m$ elements, there is a high probability that it is not maximal (according to the rank), then globally there is a high probability that none of them is maximal. Hence, with high probability, the maximally-ranked valid composition within our samples must have at least $m+1$ elements.
	
	%For the non-adaptive one-sided upper bound, we define ``valid composition'' as an ordered subset of samples that virtually go denser, while still $\Omega(\eps)$-far from each other. We use it since it is easy to distinguish all elements in a valid composition, even without locating its subset among the other samples. To show that a long valid composition exists, we use a probabilistic-extremal analysis: if the probability of an object to be maximal (with respect to some partial order) is too low unless it belongs to a ``good'' set (such as the set of all long valid compositions), then there is a high probability that a ``good'' object exist (since every non-empty, finite set must have a maximal element).
	
	%We show that it is relatively easy to find a witness against $m$-support if we are given a sample set that contains a valid composition of size $m+1$ or more. We define some ranking of valid compositions, and then use an extremal analysis to show that for input distributions that are $\eps$ far from $\mathcal{S}_m$, with high probability, no valid composition with at most $m$ elements can has maximal rank (unless the input distribution is supported on $m$ elements or less). This implies that there exists a valid composition with at least $m+1$ elements.
	
	\paragraph{A new use for an old combinatorial result}
	For the adaptive one-sided lower bound, we use an old combinatorial result, that a biclique covering of the $m$-clique must have at least $m \log_2 m$ vertices \cite{hansel64,ks67}, to show that the every witness against $m$-support is at least $m \log m$ bits long, which makes it a lower bound to the number of queries. To apply a multiplicative factor of $\eps^{-1}$, which is pretty easy for non-adaptive algorithms, we analyze the effectivity of a decision tree that incrementally constructs a witness based on the queries.
	
	% \ todo{\sout{compositions (get inspiration from this year's Prob Methods question to explain hybrid argument), the use of contradiction graph (with in-decision-tree analysis)}}
	
	\subsection{Open problems}
	\paragraph{One-sided non-adaptive bounds} We have an $\Omega(\eps^{-1} m (\log \eps^{-1} + \log m))$ lower bound for one-sided $\eps$-testing of $\mathcal{S}_m$, as well as an $O(\eps^{-1} m \log \eps^{-1} \log m)$ upper bound for one-sided $\eps$-testing of $\mathcal{S}_m$. We believe that the upper bound is tight, but we do not have the corresponding lower bound. What is the true complexity of $\eps$-testing $\mathcal{S}_m$?
	
	\paragraph{Non-trivial two-sided bounds} Is there a lower bound of $\omega(m / \log m)$ queries for two-sided testing of $\mathcal{S}_m$ (noting that \cite{vv11} only gives $\Omega(m/\log m)$), even for non-adaptive algorithms? We believe that $\Omega(m)$ should be this lower bound, based on the $\log m$-gap in the one-sided case (a $\Theta(m)$ tight bound in the standard model, and a $\Theta(m \log m)$ tight bound in the Huge Object model).
	
	\paragraph{One-sided adaptive bounds} Our results for one-sided adaptive $\eps$-testing of $\mathcal{S}_m$ are tight with respect to $m$, but have a logarithmic gap with respect to $\eps$ (more precisely, with respect to $\min\{\eps^{-1}, m\}$). Closing this gap is an open problem.
	
	\paragraph{The tradeoffs between sample and query complexity} Our bounds apply to the query complexity of the tests. The lower bounds adapted from previous works on the traditional model clearly apply for the sample complexity here, even if we allow a higher query complexity. As for our new upper bounds, most of them have a polylogarithmic average queries per sample ratio. It would be interesting to investigate whether the sample complexity can be reduced if we allow a much higher (but still sub-linear in $n$) number of queries per sample.
	
	\section{Preliminaries}
	\label{sec:prelims}
	
	\subsection{Algorithmic model}
	
	As observed by Yao \cite{yao77}, every probabilistic algorithm can be seen as a distribution over a set of deterministic algorithms. Hence we can analyze probabilistic query-making algorithms by analyzing the deterministic algorithms they are supported on.
	
	We observe that we can assume that all samples are drawn before the first query is made, since they are fully independent: the distribution of every sample made does not depend at all on any calculation or queries that occurred before it was taken, and so we can assume that it was taken before any calculation was performed. Based on this observation we can represent our algorithms using a $\{0,1\}$-valued matrix (whose rows are sampled from the distribution), from which the algorithms are allowed to query. %a run of an algorithm as a restriction of a $0-1$-matrix whose rows are the samples.
	
	\begin{definition}[Matrix representation of input access]
		Considering an algorithm with $s$ samples and $q$ queries, we assume that the samples are all taken at the beginning of the algorithm and are used to populate a matrix $M\in\{0,1\}^{s\times n}$. Then, during the run of the algorithm, each of its queries is represented as a pair $(i,j)\in \{1,\ldots,s\}\times \{1,\ldots,n\}$, for which the answer is $M_{i,j}$.
	\end{definition}
	
	\begin{definition}[Adaptive algorithm]
		Every deterministic algorithm in the Huge Object model with $q$ queries over $s$ samples is equivalent to a pair $(T,A)$, where $T$ is a decision tree of height $q$ in which every internal node contains a query $(i,j)$ (where $1 \le i \le s$ is the index of a sample and $1 \le j \le n$ is the index to query), and $A$ is the set of accepting leaves.
	\end{definition}
	
	\begin{definition}[Non-adaptive algorithm]
		A deterministic algorithm $(T,A)$ with $q$ queries is \emph{non-adaptive} if, for every $0 \le i < q$, all internal nodes at the $i$-th level consist of the exact same query. Every non-adaptive algorithm can be represented as a pair $(Q,A)$, where $Q \subseteq \{1,\ldots,s\} \times \{1,\ldots,n\}$ is a \emph{set} of queries and $A \subseteq \{Q \mapsto \{0,1\} \}$ is the set of accepted answer vectors.
	\end{definition}
	
	\subsection{Technical components}
	
	\subsubsection*{Fishing expedition}
	We define an algorithmic primitive that allows us to repeat an execution of a probabilistic subroutine until it is no longer effective. Consider for example a ``coupon-collector'' type process, but one in which the number of distinct elements is not known to us. The goal is to collect a preset number of elements, but we also want to stop early if we believe that there are no more elements to be effectively collected. %The following two definitions formalize such processes. 
	%The exact formal definition appears in Section \ref{sec:fishing-expedition}. Here we describe it informally.
	
	Consider a (probabilistic) subroutine $\mathcal{A}$ that can either fail or succeed. We denote the outcome of an execution of $\mathcal{A}$ by $R$. In this discussion the outcome includes both the explicit output of the execution and its side effects, which may affect the probabilities for future executions of $\mathcal{A}$. We thus analyze a \emph{sequence} of executions $R_1,\ldots,R_N$, where $R_1$ is performed over the initial state. We define two behaviors of ``coupon collection'' that such an $\mathcal{A}$ must present.
	
	\begin{definition}[Fail stability] \label{def:fail-stability}
		Let $\mathcal{A}$ be a subroutine that may succeed or fail. Specifically let $R_1,\ldots,R_N$ be random variables that detail the outputs of the first $N$ executions of $\mathcal{A}$. %, where a $0$ value means that an execution has failed (and all other values imply success).
		We say that $\mathcal{A}$ is \emph{fail stable} with respect to a set $G$ of outcomes indicating success, if for every $2 \le i \le N$ and every result sequence $(r_1,\ldots,r_{i-1})\in\supp(R_1,\ldots,R_{i-1})$ for which $r_{i-1} \notin G$:
		\[\Pr\left[R_i \in G \mid R_1 = r_1, \ldots, R_{i-2} = r_{i-2}, R_{i-1} = r_{i-1} \right] = \Pr\left[R_{i-1} \in G \mid R_1 = r_1, \ldots, R_{i-2} = r_{i-2} \right]\]
		In other words, a failure does not affect the probability of further executions to succeed.
	\end{definition}
	
	\begin{definition}[Diminishing returns] \label{def:diminishing-returns}
		Let $\mathcal{A}$ and $R_1,\ldots,R_N$ be as in Definition \ref{def:fail-stability}. We say that $\mathcal{A}$ \emph{has diminishing returns} with respect to a set $G$ of successful outcomes, if for every $2 \le i \le N$ and every result sequence $(r_1,\ldots,r_{i-1})\in\supp(R_1,\ldots,R_{i-1})$:
		\[\Pr\left[R_i \in G \mid R_1 = r_1, \ldots, R_{i-2} = r_{i-2}, R_{i-1} = r_{i-1} \right] \le \Pr\left[R_{i-1} \in G \mid R_1 = r_1, \ldots, R_{i-2} = r_{i-2}\right]\]
		That is, if $\mathcal{A}$ has diminishing returns, then a success in a single execution never increases, but may decrease, the probability of further executions to succeed.
	\end{definition}
	
	Recall the coupon-collecting example. We expect it to have both fail stability and diminishing returns (with respect to a common set $G$ of outcomes indicating success). If we look for a coupon and do not find it in a single try, nothing happens. Further tries will have the same probability to succeed. On the other hand, if we collect a coupon, then in further tries, there are less uncollected coupons left and it is slightly harder to find an additional one.
	
	The fishing expedition paradigm seeks to collect a goal of $k$ coupons, but ``gives up'' if it believes that the probability to find an additional coupon is less than some parameter $p$. %Below is an informal version of Lemma \ref{lemma:fishing-expedition-full}.
	
	%\begin{lemma*}[The Fishing Expedition paradigm (informal statement)]
	%    Let $\mathcal{A}$ be a subroutine that features the behaviors of fail stability and diminishing returns.
	%    For every $p > 0$, $q > 0$, $k \ge 1$, there exists an algorithm that executes $\mathcal{A}$ at most $O(p^{-1}(H + \log q^{-1} + \log \log k))$ times, where $H$ is the number of successful executions. Moreover, with probability at least $1 - q$, the algorithm either have $k$ successful executions ($H=k$, reached our goal) or correctly reports that the success probability of an additional execution is at most $p$ (``coupons become too rare'').
	%\end{lemma*}
	
	The desired algorithm has three parameters: a threshold $p$, a confidence $q$ and a goal $k \ge 1$. The input is a subroutine $\mathcal{A}$ with diminishing returns and fail stability (with respect to some common set $G$). Informally, the goal of the algorithm is to have $k$ successful executions of $\mathcal{A}$, but also to terminate earlier if the probability of $\mathcal{A}$ to succeed becomes lower than $p$. Since the algorithm has no actual access to the success probability of $\mathcal{A}$, it should terminate early only if it is confident enough that the success probability of further executions is too low for them to be effective.
	
	\begin{lemma}
		\label{lemma:fishing-expedition-full}
		Consider a black box subroutine $\mathcal{A}$ with fail stability (Definition \ref{def:fail-stability}) and diminishing returns (Definition \ref{def:diminishing-returns}) with respect to a common set $G$ of outcomes indicating success.
		
		For an algorithm that repeatedly executes $\mathcal{A}$, we define the following random variables:
		\begin{itemize}
			\item $N$ -- the number of executions.
			\item $R_1,\ldots,R_N$ -- their outcomes.
			\item $X_1,\ldots,X_N$ -- indicators of success (that is, $X_i = 1$ if and only if $R_i \in G$).
			\item $H = \sum_{i=1}^N X_i$ -- the number of successful executions.
			\item $\hat p = \Pr[X_{N+1} = 1 | R_1,\ldots,R_N]$ -- the success probability of a possible extra execution of $\mathcal{A}$.
		\end{itemize}
		
		Considering the parameters $p > 0$ (threshold), $q > 0$ (confidence), and $k \ge 1$ (goal), there exists an algorithm that repeatedly executes $\mathcal{A}$ for which $N \le p^{-1}(4H + 5(\log q^{-1} + \log (\log k + 1))) + 1$ and $H \le k$, such that with probability higher than $1-q$, either $H = k$ or $\hat p \le p$ (or both).
	\end{lemma}
	
	\subsubsection*{Contradiction graph}
	We define here what it means to be a ``counter-example'' for having a bounded support size $m$.
	
	\begin{definition}[Contradiction graph] \label{def:contradiction-graph}
		Let $x_1,\ldots,x_s \in \{0,1\}^n$ be a sequence of strings. Let $Q \subseteq \{1\ldots,s\} \times \{1,\ldots,n\}$ be a set of queries. We define \emph{the contradiction graph} of $(x_1,\ldots,x_s;Q)$ as $G(V,E)$ with $V=\{1,\ldots,s\}$, and for every $1 \le i_1,i_2 \le s$:
		\[\{i_1,i_2\} \in E \iff \exists 1 \le j \le n : (x_{i_1})_j \ne (x_{i_2})_j \wedge \left((i_1,j),(i_2,j) \in Q\right)\]
		Note that the graph is undirected since the definition of the edges is commutative. It is also clearly without self-loops.
	\end{definition}
	
	\begin{definition}[Witness against $m$-support] \label{def:witness-against-Sm-not-colorable}
		Let $P$ be a distribution that is supported by a set of more than $m$ elements. We say that $(x_1,\ldots,x_s; Q)$ is a \emph{witness against $m$-support} (of $P$) if $x_1,\ldots,x_s$ are all drawn from $P$, and their contradiction graph is not $m$-colorable.
	\end{definition}
	
	We prove in Lemma \ref{lemma:witness-iff-not-colorable} that calling the above a witness is indeed justified, in the sense that a distribution $P$ has $m$-support if and only if there is zero probability to draw a tuple $x_1,\ldots,x_s$ for which one can provide a query set $Q$ that makes it a witness.
	
	\begin{definition}[Explicit witness against $m$-support]
		Let $P$ be a distribution that is supported by a set of more than $m$ elements. We say that $(x_1,\ldots,x_s,Q)$ is an \emph{explicit witness against $m$-support} (of $P$) if $x_1,\ldots,x_s$ are all drawn from $P$, and their contradiction graph contains a clique with $m+1$ vertices as a subgraph.
	\end{definition}
	Note that an explicit witness is in particular a witness against $m$-support, but the converse does not generally hold.%in particular an explicit witness is a witness against $m$-support.

	\section{Overview of our proofs}
	\label{sec:overview}
	
	%The following is a guide of the new ideas and proofs of our results, and their location in the paper. Section \ref{sec:quick-bounds} provides a sketch on how to quickly derive the bounds that we use from previous works.
	
	\subsection*{Two-sided, non-adaptive lower-bound (Theorem \ref{th:m-supp-na-lbnd-two-sided})}
	
	%\ todo{Do over after other todos\ldots this particular part is actually Ok-ish}
	
	We first describe our lower bound for $\mathcal{S}_2$, which holds the main ideas also for $\mathcal{S}_m$. We begin by analyzing a restricted form of non-adaptive algorithms, which we call \emph{rectangle algorithms}. A rectangle algorithm is characterized by the number of samples $s$ and a set $I$ of indices. Every sample is queried at the indices of $I$, hence the query complexity is $s \cdot |I|$. We say that $|I|$ is the ``width'' of the rectangle and that the number of samples is its ``height''.
	
	Consider the following $O(\eps^{-1})$-query rectangle algorithm: for some hard-coded parameter $\beta > 0$, it chooses a set $I$ of $O(\beta^{-1})$ indices, and then it takes $O(\beta \eps^{-1})$ samples, and then queries every sample on all indices of $I$.
	
	Now consider the following form of inputs. For some $\alpha > 0$ and two strings $a$ and $b$ for which $d(0,a), d(0,b), d(a,b) = \Theta(\alpha)$, let $P$ be the following distribution. The string $0$ is picked with probability $1 - c\alpha^{-1} \eps$, the string $a$ with probability $\frac{c}{2}\cdot\alpha^{-1} \eps$ and the string $b$ with probability $\frac{c}{2}\cdot\alpha^{-1} \eps$, where $c>1$ is some global constant.
	
	Intuitively, the algorithm finds a witness against $2$-support if there is a query common to $a$ and $b$, at an index $j$ that is not always zero (we call such $j$ a \emph{non-zero index}). That is, there are two necessary conditions to reject: the algorithm must get both $a$ and $b$ as samples, and it must query at an index $j$ for which $(a)_j \ne (b)_j$.
	
	The expected number of non-zero samples that the algorithm gets is $O(\alpha^{-1}\beta)$. If $\alpha$ is much greater than $\beta$, then with high probability the algorithm only gets all-zero samples and cannot even distinguish the input distribution from the deterministic all-zero one.
	
	The expected number of non-zero indices that the algorithm chooses is $O(\alpha \beta^{-1})$. If $\alpha$ is much smaller than $\beta$, then with high probability all queries are made in ``zero indices'' and the algorithm again cannot even distinguish the input distribution from the deterministic all-zero one. Thus, the algorithm can reject the input with high probability only if $\alpha \approx \beta$.
	
	Our construction of $D_\mathrm{no}$ chooses $\alpha=2^k$ where $k$ is distributed uniformly over its relevant range, to ensure that a rectangle algorithm (with a fixed $\beta$) ``misses'' $\alpha$ with high probability. Intuitively, the idea is that a non-adaptive algorithm must accommodate a large portion of the possible values of $\alpha$, which would lead to an additional $\log \eps^{-1}$ factor. Then, we show that given an input drawn from $D_\mathrm{no}$, if the algorithm did not distinguish two non-zero elements, then the distribution of runs looks exactly the same as the distribution of runs of the same algorithm given an input drawn from $D_{\mathrm{yes}}$, which is supported over $0$ and a single $a$.
	
	To show that the above distributions defeat any non-adaptive algorithm (not just rectangle algorithms), we analyze every index $1\leq j\leq n$ according to the number of samples which are queried in that index. If few samples are queried, then this index has a high probability of not hitting two non-zero samples, rendering it useless (we gain an important advantage by noting that actually querying $j$ from at least two non-zero samples is required for it to be useful). If many samples are queried then this index may hit many samples, but only few indices can host many queries, which gives us a high probability of all of them together not containing a non-zero index among them.
	
	To extend this result to $m\ge 2$, for every $t \ge 2$ we define a distribution $D^{t}_\mathrm{no}$ over inputs that are supported by $t+1$ elements (one of them being the zero vector), and also $\eps$-far from being supported by $m$ elements (for every $m \le t/2 + 1$). As before, we define $D_\mathrm{yes}$ as a distribution over inputs supported by $2$ elements, which is identical to $D^1_\mathrm{no}$ (including the all-zero elements), and then we proceed with the same argument as before.

	\subsection*{One-sided, non-adaptive upper bound (Theorem \ref{th:alg-m-supp-na-correct})}
	
	%\ todo{Rewrite, give more information on what is a composition, together with new ``new tools'' in the intro give an idea that we prove that a large one exists (note synergy with new intro portion)}
	
	Let us first consider a ``reverse engineering'' algorithm: for every $\ell = 2^0, 2^1, \ldots, 2^{\log \eps^{-1}}$, we query $\Theta((\eps^{-1} / \ell) \cdot \log m)$ indices that are common to at least $\ell \cdot m$ samples. Intuitively, according to the analysis of the two-sided lower bound, the algorithm should have roughly $\Omega(m \log m)$ indices that distinguish pairs of elements, which suffice for a contradiction graph that contains an $m+1$-clique.
	
	This intuition appears to be lacking when it comes to showing the correctness of this construction for inputs that lack the special form of $D^t_\mathrm{no}$. To be able to handle distance combinations (instead of just one ``$\alpha$'' as above), we use a concept of ``valid compositions''. A valid composition is an ordered combination of samples ($x_1,\ldots,x_k$) and a sequence of non-decreasing scales ($a_2,\ldots,a_k$), for which the distances are bounded by $d(x_i, \{x_1,\ldots,x_{i-1}\}) > 2^{-a_i-1}$ (see Definitions \ref{def:composition-soundness}, \ref{def:composition-monotonity}, \ref{def:composition-valid}).
	
	Querying according to index sets whose random choice follows the prescribed distances distinguishes all elements in a composition with high probability. Our goal is to show the existence of valid compositions of $m+1$ elements in order to ensure that we find an explicit witness, and thus establish the upper bound. However, it is not clear that ``long'' valid compositions even exist.
	
	We use an extremal probabilistic argument, and show that if the input is $\eps$-far from having support size $m$, then with high probability no ranked composition with at most $m$ elements is maximal. This implies that the maximally ranked composition (with respect to an order that we define) cannot have less than $m+1$ elements, leading with high probability to finding an explicit witness against $m$-support. %unless the input is indeed $\eps$-close to having support size $m$.

	\subsection*{One-sided, adaptive upper bound (Theorem \ref{th:alg-m-supp-adaptive-correct})}
	We adaptively construct a distinguishing sequence that resembles a valid composition, but at some point we decide to ``give up'' and change phase to another way of querying that is more efficient under some conditions. Luckily, the condition that makes us give up implies them.
	
	For every distance scale, from $\Omega(1)$ to $\frac{1}{m}$, we use the ``fishing expedition'' paradigm to extend our sequence with as many elements as we can until we are certain enough that it is no longer effective to look for them (or until we find a witness against $m$-support). Unfortunately, it is possible that at some point the algorithm is certain enough that it is no longer effective to look for elements in any of these scales. At this point, we observe that the contribution of elements with small distance scale to the distance of the input from $\mathcal{S}_m$ is still $\Omega(\eps)$ (that is, we can safely ignore the ``rare large-distance elements''). To make use of this observation, the algorithm shifts to the second phase of looking for elements with small distances in a more general way which does not necessarily follow the ``theme'' of looking for valid compositions.
	
	In the small distance scale phase we construct and maintain a ``decision tree'' data structure over the existing elements, so that for every element that we need to compare to the existing elements, we can rule out in advance, using only $O(m)$ many queries, all but one of them. This allows us to save queries, since the smaller distances require the querying of relatively many indices for a comparison, which would have been very inefficient to perform for all existing elements.

	\subsection*{One-sided lower-bounds (Theorem \ref{th:m-supp-na-one-sided-lbnd-composite}, Corollary \ref{col:lbnd-one-sided-m-adaptive})}
	We prove that an algorithm obtains a witness against $m$-support if and only if the contradiction graph (Definition \ref{def:contradiction-graph}) is not $m$-colorable (Lemma \ref{lemma:witness-iff-not-colorable}). Hence we look for the lower bound on the number of queries needed to construct a non-$m$-colorable contradiction graph.
	
	We observe that, given a query set, every index $j$ describes a biclique contradiction graph whose classes are ``all samples queried at $j$ for which $x_j=0$'' and ``all samples queried at $j$ for which $x_j=1$''. The contradiction graph is the union of these graphs. Then we extend our analysis in two ways, one of which applies to non-adaptive algorithms (giving a $\log \eps^{-1}$ factor) and the other also applies to adaptive ones (giving a $\log m$ factor).
	
	For non-adaptive algorithms, we extend the analysis of the two-sided bound to show that a one-sided algorithm for $\mathcal{S}_m$ requires $\Omega(\eps^{-1} m \log \eps^{-1})$ many queries. The following shows the hardness of ``gathering a witness against $\mathcal{S}_m$'', which allows for a more versatile argument as compared to the indistinguishability argument that we use for the lower bound of Theorem \ref{th:m-supp-na-lbnd-two-sided}.
	
	We use $D^t_\mathrm{no}$ using $t=4m/3$. For a non-adaptive algorithm that makes less than $O(\eps^{-1} m \log \eps^{-1})$ queries, the probability that it distinguishes two specific non-zero elements is $\frac{1}{16}$. Considering the contradiction graph, excluding the vertex corresponding to the zero vector, we show that the expected number of edges is at most $\frac{1}{16}\binom{t}{2}$. By Markov's inequality, with probability higher than $\frac{2}{3}$, there are less than $\binom{3t/4 - 1}{2} = \binom{m-1}{2}$ edges, meaning that this subgraph is colorable using $m-1$ colors. Combined with the vertex corresponding to the zero vector, the contradiction graph is colorable by $m$ colors, hence it cannot be a witness against being supported on only $m$-support.
	
	For the other bounds we extend a result of \cite{hansel64, ks67}, providing a lower bound on bi-clique covers of cliques, to show that every biclique cover of a non-$m$-colorable graph requires at least $(m+1) \log_2 (m+1)$ vertices (Lemma \ref{lemma:clique-witness-reduction}). To show another lower bound against non-adaptive algorithms, we construct a distribution in which a single, ``anchor'' element is drawn with probability $1 - \Theta(\eps)$. This way, for every non-adaptive algorithm that makes only $o(\eps^{-1} m \log m)$ many queries, the expected number of queries applied to other elements is $o(m \log m)$. By Markov's inequality, with probability $\frac{2}{3}$, only $o(m \log m)$ queries are made in non-zero elements, and in this case, there cannot be a witness against $m-1$ other elements.
	
	This construction cannot immediately be applied to adaptive algorithms, since they can use adaptivity to avoid wasting queries on the anchor element. To overcome this issue, we use two additional methods. The first one is using very short strings, that is, we focus on distributions over $\{0,1\}^{O(\log m)}$ that are $\eps$-far from having $m$ elements in their support (later we prove that the bound also holds for arbitrarily large $n$ using a simple repetition technique). The second method involves using shared-secret code ensembles \cite{beflr2020} that guarantee, in an appropriate setting, that if the algorithm makes less than $O(\log m)$ queries in an individual sample, then it gathers no information at all. This way, for every individual sample, the algorithm either behaves similarly to a non-adaptive algorithm or makes at least a fixed portion of the maximum number of queries. The exact argument requires a careful analysis of the decision tree of the algorithm.
	
	\section{Quick bounds from previous results}
	\label{sec:quick-bounds}
	We recall some known results for the standard model and use them to derive initial bounds on testing $\mathcal{S}_m$.
	
	%\ todo{\sout{check if this requires ``additional preliminaries'', if not then move to before it}}
	
	Observe that, without loss of generality, we can assume that every sample is queried at least once. Using distributions over sets of of vectors that are mutually $0.499$-far, lower bounds for the standard model can be converted to to the Huge Object model, implying in particular the following.
	
	\begin{proposition}[Proposition 2.8 in \cite{gr22}] \label{prop:vv-lbnd}
		Every two-sided error $\eps$-test for $\mathcal{S}_m$ makes at least $\Omega(m / \log m)$ queries (for some fixed $\eps$).
	\end{proposition}
	
	In the Huge Object model, different samples may be indistinguishable, hence standard-model algorithms cannot be immediately converted to Huge Object model ones. However, we can use the following reduction.
	\begin{lemma}[Theorem 2.2 in \cite{gr22}] \label{lemma:gr-map-inv-reduction}
		Suppose that $\mathcal{P}$ is testable with sample complexity $s(n,\eps)$ in the standard model, and that $\mathcal{P}$ is closed under mapping (note that bounded support properties are closed under mapping). Then for every $\eps > 0$ there exists a non-adaptive $\eps$-test for $\mathcal{P}$ in the Huge Object model that uses $3\cdot s(m,\eps)$ samples and $O(\eps^{-1} \log (\eps^{-1} s(m,\eps/2)))$ queries per sample.
	\end{lemma}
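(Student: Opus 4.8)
The plan is to realize any standard-model tester for $\mathcal P$ inside the Huge Object model by querying a single, randomly chosen common set of coordinates in every sample. At the start, fix a uniformly random set $R\subseteq\{1,\ldots,n\}$ with $|R|=r:=\Theta(\eps^{-1}\log(\eps^{-1}s(m,\eps/2)))$, and let $g_R:\{0,1\}^n\to\{0,1\}^n$ send $x$ to the string equal to $x$ on $R$ and $0$ elsewhere. The Huge Object algorithm draws a constant-factor multiple of $s(m,\eps/2)$ samples $v_1,v_2,\ldots$ from the input $P$, queries each $v_i$ exactly at the coordinates of $R$ — that is $\Theta(r)$ queries per sample, all decided in advance, so the algorithm is non-adaptive — and then runs the standard-model $(\eps/2)$-test for $\mathcal P$, with its error amplified below $\tfrac19$ by the constant surplus of samples, on the ``reduced'' samples $g_R(v_1),g_R(v_2),\ldots$; these are genuine i.i.d.\ copies of the distribution $g_R(P)$. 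The Huge Object algorithm returns the simulated tester's answer.

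Completeness is immediate: if $P\in\mathcal P$, then since $\mathcal P$ is closed under mapping and $g_R$ is a function from $\{0,1\}^n$ to itself, $g_R(P)\in\mathcal P$, so the standard tester accepts with probability above $\tfrac89>\tfrac23$ — and with probability $1$ if the standard tester is one-sided, so the reduction also preserves one-sidedness. This holds for \emph{every} choice of $R$.

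The work is in soundness. Since $d_H(g_R(x),g_R(y))\le d_H(x,y)$ for all $x,y$, passing from $P$ to $g_R(P)$ can only shrink distances, so it is not a priori clear that $g_R(P)$ stays far from $\mathcal P$. We must show that if $d_\mathrm{EMD}(P,\mathcal P)>\eps$, then with probability at least $\tfrac89$ over $R$ the reduced distribution $g_R(P)$ is $(\eps/2)$-far from $\mathcal P$ in variation distance; after that the standard tester rejects $g_R(P)$ and a union bound over ``bad $R$'' and ``tester error'' finishes the argument. The engine is the concentration estimate $\Pr_R[g_R(u)=g_R(v)]=\Pr_R[R\cap\{i:u_i\ne v_i\}=\emptyset]\le(1-d_H(u,v))^{r}\le e^{-d_H(u,v)\,r}$, which is at most any prescribed inverse-polynomial in $\eps^{-1}$ and $s(m,\eps/2)$ already once $d_H(u,v)=\Omega(\eps)$, by our choice of $r$. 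Hence, with high probability over $R$, the total $P\times P$-mass of pairs of support strings that are $\Omega(\eps)$-apart in Hamming distance yet collapse under $g_R$ is negligible, so every fiber of $g_R$ carrying non-negligible mass is Hamming-concentrated. This lets one lift a hypothetical $(\eps/2)$-close member $\hat Q\in\mathcal P$ of $g_R(P)$ back to an $\eps$-close member of $\mathcal P$ of $P$, by composing with a map that picks one full-string representative per fiber (since $\mathcal P$ is closed under mapping, the image of any member of $\mathcal P$ under a function is again in $\mathcal P$): the extra earth-mover cost incurred is at most $\eps/4$, contradicting $d_\mathrm{EMD}(P,\mathcal P)>\eps$. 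Making this lifting precise — in particular getting the averaging bound over the relevant string configurations to close, which is exactly what pins down the value of $r$ — is the part I expect to be the main obstacle. For the bounded-support property it is more tractable: by Observation~\ref{obs:dxA-far} and Lemma~\ref{lemma:realizing-the-distance-from-a-set} one need only reason about the at most $m$ Voronoi cells of an optimal $m$-element center set of $g_R(P)$, so the collapse estimate is applied only $O(m)$-fold rather than over all fibers of $g_R$.
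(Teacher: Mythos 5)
This statement is not proved in the paper; the paper invokes it as a black box, citing Theorem~2.2 of \cite{gr22}. So there is no in-paper proof to compare against, and your proposal has to stand on its own.

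Your overall strategy is the right one and (as far as I can tell) in the same spirit as Goldreich--Ron: restrict every sample to a single random coordinate set $R$ of size $\Theta(\eps^{-1}\log(\eps^{-1}s))$, observe that this produces i.i.d.\ samples from $g_R(P)$, and run an (amplified) standard-model tester on them. Completeness is exactly as you say: $g_R(P)\in\mathcal P$ whenever $P\in\mathcal P$ by closure under mapping, for every $R$, so one-sidedness is preserved.

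The genuine gap is in the soundness step, and you flag it yourself, but I want to pin down exactly where the sketch does not yet close. The implication ``total $P\times P$-mass of $\Omega(\eps)$-far colliding pairs is $\le\gamma$ with high probability over $R$'' to ``every fiber of $g_R$ carrying non-negligible mass is Hamming-concentrated'' is not the statement you actually want, and it is also not literally true in the needed quantitative form: a fiber $F$ of $P$-mass $p$ with spread $\sum_{x,x'\in F:d_H>\eps/4}P(x)P(x')\le\gamma_F$ only tells you that you can choose a representative $z_F$ (e.g.\ by averaging over $z\sim P|_F$) with $\sum_{x\in F}P(x)d_H(x,z_F)\le \tfrac{\eps}{4}p+\gamma_F/p$. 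The $1/p$ factor is the problem: fibers can be light while still being among the top $m$, and there can be huge fibers (e.g.\ under uniform $P$) where no single representative dominates. Your write-up does not explain how to beat the $1/p$, and the ``Hamming-concentrated'' phrasing papers over this.

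The argument can, however, be closed along exactly your lines for $\mathcal S_m$ with one extra bookkeeping idea: split the top-$m$ fibers into heavy ones ($p>\sqrt\gamma$), whose total cost $\sum\gamma_F/p$ is at most $\sqrt\gamma$, and light ones ($p\le\sqrt\gamma$), whose total cost is trivially at most $m\sqrt\gamma$. Assuming $\sum_{\text{top }m}P(F)\ge1-\eps/2$ and using $d_{\mathrm{EMD}}(P,\mathcal S_m)>\eps$ (via Observation~\ref{obs:dxA-far} and Lemma~\ref{lemma:realizing-the-distance-from-a-set}) one gets the inequality $\eps/2<\eps/4+(m+1)\sqrt\gamma$, which is contradicted once $\gamma<\eps^2/(16(m+1)^2)$. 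Since $\E_R[\text{spread}]\le(1-\eps/4)^r$ and $s(m,\cdot)=\Omega(m/\log m)$, a Markov argument then forces $r=O(\eps^{-1}(\log\eps^{-1}+\log m))=O(\eps^{-1}\log(\eps^{-1}s(m,\eps/2)))$, matching the stated bound. So the approach is correct and yields the right parameters; but as written, the lifting step is a handwave rather than a proof, the ``concentrated fiber'' claim is the wrong intermediate, and the general mapping-closed case (which the lemma asserts, even if the paper only uses it for bounded support) is not addressed at all.
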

	
	\begin{proposition}[combining \cite{vv17} and \cite{gr22}] \label{prop:vv-ubnd}
		There exists a two-sided $\eps$-test for $\mathcal{S}_m$ whose query complexity is $O(\eps^{-3} m \log \eps^{-1})$.
	\end{proposition}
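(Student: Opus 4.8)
The plan is to combine two existing black boxes: the standard-model support-size tester of \cite{vv17} and the mapping-invariance reduction of Lemma \ref{lemma:gr-map-inv-reduction}. First I would recall that $\mathcal{S}_m$ is closed under mapping (as noted right after the definition of closure under mapping), so Lemma \ref{lemma:gr-map-inv-reduction} is applicable to it. I would also record the relevant consequence of \cite{vv11,vv17}: testing $\mathcal{S}_m$ in the standard model is exactly the problem of testing whether a distribution has support size at most $m$, and it admits a two-sided $\eps$-test there with sample complexity $s(m,\eps) = O(\eps^{-2} m / \log m)$ (for fixed $m$ this is the $O(\eps^{-2})$ bound mentioned in the introduction, and for fixed $\eps$ it is the tight $\Theta(m/\log m)$ bound).

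Next I would instantiate Lemma \ref{lemma:gr-map-inv-reduction} with $\mathcal{P} = \mathcal{S}_m$ and this sample complexity. It produces a non-adaptive $\eps$-test in the Huge Object model that draws $3\,s(m,\eps) = O(\eps^{-2} m / \log m)$ samples and makes $O(\eps^{-1} \log(\eps^{-1} s(m,\eps/2)))$ queries per sample. Since $s(m,\eps/2) = O(\eps^{-2} m / \log m)$ as well, the number of queries per sample is $O(\eps^{-1} \log(\eps^{-3} m / \log m)) = O(\eps^{-1}(\log \eps^{-1} + \log m))$.

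Finally I would multiply the two counts: the total query complexity is
\[
O\!\left(\frac{m}{\eps^{2}\log m}\right) \cdot O\!\left(\eps^{-1}(\log \eps^{-1} + \log m)\right)
= O\!\left(\frac{m}{\eps^{3}}\cdot\frac{\log \eps^{-1} + \log m}{\log m}\right),
\]
and since $m \ge 2$ and $\eps < 1/2$ we have $\frac{\log \eps^{-1} + \log m}{\log m} \le 1 + \log \eps^{-1} = O(\log \eps^{-1})$, so the total is $O(\eps^{-3} m \log \eps^{-1})$, which is the claimed bound.

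The only point requiring care — rather than a genuine obstacle — is the bookkeeping of the logarithmic terms: one must notice that the $1/\log m$ saving in the standard-model sample complexity of \cite{vv17} is precisely what absorbs the extra $\log m$ that the reduction of \cite{gr22} introduces in the per-sample query count. Without this saving the same computation would only give $O(\eps^{-3} m (\log \eps^{-1} + \log m))$, so it is worth stating the $\eps$- and $m$-dependence of the \cite{vv17} bound explicitly before plugging it in.
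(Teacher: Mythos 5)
Your proposal is correct and follows the same route as the paper: plug the $O(\eps^{-2} m / \log m)$ sample-complexity bound of \cite{vv17} into Lemma~\ref{lemma:gr-map-inv-reduction}. Your bookkeeping is in fact more careful than the paper's terse version (whose stated per-sample query count appears to carry a typo), and your observation that the $1/\log m$ saving from \cite{vv17} exactly absorbs the $\log m$ introduced by the reduction is the point that yields $O(\eps^{-3} m \log \eps^{-1})$ rather than $O(\eps^{-3} m (\log \eps^{-1} + \log m))$.
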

	
	\begin{proof}
		In \cite{vv17} there is a two-sided algorithm for $m$-support testing that uses $O(\eps^{-2} m/\log m)$ samples in the standard model of distribution testing. Lemma \ref{lemma:gr-map-inv-reduction} implies that there exists a non-adaptive $\eps$-testing algorithm for $m$-support that uses $O(\eps^{-1} m \log \eps^{-1})$ queries per sample, which gives $O(\eps^{-3} m \log \eps^{-1})$ queries in total.
	\end{proof}
	
	In the above we used \cite{vv17} rather than the more recent \cite{wu2019chebyshev}, since we needed a statement that holds for all values of $\eps$ (including those smaller than $1/m$). Proposition \ref{prop:vv-ubnd} implies that for every fixed $\eps$ and variable $m$, there exists an $O(m)$ non-adaptive two-sided error $\eps$-test for $\mathcal{S}_m$. In this context we also note the following known bounds.
	
	\begin{theorem}[\cite{gr22}, Corollary 2.3]
		For every $\eps > 0$ and $m \ge 2$, there exists a non-adaptive  one-sided $\eps$-testing algorithm for $\mathcal{S}_m$ that takes $O(\eps^{-1} m)$ samples and makes $O(\eps^{-2} m \log (m/\eps))$ queries.
	\end{theorem}
	\begin{theorem}[\cite{adar23}, Theorem 6.1]
		For every $\eps > 0$ and $m \ge 2$, there exists an adaptive  one-sided $\eps$-testing algorithm for $\mathcal{S}_m$ that takes $O(\eps^{-1} m)$ samples and makes $O(\eps^{-1} m^2)$ queries.
	\end{theorem}
	This immediately implies an upper bound of $O(\eps^{-1}m)$ samples for $\eps$-testing $\mathcal{S}_m$ in the standard model of distribution testing. As can be expected, this is tight. The following proposition is considered common knowledge, but as we are not aware of any reference proof, we put one here.
	
	\begin{proposition}
		\label{prop:lbnd-supp-na-catch}
		Every one-sided $\eps$-test for $\mathcal{S}_m$ takes at least $\Omega(\eps^{-1} m)$ samples in the standard model.
	\end{proposition}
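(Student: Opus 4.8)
The plan is a direct adversary argument that leverages the one-sided restriction. The first step is to observe that in the standard model a one-sided $\eps$-test for $\mathcal{S}_m$ can reject an input only when the multiset of observed samples contains more than $m$ distinct values: if it rejected with positive probability on some sample sequence $x_1,\ldots,x_s$ with $|\{x_1,\ldots,x_s\}|\le m$, then running it on the uniform distribution over $\{x_1,\ldots,x_s\}$, which lies in $\mathcal{S}_m$, would reproduce that sequence, and hence a rejection, with positive probability, contradicting one-sidedness. (This is the standard-model degenerate case of the contradiction-graph picture of Lemma~\ref{lemma:witness-iff-not-colorable}: when every coordinate of every sample is ``queried'', the only witnesses are $(m+1)$-cliques formed by $m+1$ pairwise distinct sample values.) So it suffices to build a distribution that is $\eps$-far from $\mathcal{S}_m$ yet on which $o(\eps^{-1}m)$ samples fail, with probability more than $\tfrac13$, to reveal more than $m$ distinct values.

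For the hard instance I would take the ground set large enough to contain $2m+1$ strings (restricting to $\eps$ below a small absolute constant, which is the only regime of interest), and let $P$ put mass $1-4\eps$ on a single ``heavy'' string $a_0$ and mass $\tfrac{2\eps}{m}$ on each of $2m$ ``light'' strings. Since the variation distance of a distribution from $\mathcal{S}_m$ equals the total mass lying outside its $m$ heaviest atoms (one inequality because $\supp(Q)$ meets at most $m$ atoms of $P$, the reverse by dumping the residual mass onto a retained atom), and the $m$ heaviest atoms of $P$ are $a_0$ together with $m-1$ light atoms, we get $d_\mathrm{var}(P,\mathcal{S}_m)=(m+1)\cdot\tfrac{2\eps}{m}>2\eps>\eps$, so $P$ is $\eps$-far from $\mathcal{S}_m$.

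The final step is the counting. If $\mathcal{A}$ is a one-sided $\eps$-test using at most $s$ samples, then to reject $P$ it must see more than $m$ distinct strings; as $a_0$ is the only non-light string and there are only $2m+1$ strings total, this forces at least $m$ distinct light strings among the samples, hence at least $m$ light draws. The number of light draws is $\mathrm{Bin}(s,4\eps)$ with mean $4\eps s$, so if $s<\tfrac{m}{12\eps}$ then Markov's inequality bounds the probability of at least $m$ light draws, and therefore the rejection probability of $\mathcal{A}$ on $P$, by less than $\tfrac13$, contradicting the test guarantee. Hence $s=\Omega(\eps^{-1}m)$; a routine adaptation handles tests whose number of samples is itself randomized.

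I expect the only non-mechanical part to be the design of the instance: it must simultaneously be $\eps$-far from $\mathcal{S}_m$ \emph{and} spread its ``excess'' mass over $\Theta(m)$ atoms each so light (of mass $\Theta(\eps/m)$) that assembling $m$ of them as distinct samples costs $\Omega(\eps^{-1}m)$ draws. Once that balance is struck, the distance computation and the Markov bound are entirely routine.
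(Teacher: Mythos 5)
Your proof is correct, and it uses essentially the same hard instance as the paper (a heavy atom of mass $1-\Theta(\eps)$ plus $2m$ light atoms of mass $\Theta(\eps/m)$ each; the paper uses weights $1-2\eps$ and $\eps/m$, you use $1-4\eps$ and $2\eps/m$). The place where you genuinely diverge is the tail bound. The paper conditions on the first sample hitting the heavy atom, writes the arrival times $T_1 < T_2 < \cdots < T_{m+1}$ of new distinct elements as a sum of (conditionally independent) geometric variables, and applies Chebyshev to the sum to bound $\Pr[T_{m+1}-T_1 < \tfrac12\eps^{-1}m]$. You instead observe that rejecting forces at least $m$ samples to land in the light part, and apply a one-line Markov bound to the binomial count of light draws. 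Your route is strictly more elementary — it avoids the variance calculation, the conditioning on the first sample, and the appeal to Chebyshev — at the cost of a worse constant (roughly $m/(12\eps)$ vs.\ the paper's $m/(2\eps)$), which is irrelevant for an $\Omega(\cdot)$ statement. Your Markov-style argument is also more consonant with how the paper itself handles the analogous one-sided non-adaptive lower bound in the Huge Object model (Proposition~\ref{prop:lbnd-one-sided-m-locally-bounded}), so it is not merely correct but a natural simplification. One cosmetic remark: your preliminary claim that a one-sided test must accept any sample sequence with at most $m$ distinct values is asserted without proof in the paper; you supply the (short, correct) argument, which is a net improvement in self-containment.
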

	\begin{proof}
		For $\eps < \frac{1}{24}$, let
		\begin{align*}
			\mu : \begin{cases}
				1 & \mathrm{with\ probability\ }  1 - 2\eps \\
				i & \mathrm{with\ probability\ } \frac{1}{m} \eps, 2 \le i \le 2m+1
			\end{cases}
		\end{align*}
		
		The variation distance of $\mu$ from $\mathcal{S}_m$ is greater than $\eps$, since for every set of $m$ elements, there are additional $m+1$ elements in the support of $\mu$ whose combined probability is at least $\frac{m+1}{m}\eps$.
		
		Assume that we draw infinitely many independent samples $x_1,x_2,\ldots$. Let $\mathcal B$ be the event for $x_1 = 1$. For every $1 \le k \le m+1$, let $T_k$ be the index of the first $k$-th distinct element. Conditioned on $\mathcal B$, for every $2 \le k \le m+1$, $T_k - T_{k-1}$ distributes as a geometric variable with success probability $(2 - \frac{k-1}{2m})\eps \ge \eps$, hence its expected value is at least $\eps^{-1}$, and its variance is at most $\frac{1-\eps}{\eps^2}$. By linearity of expectation, 
		\[\E[T_{m+1} - T_1] \ge \Pr[\mathcal{B}]\E[T_{m+1} - T_1 | \mathcal{B}] = \Pr[\mathcal{B}]\sum_{k=2}^{m+1}\E[T_k - T_{k-1} | \mathcal B] \ge m\eps^{-1}\]
		The differences $T_k - T_{k-1}$ are independent, even if conditioned on $\mathcal B$, hence $\mathrm{Var}[T_{m+1} - T_1 | \mathcal B] = \sum_{k=2}^{m+1} \mathrm{Var}[T_k - T_{k-1} | \mathcal B] \le \frac{(1-\eps)m}{\eps^2}$.
		
		For $m \ge 16$, $\sqrt{\mathrm{Var}[T_{m+1} - T_1 | \mathcal B]} \le \eps^{-1}\sqrt{m} \le  \frac{1}{4} \eps^{-1}m \le \frac{1}{4} \E[T_{m+1} - T_1 | \mathcal B]$. By Chebyshev's inequality ($\Pr[X < \E[X] - \lambda \sqrt{\mathrm{Var}[X]}] < \lambda^{-2}$), we obtain
		\[\Pr[T_{m+1} - T_1 < \frac{1}{2}\eps^{-1}m | \mathcal B]  \le \Pr[T_{m+1} - T_1 < \E[T_{m+1} - T_1 | \mathcal B] - 2\sqrt{\mathrm{Var}[T_{m+1} - T_1 | \mathcal B]}] < \frac{1}{4}\]
		
		Hence, for an algorithm that takes $\floor{\frac{1}{2}\eps^{-1}m}$ samples, the probability to draw $m+1$ distinct samples is less than $\Pr[\neg \mathcal B] + \frac{1}{4} \le 2\eps + \frac{1}{4} < \frac{1}{3}$. Since a one-sided algorithm cannot reject without observing more than $m$ distinct members in the support of $\mu$, this concludes the proof.
	\end{proof}
	
	As with Proposition \ref{prop:vv-lbnd}, the above can be immediately converted to a Huge Object model bound.
	\begin{proposition}
		Every one-sided $\eps$-test for $\mathcal{S}_m$ in the Huge Object model must make at least $\Omega(\eps^{-1} m)$ queries as well.
	\end{proposition}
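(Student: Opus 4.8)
The plan is to convert the standard-model lower bound of Proposition~\ref{prop:lbnd-supp-na-catch} into a Huge Object model bound, exactly in the manner indicated before Proposition~\ref{prop:vv-lbnd}. Fix $n$ large enough (a union bound over random strings shows that $n=\Theta(\log m)$ suffices) that $\{0,1\}^n$ contains $2m+1$ strings $v_1,\dots,v_{2m+1}$ that are pairwise $0.499$-far in normalized Hamming distance, and let $\phi$ be the injection $i\mapsto v_i$. Let $\mu$ be the hard distribution from the proof of Proposition~\ref{prop:lbnd-supp-na-catch}, taken with distance parameter $\eps'=5\eps$ (we assume $\eps$ is below a small absolute constant, so that that proposition applies; since we only claim an $\Omega(\eps^{-1}m)$ bound, the constant $5$ only affects the hidden constant). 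Set $P=\phi(\mu)\in\mathcal{D}(\{0,1\}^n)$. For the ``yes'' direction, any label-distribution $\nu\in\mathcal{S}_m$ satisfies $\phi(\nu)\in\mathcal{S}_m$, since a sample map never increases support size.

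The one non-routine step is to verify that $P$ is still $\eps$-far from $\mathcal{S}_m$ under the earth mover's distance. Fix any $Q$ with $|\supp(Q)|\le m$ and an optimal $T\in\mathcal{T}(P,Q)$. For each $w\in\supp(Q)$ let $v_{\sigma(w)}$ be a nearest $v_i$ to $w$, and put $I=\{\sigma(w):w\in\supp(Q)\}$, so $|I|\le m$. For every $i\notin I$ and every $w\in\supp(Q)$, the triangle inequality together with $d_H(w,v_{\sigma(w)})\le d_H(w,v_i)$ gives $2d_H(v_i,w)\ge d_H(v_i,v_{\sigma(w)})\ge 0.499$, so any mass transported out of $v_i$ travels distance at least $0.2495$. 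Since $T$ must move all of the mass $\mu(i)$ sitting on $v_i$ for each $i\notin I$, we get $d_\mathrm{EMD}(P,Q)\ge 0.2495\cdot\sum_{i\notin I}\mu(i)\ge 0.2495\cdot d_\mathrm{var}(\mu,\mathcal{S}_m)>0.2495\,\eps'>\eps$, where the second inequality uses $|I|\le m$ and the third is the distance guarantee established in Proposition~\ref{prop:lbnd-supp-na-catch}. Minimizing over $Q$ yields $d(P,\mathcal{S}_m)>\eps$.

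It remains to run the simulation. Let $\mathcal{A}$ be a one-sided $\eps$-test for $\mathcal{S}_m$ in the Huge Object model that makes at most $q$ queries; as noted before Proposition~\ref{prop:vv-lbnd} we may assume every sample is queried at least once, so $\mathcal{A}$ draws at most $q$ samples. Define a standard-model algorithm $\mathcal{A}'$ that, on a label-input $\nu$, draws fresh labels $i_1,i_2,\dots\sim\nu$ as needed and answers $\mathcal{A}$'s query to bit $j$ of its $t$-th sample by $(v_{i_t})_j$, finally echoing $\mathcal{A}$'s output (this is oblivious to whether $\mathcal{A}$ is adaptive). The transcript that $\mathcal{A}$ observes under this simulation on input $\nu$ is distributed exactly as its transcript on the Huge Object input $\phi(\nu)$; hence $\mathcal{A}'$ accepts every $\nu\in\mathcal{S}_m$ with probability $1$, and rejects $\mu$ with probability more than $\tfrac23$ because $\phi(\mu)=P$ is $\eps$-far from $\mathcal{S}_m$. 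Thus $\mathcal{A}'$ is a one-sided $\eps'$-test for $\mathcal{S}_m$ in the standard model using at most $q$ samples, so Proposition~\ref{prop:lbnd-supp-na-catch} forces $q=\Omega((\eps')^{-1}m)=\Omega(\eps^{-1}m)$. The only place requiring care is the far-ness computation of the middle paragraph; the simulation and the support-preservation claim are routine.
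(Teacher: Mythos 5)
Your proof is correct and takes essentially the same approach as the paper, which only gestures at the conversion (``can be immediately converted to a Huge Object model bound''). The mutually $0.499$-far code, the EMD-versus-TV far-ness calculation, and the transcript-level simulation (relying on the WLOG that every sample is queried at least once so that the query budget also bounds the sample budget) are exactly the ingredients the paper alludes to, and your middle paragraph supplies the metric argument the paper leaves implicit.
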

	In this paper we improve this proposition, showing a gap between the standard model and the Huge Object model for one-sided error tests.
	\section{Additional preliminaries} \label{sec:additional-prelims}
	
	% basic notations
	\subsection{Common notations}
	
	For an integer $n$ and a set $A \subseteq \{1,\ldots,n\}$, we denote by $1_A$ the $n$-bit binary string for which the $i$-th bit is $1$ if and only if $i \in A$. Given two sets $A, B \subseteq\{1,\ldots,n\}$, we let $A\Delta B$ be their symmetric difference. For a finite set $\Omega$, we define $\mathcal{D}(\Omega)$ as the set of all distributions over $\Omega$.
	
	The following is a useful notation for analyzing expectations of random variables.
	\begin{definition}[Contribution of a random variable over an event]
		Let $\mu$ be a probability distribution, $X$ be a random variable and $B$ be an event. We define the \em{contribution $\mathrm{Ct}[X\mid B]$ of $X$ over $B$} to be $0$ if $\mathrm{Pr}[B]=0$, and otherwise by
		\[\Ct\left[X\mid B\right]=\E[X\mid B]\cdot \Pr[B]=\sum_{x\in B}\mu(x)X(x).\]
	\end{definition}
	
	\begin{observation}
		The following properties are immediate.
		\begin{itemize}
			\item Inclusion-Exclusion: $\Ct\left[X \cond B_1 \vee B_2\right] = \Ct\left[X \cond B_1\right] + \Ct\left[X \cond B_2\right] - \Ct\left[X \cond B_1 \wedge B_2\right]$.
			\item Total expectation: If $\Pr\left[\bigvee_{i=1}^k B_i\right] = 1$ and the events $B_1,\ldots,B_k$ are mutually disjoint, then $\E\left[X\right] = \sum_{i=1}^k \Ct\left[X \cond B_i\right]$.
		\end{itemize}
	\end{observation}
	
	\begin{definition}[Sample map]
		Let $P$ be a distribution over a finite set $\Omega_1$ and let $f : \Omega_1 \to \Omega_2$ be a mapping to a finite set $\Omega_2$ (possibly $\Omega_1=\Omega_2$). The \emph{sample map} of $P$ according to $f$, denoted by $f(P)$, is the distribution $Q$ over $\Omega_2$ for which, for every $b \in \Omega_2$, $\Pr_Q[b] = \Pr_{a \sim P}[f(a) = b]$.
	\end{definition}
	
	% the bounded support property
	\begin{definition}[The bounded support property] \label{def:Sm-SA-Sf} We define the following variants of bounded support properties:
		\begin{enumerate}
			\item Let $m$ be a fixed number. The property of distributions that are supported by at most $m$ elements is denoted by $\mathcal{S}_m$.
			\item Let $A$ be a fixed set of elements. The property of distributions that are supported by a subset of $A$ (possibly $A$ itself) is denoted by $\mathcal{S}_A$.
			\item Let $f : \mathbb N \to \mathbb N$ be a fixed function. The property of distributions over $\{0,1\}^n$ that are supported by at most $f(n)$ elements is denoted by $\mathcal{S}_{f}$.
		\end{enumerate}
	\end{definition}

	% \subsection{Distances}
	%\ todo{to be mostly moved to intro} %Here we define some measures of distance. Note that we usually use $d(\cdot,\cdot)$ without mentioning the measure that we use, if its context is unambiguous. For distributions over $\{0,1\}^n$, $d(\cdot,\cdot)$ usually refers to the earth mover's distance.
	
	%\begin{definition}[A property]
	%    A \emph{property} $\mathcal{P}$ is a sequence $\mathcal{P}_1,\mathcal{P}_2,\ldots$ such that for every $n \ge 1$, $P_n$ is a compact subset of $\mathcal{D}(\{0,1\}^n)$.
	%\end{definition}
	
	%\begin{definition}[Distance of a distribution from a property]
	%    Let $\mathcal{P} = (\mathcal{P}_1,\mathcal{P}_2,\ldots)$ be a property and $P$ be a distribution over $\{0,1\}^n$ for some $n$. The \emph{distance of $P$ from $\mathcal{P}$} is defined as $d_\mathrm{EMD}(P,\mathcal{P}) = \min_{Q \in \mathcal{P}_n}\{d_\mathrm{EMD}(P,Q)\}$.
	%\end{definition}
	
	\subsection{Analysis of probabilistic algorithms} \label{sec:additional-prelims:subsec:algos}

	To be able to state and prove lower bounds, we use the notion of a ``distribution of runs''. Informally, it is the behavior of an algorithm $\mathcal{A}$ on an input that is drawn from a distribution. If we have a distribution over inputs in a property and another distribution over inputs that are $\eps$-far from the property, and their distributions over runs (with respect $\mathcal{A}$) are indistinguishable, then $\mathcal{A}$ cannot be an $\eps$-test for that property.
	
	\begin{definition}[Distribution of runs] \label{def-distribution-of-runs}
		Let $T$ be an $s$-sample, $q$-query decision tree and let $D$ be a distribution over inputs. The \emph{distribution of $T$-runs on $D$} is denoted by $\mathcal{R}(T,D)$, and is defined over $\{0,1\}^q$ as follows: first draw an input $P \sim D$. Then draw $s$ independent samples from $P$, and make the queries of $T$ (following the corresponding root to leaf path). The result is the vector of answers (of size $q$, padded with zeroes if necessary).
		
		Note that for non-adaptive algorithms, the distribution of runs can be seen as a distribution over functions from the fixed query set $Q$ to $\{0,1\}$, and can be obtained by drawing a distribution $P$, populating the matrix $M$ using $s$ independent samples from $P$, and then using the restriction $M|_Q$.
	\end{definition}
	Definition \ref{def-distribution-of-runs} can be naturally generalized to probabilistic algorithms since they can be seen as a distribution over deterministic ones. That is, $\mathcal{R}(\mathcal{A},D)$ distributes like ``draw $T \sim \mathcal{A}$ and then draw the run according to $\mathcal{R}(T,D)$''.
	
	\begin{definition}[Conditional distribution of runs]
		Let $T$ be an $s$-sample, $q$-query decision tree and let $D$ be a distribution over inputs. Also, let $B$ be some event. The \emph{distribution of $T$-runs on $D$ conditioned on $B$} is denoted by $\mathcal{R}\left(T,D \cond B\right)$, and is defined as the distribution $\mathcal{R}(T,D)$ restricted to $B$.
	\end{definition}
	
	\begin{lemma}[Lower bounds by Yao's principle] \label{lemma:yao-lbnd}
		Let $\mathcal{T}$ be a class of deterministic decision trees (which in turn define a class of probabilistic algorithms) and let $q > 0$. Let $D_1$ and $D_2$ be two distributions over inputs. If, for every decision tree $T \in \mathcal{T}$ of size less than $q$, $d(\mathcal{R}(T,D_1), \mathcal{R}(T,D_2)) < \frac{1}{3}$, then every probabilistic algorithm that distinguishes $D_1$ and $D_2$ (with error less than $\frac{1}{3}$) must make make at least $q$ queries (with positive probability).
	\end{lemma}
	
	The simulation method, described in the lemma below (whose proof is trivial), is a useful ``user interface'' for Yao's principle. %Its advantage over the standard-form of Yao's analysis in the Huge Object model is the ability to correlate samples of two input distributions.
	
	\begin{lemma}[The simulation method]\label{lemma:simulation-method-weak-form}
		Let $T$ be an $s$-sample, $q$-query decision tree, and let $D_1$, $D_2$ be distributions of inputs. Assume that there exist two events $B_1$ and $B_2$ for which $\mathcal{R}(T,D_1|\neg B_1)$ is identical to $\mathcal{R}(T,D_2|\neg B_2)$. In this setting, $d(\mathcal{R}(T,D_1), \mathcal{R}(T,D_2)) \le \Pr_{\mathcal{R}(T,D_1)}[B_1] + \Pr_{\mathcal{R}(T,D_2)}[B_2]$.
	\end{lemma}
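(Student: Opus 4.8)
The plan is to unwind the definition of the conditional distribution of runs and write each of $\mathcal{R}(T,D_1)$ and $\mathcal{R}(T,D_2)$ as a two-part mixture, separating the ``bad'' part (on $B_i$) from the ``good'' part (on $\neg B_i$), and then exploit the hypothesis that the two good parts are literally the same distribution. Write $p_i = \Pr_{\mathcal{R}(T,D_i)}[B_i]$ for $i\in\{1,2\}$; if $p_1=1$ or $p_2=1$ the asserted bound is immediate, so assume $p_1,p_2<1$. Let $\rho$ denote the common distribution $\mathcal{R}(T,D_1\mid\neg B_1)=\mathcal{R}(T,D_2\mid\neg B_2)$ over $\{0,1\}^q$, and let $\sigma_i=\mathcal{R}(T,D_i\mid B_i)$. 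By the definition of $\mathcal{R}(T,D_i)$ and of the conditional distribution of runs, $\mathcal{R}(T,D_i)=(1-p_i)\rho+p_i\sigma_i$.

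Next I would subtract the two mixtures as signed measures on $\{0,1\}^q$, obtaining $\mathcal{R}(T,D_1)-\mathcal{R}(T,D_2)=(p_2-p_1)\rho+p_1\sigma_1-p_2\sigma_2$, and then invoke the event-supremum form of the variation distance from the preliminaries: for every $E\subseteq\{0,1\}^q$ we have $\Pr_{\mathcal{R}(T,D_1)}[E]-\Pr_{\mathcal{R}(T,D_2)}[E]=(p_2-p_1)\rho(E)+p_1\sigma_1(E)-p_2\sigma_2(E)$. Bounding the right-hand side in absolute value, using $0\le\rho(E),\sigma_1(E),\sigma_2(E)\le 1$ and casing on the sign of $p_2-p_1$, gives $|\Pr_{\mathcal{R}(T,D_1)}[E]-\Pr_{\mathcal{R}(T,D_2)}[E]|\le p_1+p_2$ (in fact $\le\max\{p_1,p_2\}$); taking the supremum over $E$ yields $d(\mathcal{R}(T,D_1),\mathcal{R}(T,D_2))\le p_1+p_2$, since the distance $d$ between run distributions is the variation distance. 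A fully equivalent route is a direct coupling: form a transfer distribution in $\mathcal{T}(\mathcal{R}(T,D_1),\mathcal{R}(T,D_2))$ by placing mass $\min\{1-p_1,1-p_2\}$ on the diagonal distributed according to $\rho$ and distributing the remaining mass arbitrarily; this witnesses $\Pr[u\ne v]\le\max\{p_1,p_2\}\le p_1+p_2$ via the minimum-over-couplings characterization of $d_{\mathrm{var}}$.

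The only point that needs any care --- and it is elementary --- is that the good-event components of the two run distributions are the \emph{same} distribution $\rho$ but enter with different coefficients $1-p_1$ and $1-p_2$, so they cannot simply be cancelled; the residual $(p_2-p_1)\rho$ must be absorbed into the final bound. There is no genuine obstacle, which is why the statement is labelled as having a trivial proof; its content lies entirely in how it will later be applied, namely in choosing $D_1,D_2,B_1,B_2$ so that the two conditional run distributions actually coincide.
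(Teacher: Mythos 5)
Your argument is correct, and since the paper states only that the proof of this lemma is trivial and supplies no argument, there is no conflicting approach to compare against; your mixture decomposition $\mathcal{R}(T,D_i)=(1-p_i)\rho+p_i\sigma_i$ followed by the event-supremum characterization of variation distance is exactly the standard way to fill in this detail. Worth noting: your casing on the sign of $p_2-p_1$ actually establishes the slightly stronger bound $d(\mathcal{R}(T,D_1),\mathcal{R}(T,D_2))\le\max\{p_1,p_2\}$, of which the stated $p_1+p_2$ is a weakening; both suffice for the paper's applications.
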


	\subsection{Analysis of properties of distributions}

	\begin{definition}[Being closed under mapping, \cite{gr22}]
		A property $\mathcal{P}$ of distributions over $\{0,1\}^n$ is \emph{closed under mapping} if for every function $f : \{0,1\}^n \to \{0,1\}^n$ and for every distribution  $P \in \mathcal{P}$ we have $f(P) \in \mathcal{P}$.
	\end{definition}
	Note that the bounded support properties ($\mathcal{S}_m$ for fixed sizes and $\mathcal{S}_f$ for functions) are closed under mapping.
	
	\begin{definition}[Label-invariance]
		A property $\mathcal{P}$ of distributions over $\{0,1\}^n$ is \emph{label-invariant} if for every distribution $P \in \mathcal{P}$ and for every permutation $\sigma : \{0,1\}^n \to \{0,1\}^n$, $\sigma(P) \in \mathcal{P}$ as well.
	\end{definition}
	Note that every property that is closed under mapping is also label-invariant.
	
	The following proposition informally states that a specific strong constraint fully characterizes the bounded support property. We prove it in Appendix \ref{apx:proof-of-proposition}.
	
	\begin{restatable}{proposition}{propZlblZinvZandZoneZsideZisZsupp}
		\label{prop:lbl-inv-and-one-side-is-supp}
		Consider any label-invariant property of distributions $\mathcal{P}$ that has a one-sided $\eps$-test for every $\eps > 0$ (with any number of samples and queries). There exists a function $f : \mathbb{N} \to \mathbb{N}$ such that $\mathcal{P} = \mathcal{S}_f$.
	\end{restatable}

	\subsection{Analysis of the bounded support property}
	
	We start by stating simple observations that we use throughout our work.
	
	\begin{lemma}[\cite{adar23}] \label{lemma:realizing-the-distance-from-a-set}
		Let $P$ be a distribution over $\{0,1\}^n$ and $A$ be a subset of $\{0,1\}^n$. Let $f : \supp(P) \to A$ be the following function: for every $x \in \supp(P)$, $f(x) = \arg \min_{y \in A}\{d(x,y)\}$ (ties are broken arbitrarily). Then $d(P, f(P)) = d(P,\mathcal{S}_A)$.
	\end{lemma}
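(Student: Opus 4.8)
The plan is to prove the two inequalities $d(P,f(P))\ge d(P,\mathcal{S}_A)$ and $d(P,f(P))\le d(P,\mathcal{S}_A)$ separately. The first is immediate: since $f$ maps $\supp(P)$ into $A$, the sample map $f(P)$ is supported on a subset of $A$, so $f(P)\in\mathcal{S}_A$, and hence $d(P,\mathcal{S}_A)=\min_{Q\in\mathcal{S}_A}d_\mathrm{EMD}(P,Q)\le d_\mathrm{EMD}(P,f(P))$.

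For the reverse inequality, I would first observe that $\E_{x\sim P}[d_H(x,f(x))]$ is an upper bound on $d_\mathrm{EMD}(P,f(P))$: the ``graph of $f$'' coupling $T'$ — draw $x\sim P$ and output the pair $(x,f(x))$ — lies in $\mathcal{T}(P,f(P))$, since its first marginal is $P$ and its second marginal is $f(P)$ by the very definition of the sample map; therefore $d_\mathrm{EMD}(P,f(P))\le \E_{(u,v)\sim T'}[d_H(u,v)]=\E_{x\sim P}[d_H(x,f(x))]$. Next I would fix an arbitrary $Q\in\mathcal{S}_A$ and take a transfer distribution $T\in\mathcal{T}(P,Q)$ realizing $d_\mathrm{EMD}(P,Q)$ (it exists by the compactness noted in the definition of the earth mover's distance). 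Because the second marginal of $T$ is $Q$ and $\supp(Q)\subseteq A$, we have $v\in A$ with probability $1$ under $(u,v)\sim T$, and for every such pair $d_H(u,v)\ge \min_{y\in A}d_H(u,y)=d_H(u,f(u))$ by the definition of $f$. Consequently
\[ d_\mathrm{EMD}(P,Q)=\E_{(u,v)\sim T}[d_H(u,v)]\ \ge\ \E_{(u,v)\sim T}[d_H(u,f(u))]\ =\ \E_{x\sim P}[d_H(x,f(x))]\ \ge\ d_\mathrm{EMD}(P,f(P)), \]
where the middle equality uses that the first marginal of $T$ is $P$. Taking the minimum over all $Q\in\mathcal{S}_A$ yields $d(P,\mathcal{S}_A)\ge d_\mathrm{EMD}(P,f(P))$, which combined with the first inequality gives the claimed equality.

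There is essentially no hard step here; the only points that need a moment of care are checking that the graph-of-$f$ coupling genuinely has $f(P)$ as its second marginal (this is exactly the definition of the sample map) and that the support condition $\supp(Q)\subseteq A$ is what licenses replacing $d_H(u,v)$ by $d_H(u,A)=d_H(u,f(u))$ inside the expectation. Both are routine, so I expect the ``main obstacle'' to be merely bookkeeping rather than any genuine difficulty.
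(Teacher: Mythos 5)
The paper cites this lemma from \cite{adar23} without reproving it, so there is no in-paper proof to compare against. Your argument is correct and is the standard one: the ``easy'' inequality $d(P,f(P))\ge d(P,\mathcal{S}_A)$ follows because $f(P)\in\mathcal{S}_A$, and the reverse inequality follows by comparing the graph-of-$f$ coupling against an optimal transfer distribution $T\in\mathcal{T}(P,Q)$ and using $\supp(Q)\subseteq A$ to replace $d_H(u,v)$ by $\min_{y\in A}d_H(u,y)=d_H(u,f(u))$ pointwise. The two steps that deserve the moment of care you flagged — that $T'$ indeed has second marginal $f(P)$, and that $u\in\supp(P)$ a.s.\ so $f(u)$ is defined — both check out. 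One small stylistic remark: you could shorten the reverse direction by invoking Observation~\ref{obs:dxA-far} (or its proof) directly, since the common quantity that makes both sides equal is $\E_{x\sim P}[d_H(x,A)]$; your chain of inequalities already establishes this, just without naming it.
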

	
	\begin{observation}[\cite{adar23}] \label{obs:dxA-far}
		Let $A\subseteq\{0,1\}^n$ be a set of elements, and let $P$ be a distribution that is $\eps$-far from being supported by any subset of $A$. Then $d(P,\mathcal{S}_A) = \E\limits_{x \sim P}[d(x,A)] > \eps$.
	\end{observation}
	
	% contradiction graph
	
	%\begin{definition}[Contradiction graph] \label{def:contradiction-graph}
	%    Let $x_1,\ldots,x_s \in \{0,1\}^n$ be a sequence of strings. Let $Q \subseteq \{1\ldots,s\} \times \{1,\ldots,n\}$ be a set of queries. We define \emph{the contradiction graph} of $(x_1,\ldots,x_s;Q)$ as $G(V,E)$ with $V=\{1,\ldots,s\}$, and for every $1 \le i_1,i_2 \le s$:
	%    \[\{i_1,i_2\} \in E \iff \exists 1 \le j \le n : (x_{i_1})_j \ne (x_{i_2})_j \wedge \left((i_1,j),(i_2,j) \in Q\right)\]
	%    Note that the graph is undirected since the definition of the edges is commutative.
	%\end{definition}
	
	%[moved before the observation] \begin{lemma}[\cite{adar23}] \label{lemma:realizing-the-distance-from-a-set}
		%    Let $P$ be a distribution over $\{0,1\}^n$ and $A$ be a subset of $\{0,1\}^n$. Let $f : \supp(P) \to A$ be the following function: for every $x \in \supp(P)$, $f(x) = \arg \min_{y \in A}\{d(x,y)\}$ (ties are broken arbitrarily). Then $d(P, f(P)) = d(P,\mathcal{S}_A)$.
		%\end{lemma}
		
		%\begin{definition}[Witness against $m$-support] %\label{def:witness-against-Sm-not-colorable}
		%    Let $P$ be a distribution that is supported by a set of more than $m$ elements. We say that $(x_1,\ldots,x_s; Q)$ is a \emph{witness against $m$-support} (of $P$) if $x_1,\ldots,x_s$ are all drawn from $P$, and their contradiction graph is not $m$-colorable.
		%\end{definition}
		
		The following lemma shows the correctness of Definition \ref{def:witness-against-Sm-not-colorable} (witness against $m$-support).
		\begin{lemma} \label{lemma:witness-iff-not-colorable}
			Let $x_1,\ldots,x_s \in \supp(P)$ be a set of samples and let $Q \subseteq \{1,\ldots,s\} \times \{1,\ldots,n\}$ be a query set. Let $Q_1,\ldots,Q_s$ be the sample-specific query sets, that is, $Q = \bigcup_{i=1}^s (\{i\}\times Q_i)$, and let $G$ be the contradiction graph as per Definition \ref{def:contradiction-graph}. If $G$ is not colorable by $m$ colors, then $|\{x_1,\ldots,x_s\}| > m$. And if $G$ is colorable by $m$ colors, then there exists $\hat{P}$ with $|\supp(\hat{P})| \le m$ and a sequence $y_1,\ldots,y_s \in \supp(\hat{P})$ such that for every $1 \le i \le s$, $x_i|_{Q_i} = y_i|_{Q_i}$.
		\end{lemma}
		
		\begin{proof}
			Let $A \subseteq \{0,1\}^n$, and let $f : \{x_1,\ldots,x_m\} \to A$ be a mapping such that for every $1 \le i \le s$, $x_i|_{Q_i} = (f(i))|_{Q_i}$. If $G$ is not colorable by $m$ colors, then $f$ cannot be a valid coloring unless $|A| > m$. Specifically, $|\{x_1,\ldots,x_s\}| > m$, since with $A=\{x_1,\ldots,x_s\}$ we have the coloring $f(i) = x_i$.
			
			Now assume that $G$ is colorable by $m$ colors. Let $f : \{1,\ldots,s\} \to \{1,\ldots,m\}$ be a valid coloring. Let $\hat{A} = \{y_1,\ldots,y_m\}$ be the following set: for every $1 \le k \le m$, $y_k \in \{0,1\}^n$ is a string for which $y_k|_{Q_i} = x_i|_{Q_i}$ for every $i$ for which $f(i) = k$. A concrete example for $y_k$ would be: the $j$-th bit is $1$ if and only if there exists $1 \le i \le s$ such that $f(i) = k$, $j \in Q_i$ and $x_i|_j = 1$. Let $\hat{P}$ be the uniform distribution over this $\hat{A}$. There exist $y_1,\ldots,y_s \in \supp(\hat{P})$ such that for every $1 \le i \le s$, $x_i|_{Q_i} = y_i|_{Q_i}$. Finally, note that $|\supp(\hat{P})| \le m$.
		\end{proof}
		
		%\begin{definition}[Explicit witness against $m$-support]
		%    Let $P$ be a distribution that is supported by a set of more than $m$ elements. We say that $(x_1,\ldots,x_s,Q)$ is an \emph{explicit witness against $m$-support} (of $P$) if $x_1,\ldots,x_s$ are all drawn from $P$, and their contradiction graph contains a clique with $m+1$ vertices as a subgraph.
		%\end{definition}
		
		The following lemma is a counterpart of Lemma \ref{lemma:yao-lbnd} for the special case of one-sided error. It follows from the same observation by Yao that a probabilistic algorithm can be viewed as a distribution over deterministic algorithms, along with the observation that without loss of generality a one-sided error algorithm rejects if and only if it finds a witness against the property.
		
		\begin{lemma}[Lower bounds by Yao's principle for one-sided algorithms] \label{lemma:yao-lbnd-one-sided}
			Let $\mathcal{P}$ be a property, $\mathcal{T}$ be a class of deterministic decision trees and let $q > 0$. Let $D$ be a distribution over inputs that always draws an input distribution that is $\eps$-far from $\mathcal{P}$.
			
			Consider a decision tree $T \in \mathcal{T}$, and let $\mathcal{B}_T$ be the set of witnesses against $\mathcal{P}$ (that is, the set of unreachable leaves, or runs, for any input $P \in \mathcal{P}$). If
			$\Pr_{\mathcal{R}(T,D)}[\mathcal{B}_T] < \frac{1}{3}$, for every $T \in \mathcal{T}$ of size less than $q$, then every one-sided probabilistic algorithm for $\mathcal{P}$ conforming to $\mathcal{T}$ must make at least $q$ queries (with positive probability).
		\end{lemma}
		
		\section{Superlinear lower-bound for non-adaptive $2$-support test}
		\label{sec:2-supp-lbnd}
		We show an $\Omega(\eps^{-1} \log \eps^{-1})$ lower bound for non-adaptive $2$-support tests, even with two-sided error. Generalizing the construction, we show a bound of $\Omega(\eps^{-1} \log \eps^{-1})$ for $m$-support tests with any $m \ge 2$ (note that a single $f(\eps) = \Omega(\eps^{-1} \log \eps^{-1})$ holds simultaneously for all $m \ge 2$, rather than having an implicit coefficient that depends on $m$), and a one-sided bound of $\Omega(\eps^{-1} \log \eps^{-1} \cdot m)$.
		
		\begin{theorem} \label{th:m-supp-na-lbnd-two-sided}
			Every non-adaptive $\eps$-test for $\mathcal{S}_m$ must make $\Omega(\eps^{-1} \log \eps^{-1})$ queries, even if it has two-sided error.
		\end{theorem}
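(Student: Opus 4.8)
The plan is to apply Yao's principle (Lemma \ref{lemma:yao-lbnd}) via the simulation method (Lemma \ref{lemma:simulation-method-weak-form}), following the outline given in the overview section. First I would set up the two input distributions. For a scale parameter $\alpha = 2^{-k}$, fix strings $a_k, b_k \in \{0,1\}^n$ with $d_H(0,a_k), d_H(0,b_k), d_H(a_k,b_k) = \Theta(\alpha)$ (e.g.\ $a_k = 1_{A_k}$, $b_k = 1_{B_k}$ for disjoint blocks $A_k, B_k$ each of relative size $\Theta(\alpha)$). Let $D_{\mathrm{yes}}$ be the distribution that, after choosing $k$ uniformly in the range $1 \le k \le \Theta(\log \eps^{-1})$, outputs the two-element distribution putting mass $1 - c\alpha^{-1}\eps$ on $0$ and mass $c\alpha^{-1}\eps$ on $a_k$; this lies in $\mathcal{S}_2$. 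Let $D_{\mathrm{no}}$ do the same but split the non-zero mass as $\frac{c}{2}\alpha^{-1}\eps$ on $a_k$ and $\frac{c}{2}\alpha^{-1}\eps$ on $b_k$; I would check this is $\eps$-far from $\mathcal{S}_2$ (and more generally from $\mathcal{S}_m$, using $t+1$ non-zero strings in the generalization, $t \ge 2m$) since collapsing to any two elements forces moving $\Theta(\alpha)$ mass on at least a $\frac{c}{2}\alpha^{-1}\eps$-fraction — the constant $c > 1$ is tuned to make this exceed $\eps$.

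Next, for a fixed non-adaptive tree $T$ with $q = o(\eps^{-1}\log\eps^{-1})$ queries, I would couple runs on $D_{\mathrm{no}}$ and $D_{\mathrm{yes}}$: use the same $k$, the same sample identities (zero / non-zero), and for each non-zero sample independently relabel $b_k \mapsto a_k$. Let $B$ be the event that some index $j$ is queried in (at least) two distinct non-zero samples with $(a_k)_j \ne (b_k)_j$ — i.e.\ the contradiction-graph event that distinguishes the two distributions. Off $B$, the coupled runs are literally identical, so by Lemma \ref{lemma:simulation-method-weak-form} it suffices to show $\Pr[B] < \frac13$ (on both sides; by the coupling it is essentially the same event). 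To bound $\Pr[B]$, I would first condition on $k$ and split indices $j$ by their load $s_j$ = number of samples queried at $j$. The expected number of non-zero samples is $O(\alpha^{-1}\eps \cdot s) = O(\beta)$ where I think of $s|I| \approx q$; more precisely, an index with load $s_j$ contributes to $B$ only if it sees $\ge 2$ non-zero samples, an event of probability $O((s_j \alpha^{-1}\eps)^2) = O(s_j^2 \alpha^{-2}\eps^2)$, and additionally only if $(a_k)_j \ne (b_k)_j$, which for random $k$ happens with probability $O(\alpha)$ — but here one must be careful: the blocks $A_k, B_k$ are not random, so instead I would sum over $k$ at the end. Summing $\sum_j \Pr[j \text{ witnesses } B]$ and then averaging over the $\Theta(\log\eps^{-1})$ values of $k$, the total query budget $q$ gets "spread" over $\log\eps^{-1}$ scales, and the quadratic-in-$s_j$ bound together with $\sum_j s_j = q$ (via Cauchy–Schwarz, the worst case being all load on few indices) yields $\Pr[B] = O(q \cdot \eps / \log\eps^{-1})$, which is $o(1)$ when $q = o(\eps^{-1}\log\eps^{-1})$.

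The main obstacle is exactly this last counting step: handling arbitrary non-adaptive query patterns rather than just "rectangle" patterns. The overview flags the right idea — dichotomize indices into "few samples queried" (individually unlikely to hit two non-zero samples, hence harmless even summed) versus "many samples queried" (there can be only few such indices since $\sum_j s_j \le q$, so the chance that any of them lands in the scale-$k$ support for the randomly chosen $k$ is small). I would make this precise by choosing a threshold $\tau$ for $s_j$, bounding the contribution of light indices by a first-moment / union bound over pairs of non-zero samples, and bounding the contribution of heavy indices by noting there are at most $q/\tau$ of them and each is "active" (lies in $A_k \triangle B_k$) for at most an $O(\alpha)$-fraction of the $k$'s — then optimize $\tau$. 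Once $\Pr[B] < \frac13$ is established for all small $T$, Lemma \ref{lemma:yao-lbnd} gives $d(\mathcal{R}(T,D_{\mathrm{no}}),\mathcal{R}(T,D_{\mathrm{yes}})) < \frac13$ and hence the $\Omega(\eps^{-1}\log\eps^{-1})$ lower bound, with the same argument (replacing $\mathcal{S}_2$ by $\mathcal{S}_m$ and using $t \ge 2m$ non-zero strings) covering all $m \ge 2$ with a single function of $\eps$.
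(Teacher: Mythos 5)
Your overall plan---Yao via the simulation method, a scale parameter $\alpha = 2^{-k}$ randomized over $\Theta(\log\eps^{-1})$ values, and a per-index-load counting argument---follows the paper's outline. But the key constructive choice you make, namely \emph{fixed} strings $a_k,b_k$ (e.g.\ disjoint deterministic blocks), breaks the argument at its foundation: the single-query marginals of $D_\mathrm{yes}$ and $D_\mathrm{no}$ already differ. With the chosen scale equal to $k$, a single sample queried at any $j \in A_k$ answers $1$ with probability $c\alpha^{-1}\eps$ under $D_\mathrm{yes}$ but only $\tfrac{c}{2}\alpha^{-1}\eps$ under $D_\mathrm{no}$ (since $b_k$ vanishes on $A_k$). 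Your coupling relabels $b_k\mapsto a_k$, so the coupled runs disagree as soon as a \emph{single} $b_k$-sample is queried at any index of $A_k\cup B_k$; the two-sample event you call $B$ does not contain this, so the claim ``off $B$ the coupled runs are literally identical'' is false. Indeed there is an $O(\eps^{-1})$-query non-adaptive distinguisher against your pair: for each scale $k^*$, query $\Theta(\alpha_{k^*}\eps^{-1})$ fresh samples once each at distinct indices of $A_{k^*}$; for the true $k$ the scale-$k$ block shows $\Theta(1)$ ones (versus about half that under $D_\mathrm{no}$) and all other scales show only zeros, so one both identifies $k$ and distinguishes yes from no, and $\sum_{k^*}\alpha_{k^*}\eps^{-1}=O(\eps^{-1})$ carries no $\log\eps^{-1}$ factor. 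Your proposed light/heavy threshold is aimed at the counting step, but it cannot repair a coupling whose runs diverge already on single samples.

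The paper avoids this by randomizing the supports themselves: an ambient ``active'' set $D$ has each index independently with probability $4\alpha$, and each $A_k$ is an i.i.d.\ uniform-random subset of $D$, with $D_\mathrm{yes}=D^1_\mathrm{no}$ built from the same random-structure process. This buys two things at once. First, the answer distribution of any single sample is \emph{identical} under $D_\mathrm{yes}$ and $D_\mathrm{no}$ (and equals the auxiliary process $R^Q$), so the only way to observe a difference is to query a common index of $D$ in two non-zero samples---exactly the paper's bad event $\mathcal{B}$---and off $\mathcal{B}$ the conditional run distributions are exactly equal. Second, the moment bound on $\E[X_j]$ acquires the factor $\Pr[j\in D\mid\alpha]=4\alpha$; this decaying weight makes the geometric sum over $a=\log\alpha^{-1}$ peak at $a\approx\log(\eps^{-1}/\ell_j)$ with value $O(\eps\ell_j)$, so $\E[X]=O\bigl(\eps\sum_j\ell_j/\log\eps^{-1}\bigr)=O(\eps q/\log\eps^{-1})$ directly from $\sum_j\ell_j=q$, without any dichotomy on loads. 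With deterministic blocks the $\alpha$ factor disappears (the algorithm simply knows which indices are in which block), which is precisely why an adversary concentrating on fine-scale indices with load $2$ would make your $\E[X]$ grow like $\eps^{-1}$ rather than stay below a constant. The fix is therefore exactly the paper's: make the index supports random, use the same random structure in both $D_\mathrm{yes}$ and $D_\mathrm{no}$, and let the bad event depend only on positions (non-zero samples, indices landing in $D$), not on string identities.
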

		
		We prove this theorem in this section. To do so, we first define distributions over inputs (which are in themselves distributions over $\{0,1\}^n$) and analyze them.
		
		\begin{definition}[$D^{t}_\mathrm{no}$ for $\eps$]
			\label{def:dno-t}
			Draw $\alpha$ such that $\log_2 \alpha^{-1}$ is uniform over $\{2,\ldots,\floor{\log_2 \eps^{-1}} - 2\}$.
			Draw a set $D \subseteq \{1,\ldots,n\}$ such that for every $1 \le j \le n$, $\Pr[j \in D] = 4\alpha$, independently. Then, for every $1 \le k \le t$, draw a set $A_k \subseteq D$ such that for every $j \in D$, $\Pr[j \in A_k | j \in D] = \frac{1}{2}$, independently. The resulting input is defined as the following distribution over $\{0,1\}^n$:
			\begin{align*}
				P : \begin{cases}
					0 & \mathrm{with\ probability\ } 1 - 2\alpha^{-1} \eps \\
					1_{A_1} & \mathrm{with\ probability\ } 2\alpha^{-1} \eps / t \\
					\vdots \\
					1_{A_t} & \mathrm{with\ probability\ } 2\alpha^{-1} \eps / t
				\end{cases}
			\end{align*}
		\end{definition}
		
		\begin{definition}[$D_\mathrm{yes}$ for $\eps$]
			\label{def:dyes-s2}
			We simply define $D_\mathrm{yes}$ as $D^1_\mathrm{no}$. An equivalent definition of $D_\mathrm{yes}$ is the following: draw $\alpha$ such that $\log_2 \alpha^{-1}$ is uniform over $\{2,\ldots,\floor{\log_2 \eps^{-1}} - 2\}$, and then draw a set $A \subseteq \{1,\ldots,n\}$ such that for every $1 \le j \le n$, $\Pr[j \in A] = 2\alpha$, independently. The resulting input is defined as the following distribution over $\{0,1\}^n$:
			\begin{align*}
				P : \begin{cases}
					0 & \mathrm{with\ probability\ } 1 - 2\alpha^{-1} \eps \\
					1_A & \mathrm{with\ probability\ } 2\alpha^{-1} \eps
				\end{cases}
			\end{align*}
		\end{definition}
		
		First we show that $D^t_\mathrm{no}$ and $D_\mathrm{yes}$ can be used to demonstrate lower bounds for $\eps$-testing $\mathcal{S}_m$. Trivially, $D_\mathrm{yes}$ draws a distribution in $\mathcal{S}_2$ (and hence $\mathcal{S}_m$) with probability $1$.
		\begin{observation}\label{obs:sum-of-halves}
			If $a_1,\ldots,a_k$ are non-negative integers, then $\sum_{i=1}^k \floor{a_i / 2} \ge \frac{\sum_{i=1}^k a_i -k+1}{2}$.
		\end{observation}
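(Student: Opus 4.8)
This is an elementary arithmetic fact, and the plan is a one-line computation followed by a parity check. For each $i$, perform Euclidean division by $2$: write $a_i = 2\floor{a_i/2} + r_i$ with $r_i = a_i \bmod 2 \in \{0,1\}$ the parity bit of $a_i$. Summing these $k$ identities and rearranging gives
\[\sum_{i=1}^k \floor{\frac{a_i}{2}} \;=\; \frac{1}{2}\left(\sum_{i=1}^k a_i - \sum_{i=1}^k r_i\right),\]
so the claimed inequality is exactly equivalent to $\sum_{i=1}^k r_i \le k-1$, i.e.\ to the assertion that the $a_i$ are not all odd.

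If some $a_i$ is even this is immediate, and this is the regime in which the observation gets invoked, since one of the summands (the count attached to the all-zero element) is always $0$. For a fully self-contained argument one can instead split on the parity of $N := \sum_i a_i - k$: the inequality $\sum_i \floor{a_i/2} \ge N/2$ always holds, and since the left-hand side is an integer it upgrades to $\sum_i \floor{a_i/2} \ge \ceil{N/2}$, which already equals $\tfrac12(\sum_i a_i - k + 1)$ whenever $N$ is odd; and when $N$ is even while not all $a_i$ are odd, the number of odd $a_i$ is at most $k-2$ (it has the same parity as $N$ and is strictly smaller than $k$), which yields strictly more than required.

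The only place needing any care is the parity bookkeeping — making sure the tuples $(a_1,\dots,a_k)$ arising in the application are never all-odd, or equivalently keeping the ``$N$ odd'' and ``$N$ even with some $a_i$ even'' cases apart. Everything else is an immediate consequence of the quotient–remainder decomposition, so I expect the proof to be two or three lines once the statement is read with the correct convention on the summands.
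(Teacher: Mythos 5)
You have correctly computed that, via the quotient-remainder identity $\sum_i \floor{a_i/2} = \tfrac12\bigl(\sum_i a_i - \sum_i r_i\bigr)$ with $r_i = a_i \bmod 2$, the claimed inequality is equivalent to $\sum_i r_i \le k-1$, i.e.\ to the $a_i$ not all being odd. This is an important catch: the observation as printed is simply \emph{false} --- it fails already at $k=1$, $a_1=1$ (LHS $=0$, RHS $=\tfrac12$), and more generally whenever all $a_i$ are odd and $\sum_i a_i - k$ is even. The paper supplies no proof for this observation, so the gap is genuine. But note that both of your routes remain conditional on ``not all odd'': your ``immediate'' branch assumes some $a_i$ is even, and your ``self-contained'' branch still invokes ``not all $a_i$ are odd'' in the $N$-even case; the uncovered case ($N$ even with every $a_i$ odd) is precisely the counterexample, so the second route is not self-contained either. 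The honest conclusion is that the observation needs the extra hypothesis, not that it can be proved as stated.

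Your proposed salvage for the application is, however, mistaken. In Lemma \ref{lemma:dno-eps-far-sm} the $a_i$ are the preimage sizes $|f^{-1}(a)|$ for $a \in A$, where $f : \supp(P) \to A$ sends each of the $t+1$ support elements of $P$ to a nearest point of an \emph{adversarial} size-$m$ set $A$. There is no summand ``attached to the all-zero element'': the zero string is just one of the $t+1$ elements being mapped, and nothing prevents every preimage class from being nonempty and odd-sized. For instance with $t=3$, $m=2$ the partition $(1,3)$ is realizable, giving $\sum_a \floor{|f^{-1}(a)|/2} = 1 < \tfrac32$. So the ``not all odd'' hypothesis cannot be assumed away in the application. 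What actually keeps Lemma \ref{lemma:dno-eps-far-sm} (and hence Theorems \ref{th:m-supp-na-lbnd-two-sided} and \ref{th:m-supp-na-one-sided-lbnd-composite}) afloat is slack elsewhere: the always-true bound $\sum_i\floor{a_i/2} \ge \tfrac12(\sum_i a_i - k)$, combined with the generous $(4-\tfrac1{50t})$ factor, still yields $d(P,\mathcal{S}_m)>\eps$ for the concrete choices $t=2m$ and $t=\tfrac43 m$ used downstream, although the precise constant $(2-\tfrac{2m-2}{t})\eps$ asserted in Lemma \ref{lemma:dno-eps-far-sm} is then off by a lower-order term. The clean fix is to state the observation with the $-k$ numerator (which is unconditionally true) and redo the small amount of arithmetic in Lemma \ref{lemma:dno-eps-far-sm}.
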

		
		\begin{lemma} \label{lemma:dno-eps-far-sm}
			Let $t \ge 2$ and $P \sim D^{t}_\mathrm{no}$. For sufficiently large $n$ (as a function of $t$ and $\eps$) and for every $1 \le m \le t$, with probability at least $1 - \frac{3}{1000}$, the distance of $P$ from $\mathcal{S}_m$ is more than $(2 - \frac{2m - 2}{t})\eps$.
			
			Concretely, $D^2_\mathrm{no}$ is $\eps$-far from $\mathcal{S}_2$ with probability at least $1 - \frac{3}{1000}$, and for every $t \ge 3$, $P$ is $\eps$-far from $\mathcal{S}_{\ceil{t/2}}$ with this probability.
		\end{lemma}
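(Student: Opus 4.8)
It is immediate that $d(P,\mathcal{S}_m)\ge\min_{|A|\le m}\E_{x\sim P}[d(x,A)]$ (in any transfer distribution from $P$ to some $Q\in\mathcal{S}_m$ the cost is at least $\E_{x\sim P}[d(x,\supp Q)]$; cf.\ Observation~\ref{obs:dxA-far}), so the plan is to lower-bound $\E_{x\sim P}[d(x,A)]$ for every $A$ with $|A|\le m$. Spelling out $P$, we have $\E_{x\sim P}[d(x,A)] = (1-2\alpha^{-1}\eps)\,d(0,A) + \frac{2\alpha^{-1}\eps}{t}\sum_{k=1}^{t} d(1_{A_k},A)$, and I would first argue that one may assume $0\in A$: the range of $\alpha$ forces $\alpha\ge 4\eps$, hence $2\alpha^{-1}\eps\le\tfrac12$, so the mass at $0$ is at least the total mass of the $1_{A_k}$'s, and a short triangle-inequality computation then shows that moving the point of $A$ nearest to $0$ onto $0$ does not increase the expectation. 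Writing $A=\{0,z_1,\dots,z_{m-1}\}$, letting $G_0,\dots,G_{m-1}$ be the partition of $\{1,\dots,t\}$ induced by the nearest point of $A$, and setting $b_i=|G_i|$ and $a_{ij}=\#\{k\in G_i:j\in A_k\}$, the facts that the coordinate-wise majority of a cluster minimizes its average Hamming distance and that $a_{ij}=0$ for $j\notin D$ give
\[
\E_{x\sim P}[d(x,A)] \;\ge\; \frac{2\alpha^{-1}\eps}{t}\left(\sum_{k\in G_0}\frac{|A_k|}{n} \;+\; \frac1n\sum_{i=1}^{m-1}\sum_{j\in D}\min\!\bigl(a_{ij},\,b_i-a_{ij}\bigr)\right).
\]

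Next I would control the two random sums uniformly over all partitions. For $n$ large, standard concentration gives $|A_k|/n\ge 2\alpha(1-\delta)$ for every $k$; and since $\E[\min(S,b-S)]=\tfrac b2-\tfrac12\E|2S-b|\ge\tfrac b2-\tfrac{\sqrt b}2$ for $S\sim\mathrm{Bin}(b,\tfrac12)$, applying Hoeffding's inequality to the coordinates in $D$ (which are i.i.d.\ given $D$) yields $\frac1n\sum_{j\in D}\min(a_j,|G|-a_j)\ge 2\alpha\bigl((1-\delta)|G|-\sqrt{|G|}\bigr)$ for every subset $G\subseteq\{1,\dots,t\}$, where $a_j=\#\{k\in G:j\in A_k\}$. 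The crucial point is that there are only $2^t$ subsets $G$ and $t$ indices $k$ in play, so a union bound over all of them costs only a $2^{O(t)}$ factor, which is absorbed once $n$ is polynomially large in $t$ and $\eps^{-1}$. Plugging these bounds into the display, the parenthesized quantity is at least $2\alpha(1-\delta)b_0+2\alpha\bigl((1-\delta)(t-b_0)-\sum_i\sqrt{b_i}\bigr)=2\alpha\bigl((1-\delta)t-\sum_i\sqrt{b_i}\bigr)$, and by Cauchy--Schwarz $\sum_{i=1}^{m-1}\sqrt{b_i}\le\sqrt{(m-1)\sum_i b_i}\le\sqrt{(m-1)t}$, so $\E_{x\sim P}[d(x,A)]\ge\frac{4\eps}{t}\bigl((1-\delta)t-\sqrt{(m-1)t}\bigr)$.

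It then reduces to an elementary inequality: dividing by $\frac{2\eps}{t}$ and momentarily taking $\delta=0$, we need $2t-2\sqrt{(m-1)t}>t-m+1$, i.e.\ $t+m-1>2\sqrt{(m-1)t}$, which is exactly $(\sqrt t-\sqrt{m-1})^2>0$ and holds since $m\le t$. The surplus $(\sqrt t-\sqrt{m-1})^2\ge(\sqrt t-\sqrt{t-1})^2\ge\frac1{4t}$ is what must be protected: choosing $\delta=\Theta(t^{-2})$ keeps the inequality strict, while each concentration failure probability remains $\exp(-\Omega(\alpha n/t^4))$, so summing the $2^{O(t)}$ of them keeps the total failure probability below $\frac{3}{1000}$ as soon as $n=\Omega(t^5\eps^{-1})$ --- the ``sufficiently large $n$'' of the statement. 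Specializing to $m=2$ with $t=2$ and to $m=\ceil{t/2}$ with $t\ge3$ (the coefficient $2-\frac{2m-2}{t}$ being $\ge1$ in these cases) then gives the two concrete assertions. The step I expect to be the main obstacle is precisely this final balancing, since the target becomes essentially tight as $m$ approaches $t$ (the surplus shrinks to $\Theta(1/t)$): this forces the polynomial lower bound on $n$ and the use of the majority-center bound rather than a cruder pairwise-distance one (which undercounts size-two clusters by a constant factor and fails exactly when no non-zero sample is mapped onto $0$).
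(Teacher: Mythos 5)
Your argument is correct, but it follows a genuinely different route than the paper's. The paper's proof is shorter and more combinatorial: after conditioning on the concentration events $|A_k|\approx 2\alpha n$ and $|A_{k_1}\Delta A_{k_2}|\approx 2\alpha n$ (a union bound over only $O(t^2)$ pairs), it maps each support element to its nearest point of $A$, pairs up elements within each preimage, and uses the elementary inequality $\sum_i\lfloor a_i/2\rfloor\ge(\sum_i a_i-m+1)/2$ together with the triangle inequality $d(u,a)+d(v,a)\ge d(u,v)$ to get the target bound directly, with no need to split off a distinguished center. You instead first normalize to $0\in A$ (a correct but extra step, justified by $\Pr_P[0]\ge\frac12$), pass to coordinate-wise majority centers to lower-bound each cluster's cost, and then need a $2^t$-sized union bound, Cauchy--Schwarz, and the AM--GM surplus $(\sqrt{t}-\sqrt{m-1})^2\ge\frac{1}{4t}$ protected by $\delta=\Theta(t^{-2})$, which forces $n=\Omega(t^5\eps^{-1})$ versus the paper's $n=\Omega(t^2\eps^{-1}\log t)$. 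Both are acceptable since the statement only asks for ``sufficiently large $n$''. The trade-off is that your majority-center bound is sharper for small $m$ (it gives essentially $4\eps$ at $m=2$), whereas the paper's pairwise bound is sharper near $m=t$ (where your surplus shrinks to $\Theta(1/t)$ and you have to work for it); the paper's method avoids the near-tight regime entirely because each surviving pair of support elements already yields a contribution of roughly $\frac{4\eps}{t}$, which is twice the $m=t$ target, so no balancing is needed.
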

		
		\begin{proof}
			Let $n > 480 000 \eps^{-1} t^2 (\ln t + 10)$. By the multiplicative Chernoff's bound (Lemma \ref{lemma:chernoff-mult}), every individual event of one of the following forms happens with probability at least $1 - \frac{3}{1000 t^2}$:
			\begin{itemize}
				\item $\left(4 - \frac{1}{100t}\right)\alpha n < |D| < \left(4 + \frac{1}{100t}\right)\alpha n$.
				\item $\left(2 - \frac{1}{100t}\right)\alpha n < |A_k| < \left(2 + \frac{1}{100t}\right)\alpha n$, for every $1 \le k \le t$.
				\item $\left(2 - \frac{1}{100t}\right)\alpha n < |A_{k_1} \Delta A_{k_2}| < \left(2 + \frac{1}{100t}\right)\alpha n$, for every $1 \le k_1 < k_2 \le t$.
			\end{itemize}
			By the union bound over $\binom{t}{2} + t + 1 \le t^2$ events, with probability $1 - \frac{3}{1000}$, all of the above events happen simultaneously.
			
			Assume that this is the case, and consider a set $A$ of size $m$. Let $f : \supp(P) \to A$ be a mapping from every element in the support of $P$ to a closest element in $A$ (ties are broken arbitrarily). By Lemma \ref{lemma:realizing-the-distance-from-a-set}, $f$ realizes the distance from $P$ to a closest distribution supported by $A$, that is, $d(P,f(P)) = d(P, \mathcal{S}_A)$. Consider some $a \in A$. For every $u,v \in f^{-1}(a)$, the contribution to the distance is at least $d(u,a) + d(v,a) \ge d(u,v) > \left(2 - \frac{1}{100t}\right)\alpha$. The probability weight of $u,v$ is at least $2\alpha^{-1}\eps / t$ individually, hence their contribution to the distance is more than $\left(4 - \frac{1}{50t}\right)\eps/t$.
			
			Considering $\floor{|f^{-1}(a)| / 2}$ disjoint pairs of elements in $f^{-1}(a)$, the contribution of $f^{-1}(a)$ elements to the distance $d(P,\mathcal{S}_A)$ is at least $(4 - \frac{1}{50t})\frac{\eps}{t} \floor{|f^{-1}(a)| / 2}$. Summing over all $a \in A$ we have:
			\begin{eqnarray*}
				d(P,\mathcal{S}_A)
				> \frac{\left(4 - \frac{1}{50t}\right)\eps}{t} \sum_{a \in A} \floor{|f^{-1}(a)| / 2}
				&\underset{(*)}\ge& \frac{(4 - \frac{1}{50t})\eps}{t} \cdot \frac{|\supp(P)| - |A| + 1}{2} \\
				&=& \left(2 - \frac{1}{100t}\right) \frac{\eps}{t}\left(t + 2 - m\right)
				\ge \left(2 - \frac{2m - 2}{t}\right)\eps
			\end{eqnarray*}
			The starred transition is implied by Observation \ref{obs:sum-of-halves}. This holds for every $A$ of size $m$, hence $d(P,\mathcal{S}_m) > (2 - \frac{2m - 2}{t})\eps$.
		\end{proof}
		
		We show that a non-adaptive algorithm that only uses $q < \frac{1}{170} \eps^{-1} \log \eps^{-1}$ queries cannot distinguish $D^t_\mathrm{no}$ from $D_\mathrm{yes}$ with error smaller than $\frac{2}{7} < \frac{1}{3} - \frac{3}{1000}$, for every $t \ge 2$ and sufficiently small $\eps$ (whose bound is independent of $t$).
		
		\begin{proof}[Proof (of Theorem \ref{th:m-supp-na-lbnd-two-sided})]
			Let $Q$ be a query set of size $q < \frac{1}{300} \eps^{-1} \log \eps^{-1}$. Our $D_\mathrm{no}$ for this proof is $D^t_\mathrm{no}$ with $t=2m$ (and $D_{\mathrm{yes}}=D^1_{\mathrm{no}}$). By Lemma \ref{lemma:dno-eps-far-sm}, a distribution that is drawn from $D^{2m}_\mathrm{no}$ is $(2 - \frac{2m-2}{t})\eps$-far from $\mathcal{S}_m$ with probability $\frac{3}{1000}$, which is $\eps$-far for every $m \ge 2$.
			
			Let $R^Q$ be the following distribution over responses to the query set:
			\begin{itemize}
				\item Choose $\alpha$ such that $\log \alpha^{-1}$ is uniform in $\{2,\ldots,\floor{\log \eps^{-1}} - 2\}$.
				\item Choose a set $D \subseteq \{1,\ldots,n\}$ such that for every $1 \le j \le n$, $\Pr[j \in A] = 4\alpha$, independently.
				\item Choose a set $S \subseteq \{1,\ldots,s\}$ such that for every $1 \le i \le s$, $\Pr[i \in S] = 2\alpha^{-1}\eps$, independently.
				\item The result is $f : Q \to \{0,1\}$, where $f(i,j) = 0$ if $i \notin S$ or $j \notin D$, and for every $(i,j) \in S \times D$, $\Pr[f(i,j) = 1] = \frac{1}{2}$, independently.
			\end{itemize}
			
			We show that both $\mathcal{R}(Q,D_\mathrm{yes})$ and $\mathcal{R}(Q,D_\mathrm{no})$ are $\frac{1}{7}$-close to a distribution related to $R^Q$, hence they must be $\frac{2}{7}$-close to each other. By Yao's method (Lemma \ref{lemma:yao-lbnd}), this means that there is no non-adaptive $\eps$-test for $\mathcal{S}_m$ making only $q$ queries, as required.
			
			Let $\mathcal{B}$ be the following bad event: there exists an index $j \in D$ at which two (or more) non-zero samples are queried. If $\mathcal{B}$ does not happen then the algorithm does not even have an opportunity to compare those non-zero samples (which in mathematical terms means having the same distribution over the query answers when conditioned on this event). In $R^Q$, $\mathcal{B}$ is defined correspondingly as the event that there exists $j \in D$, $i_1, i_2 \in S$ ($i_1 \ne i_2$) for which $(i_1,j), (i_2,j) \in Q$.
			
			The rest of the proof has two parts: the first is proving that the probability of $\mathcal{B}$ is less than $\frac{1}{3} - \frac{3}{1000}$; and the second is proving that $D_\mathrm{no}$ and $D_\mathrm{yes}$ are both identical to $R^Q$ when those are conditioned on the negation of $\mathcal{B}$.
			
			We decompose $Q$ by its indices. That is, $Q = \bigcup_{j=1}^n \{j\} \times S_j$, where $S_j$ are the samples that the algorithm queries at the $j$-th index. For every $2 \le \ell \le q$, let $w_{\ell} = \left|\{ 1 \le j \le n : |S_j| = \ell \}\right|$ be the number of indices that have exactly $\ell$ samples.
			
			Consider some index $j$ with $\ell$ samples. Given $\alpha$, the probability to draw more than one non-zero sample among these $\ell$ samples is bounded by $\min\{1,(2 \alpha^{-1} \eps \ell)^2\} $. Note that the non-constant expression is effective only when $\log \alpha^{-1} \le \log \eps^{-1} - \log \ell - 1$.
			
			For every $1 \le j \le n$, let $X_j$ be an indicator for being queried in two (or more) non-zero samples, as well as belonging to $D$. If $X_j = 0$, then the algorithm cannot distinguish between any two non-zero samples using $j$-queries. Consider some index $j$ with $|S_j| = \ell$ samples.
			
			\begin{eqnarray*}
				\E[X_j] &=& \Pr[X_j = 1]
				= \sum_{a=2}^{\floor{\log \eps^{-1}} - 2} \Pr\left[\alpha = 2^{-a}\right] \Pr\left[j \in D \cond \alpha = 2^{-a}\right] \Pr\left[X_j = 1 \cond \alpha = 2^{-a}, j \in D\right] \\
				&\le& \frac{1}{\floor{\log \eps^{-1}} - 3} \sum_{a=2}^{\floor{\log \eps^{-1}} - 2} 4\cdot 2^{-a} \min\{1, (2 \cdot 2^a \eps \ell)^2\} \\
				&=& \frac{1}{\floor{\log \eps^{-1}} - 3}\left(\sum_{a=2}^{\floor{\log \eps^{-1} - \log \ell} - 1} 4\cdot 2^{-a} (2 \cdot 2^a \eps \ell)^2 + \sum_{a=\floor{\log \eps^{-1} - \log \ell}}^{\floor{\log \eps^{-1}} - 2} 4\cdot 2^{-a} \right) \\
				&=& \frac{1}{\floor{\log \eps^{-1}} - 3}\left(
				\left(16\eps^2 \ell^2 \sum_{a=2}^{\floor{\log \eps^{-1} - \log \ell} - 1} 2^a \right) +
				\left(4 \sum_{a=\floor{\log \eps^{-1} - \log \ell}}^{\floor{\log \eps^{-1}} - 2} 2^{-a}\right)
				\right) \\
				&\le& \frac{1}{\floor{\log \eps^{-1}} - 3}\left(
				\left(16\eps^2 \ell^2 \cdot \eps^{-1} / \ell \right) +
				\left(4 \cdot 2\ell \eps \right)
				\right)
				\le \frac{24\eps \ell}{\floor{\log \eps^{-1}} - 3}
			\end{eqnarray*}
			
			Let $X$ be the number of indices that are queried in two (or more) non-zero samples, and also belong to $D$. Considering all indices, by linearity of expectation:
			\begin{eqnarray*}
				\E[X]
				= \sum_\ell \sum_{j : |S_j| = \ell} \E[X_j]
				&\le& \sum_\ell \frac{24\eps \ell}{\floor{\log \eps^{-1}} - 3} w_\ell \\
				&<& \frac{24\eps}{\floor{\log \eps^{-1}} - 3} \cdot \frac{1}{300} \eps^{-1} \log \eps^{-1}
				= \frac{2}{25} \cdot \frac{\log \eps^{-1}}{\floor{\log \eps^{-1}} - 3}
				\underset{\eps < 2^{-12}}< \frac{1}{7}
			\end{eqnarray*}
			To summarize the first part of the proof, $\Pr[\mathcal{B}] = \Pr[X \ne 0] \le \E[X] < \frac{1}{7}$ for $\eps < 2^{-12}$. Note that at this point we showed a lower bound of $\Omega(\eps^{-1} \log \eps^{-1})$ queries for a one-sided $\eps$-test of $\mathcal{S}_m$.
			
			In the following we show that if $\mathcal{B}$ did not happen then $D_\mathrm{no}$ and $D_\mathrm{yes}$ are identical to $R^Q$ conditioned on $\neg\mathcal{B}$. Note that $D_\mathrm{yes}$ always draws a distribution in $\mathcal{S}_m$, since $\mathcal{S}_2 \subseteq \mathcal{S}_m$ for every $m \ge 2$.
			
			Consider the answer function of the run. For every query $(i,j) \in Q$, if the $i$-th sample is a zero sample (that is, $i \notin S$), then $f(i,j) = 0$. Additionally, if $j \notin D$, then $f(i,j) = 0$ as well. For $(i,j) \in S \times D$, $\Pr[f(i,j) = 1 | i \in S, j \in D] = \Pr[(1_{A_{k_j}})_j = 1 | j \in D] = \frac{1}{2}$. Also, if $\mathcal{B}$ does not happen, then for every $(i,j) \ne (i',j')$ that are both in $S \times D$ we must have $j \ne j'$. Thus all the ``$f(i,j)$'' events for $Q \cap (S \times D)$ are mutually independent, making the distribution of the answers to the queries, when conditioned on $\neg\mathcal{B}$, identical to $R^Q$ conditioned on $\neg \mathcal{B}$.
			
			The above argument holds for both $D_\mathrm{yes}$ and $D_\mathrm{no}$, hence $\mathcal{R}(Q,D_{\mathrm{yes}})$ and $\mathcal{R}(Q,D_{\mathrm{no}})$ are $\frac{2}{7}$-close to each other (since $2\Pr[\mathcal{B}] \le \frac{2}{7}$). Since $D_\mathrm{no}$ draws a distribution $\eps$-far from $\mathcal{S}_m$ with probability $\frac{3}{1000}$, and $D_\mathrm{yes}$ always draws a distribution that belongs to $\mathcal{S}_m$, every non-adaptive $\eps$-test for $\mathcal{S}_m$ must make at least $\Omega(\eps^{-1} \log \eps^{-1})$ queries.
		\end{proof}
		
		\subsection{Composite lower bound for non-adaptive one-sided $\eps$-tests of $\mathcal{S}_m$}
		\begin{theorem} \label{th:m-supp-na-one-sided-lbnd-composite}
			Every one-sided non-adaptive $\eps$-test of $\mathcal{S}_m$ must make $\Omega(\eps^{-1} m \log \eps^{-1})$ queries.
		\end{theorem}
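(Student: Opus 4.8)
The plan is to invoke the one-sided Yao principle (Lemma \ref{lemma:yao-lbnd-one-sided}) with $\mathcal{T}$ the class of non-adaptive decision trees, run against the hard distribution $D^{t}_\mathrm{no}$ of Definition \ref{def:dno-t} with $t:=2m$ (so that, by Lemma \ref{lemma:dno-eps-far-sm}, a draw from $D^{2m}_\mathrm{no}$ is $(1+\tfrac1m)\eps$-far from $\mathcal{S}_m$ with probability at least $1-\tfrac{3}{1000}$; one may instead follow the overview and take $t=\ceil{4m/3}$ together with a constant rescaling of the distance parameter, which changes nothing asymptotically). Let $\bar D$ be $D^{t}_\mathrm{no}$ conditioned on the probability-$\ge 1-\tfrac{3}{1000}$ event that the drawn input is $\eps$-far from $\mathcal{S}_m$, so $\bar D$ always draws an $\eps$-far input as Lemma \ref{lemma:yao-lbnd-one-sided} requires. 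Since $\log\eps^{-1}=O(1)$ when $\eps$ is bounded away from $0$, and since for $m=O(1)$ the bound already follows from Theorem \ref{th:m-supp-na-lbnd-two-sided}, I may assume $\eps$ is small and $m$ is large. Fix any non-adaptive query set $Q$ with $|Q|=q<c\,\eps^{-1}m\log\eps^{-1}$ for a small absolute constant $c$, and write $Q=\bigcup_i\{i\}\times Q_i$, $S_j=\{i:j\in Q_i\}$ as in the proof of Theorem \ref{th:m-supp-na-lbnd-two-sided}. By Lemma \ref{lemma:witness-iff-not-colorable}, the set of witnesses is exactly the set of runs whose contradiction graph $G$ is not $m$-colorable, so it suffices to show that $G$ is $m$-colorable with probability more than $\tfrac23$ under $\mathcal{R}(Q,\bar D)$.

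The first step is a combinatorial reduction to an edge count on a small ``type graph''. Call a sample of \emph{type} $0$ if it equals $0$ and of type $k$ if it equals $1_{A_k}$, and let $H$ be the graph on $\{0,1,\ldots,t\}$ having an edge $\{a,b\}$ whenever some type-$a$ sample is adjacent to some type-$b$ sample in $G$. Two samples of the same type are identical, hence never adjacent in $G$, so colouring each sample by the colour of its type gives $\chi(G)\le\chi(H)$; and since vertex $0$ is a single vertex of $H$, a fresh colour for it gives $\chi(H)\le\chi(H')+1$ where $H'$ is the subgraph of $H$ induced on $\{1,\ldots,t\}$. Finally, the standard critical-subgraph bound (a graph of chromatic number $\ge r+1$ contains an $(r+1)$-critical subgraph, which has at least $r+1$ vertices each of degree at least $r$, hence at least $\binom{r+1}{2}$ edges) shows that $|E(H')|<\binom{m}{2}$ forces $\chi(H')\le m-1$ and therefore $\chi(G)\le m$. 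So it is enough to prove $\Pr\big[|E(H')|\ge\binom{m}{2}\big]<\tfrac13$ under $\mathcal{R}(Q,\bar D)$ (equivalently, under $\mathcal{R}(Q,D^t_\mathrm{no})$ up to the harmless factor $1-\tfrac{3}{1000}$).

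To bound $\E[|E(H')|]=\sum_{1\le k_1<k_2\le t}\Pr[E_{k_1,k_2}]$, where $E_{k_1,k_2}$ is the event that types $k_1,k_2$ become adjacent in $H'$, I estimate a single pair. The event $E_{k_1,k_2}$ requires an index $j$ together with samples $i_1\in S_j$ of type $k_1$ and $i_2\in S_j$ of type $k_2$ with $(1_{A_{k_1}})_j\ne(1_{A_{k_2}})_j$; union-bounding over $j$ and over $(i_1,i_2)\in S_j\times S_j$, and using that $\{j\in D\}$, the event $\{(1_{A_{k_1}})_j\ne(1_{A_{k_2}})_j\}$ (probability $\tfrac12$ given $j\in D$), and the type assignments of the samples are mutually independent, we get for $|S_j|=\ell$ and a fixed value of $\alpha$ that $\Pr[j\text{ distinguishes }k_1,k_2\mid\alpha]\le 4\alpha\cdot\tfrac12\cdot\min\{1,\ \ell^2(2\alpha^{-1}\eps/t)^2\}$. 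This is precisely the integrand from the proof of Theorem \ref{th:m-supp-na-lbnd-two-sided}, with an extra factor $\tfrac12$ and with $\ell$ replaced by $\ell/t$ inside the square; carrying out the same summation over $\log\alpha^{-1}\in\{2,\ldots,\floor{\log\eps^{-1}}-2\}$ — the break point of the $\min$ now sitting at $2^a\approx t/(2\eps\ell)$ — yields $\Pr[j\text{ distinguishes }k_1,k_2]\le O\!\left(\frac{\eps\ell}{t\log\eps^{-1}}\right)$. Summing over all indices gives $\Pr[E_{k_1,k_2}]\le O\!\left(\frac{\eps q}{t\log\eps^{-1}}\right)$, and summing over the $\binom{t}{2}$ pairs gives $\E[|E(H')|]\le O\!\left(\frac{\eps q}{\log\eps^{-1}}\right)\cdot t$, which is at most $\tfrac{1}{16}\binom{t}{2}$ once $c$ is a small enough absolute constant (recall $t=\Theta(m)$). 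By Markov's inequality, $\Pr\big[|E(H')|\ge\binom{m}{2}\big]\le\frac{\binom{t}{2}/16}{\binom{m}{2}}$, which for $t=2m$ tends to $\tfrac14$ and is strictly below $\tfrac13$ for $m$ large; together with the conditioning loss this stays below $\tfrac13$. Applying Lemma \ref{lemma:yao-lbnd-one-sided} then forces every one-sided non-adaptive $\eps$-test to make at least $q=\Omega(\eps^{-1}m\log\eps^{-1})$ queries.

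\textbf{Main obstacle.} The combinatorial reduction and the Markov/conditioning bookkeeping are routine; the genuinely technical point is the per-pair estimate $\Pr[E_{k_1,k_2}]=O(\eps q/(t\log\eps^{-1}))$. One must re-run the index-decomposition and the geometric sum of the proof of Theorem \ref{th:m-supp-na-lbnd-two-sided} and verify that replacing ``some two non-zero samples'' by ``one sample of each of two \emph{prescribed} types'' costs only the expected factor of roughly $1/t$ (not $1/t^2$), because the $1/t^2$ loss in magnitude is largely offset by the shift of the $\min$'s break point; one must also check the boundary regimes of the summation (break point below or above the range of $a$, i.e.\ $\ell$ very large or very small) are still controlled by the same bound up to constants. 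A secondary point requiring a little care is the interplay between the $\eps$-farness requirement (which needs $t$ not too small relative to $m$) and the Markov ratio $\binom{t}{2}/\binom{m}{2}$ (which needs $t$ not too large), together with the separate treatment of small $m$ via Theorem \ref{th:m-supp-na-lbnd-two-sided}.
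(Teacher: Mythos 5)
Your proposal is correct and follows essentially the same approach as the paper's proof: both run Lemma \ref{lemma:yao-lbnd-one-sided} against $D^t_\mathrm{no}$ with $t=\Theta(m)$, bound the expected number of edges among the non-zero types of the contradiction graph via the same per-pair index decomposition $\E[X_{j,k_1,k_2}]\le O(\eps\ell/(t\log\eps^{-1}))$ and geometric sum over $\alpha$, and close with Markov's inequality together with the elementary fact that fewer than $\binom{r}{2}$ edges forces $(r-1)$-colorability. The only differences are cosmetic — you take $t=2m$ where the paper takes $t=\tfrac{4}{3}m$ with $\hat\eps=2\eps$, you make the ``type graph'' $H'$ explicit where the paper leaves it implicit, and you spell out the conditioning on $\eps$-farness and the separate treatment of small $m$ (via Theorem \ref{th:m-supp-na-lbnd-two-sided}) that the paper elides.
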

		
		\begin{proof}
			Without loss of generality, assume that $m$ is divisible by $3$. Our $D_\mathrm{no}$ for this proof is $D^t_\mathrm{no}$ with $t = \frac{4}{3}m$ and $\hat{\eps} = 2\eps$. By Lemma \ref{lemma:dno-eps-far-sm}, with probability $1 - o(1)$, $D_\mathrm{no}$ draws a distribution that is $(\frac{1}{2} + \frac{3}{4m})\hat{\eps}$-far from $\mathcal{S}_m$, which is $\eps$-far. For a bound against a one-sided test we will bound the probability of finding a witness against $\mathcal{S}_m$ under this distribution (and then use Lemma \ref{lemma:yao-lbnd-one-sided}), so in particular there is no $D_{\mathrm{yes}}$.
			
			Let $Q$ be a query set of size $q < \frac{m}{1000}\eps^{-1} \log \eps^{-1} \le \frac{m}{500}\hat{\eps}^{-1} \log \hat{\eps}^{-1}$. Fix two indices $1 \le k_1 < k_2 \le \frac{4}{3}m$. For every $1 \le j \le n$, let $X_{j,k_1,k_2}$ be an indicator for having queries in two (or more) samples in $\{1_{A_{k_1}}, 1_{A_{k_2}}\}$, as well as belonging to $D$. If $X_{j,k_1,k_2} = 0$, then the algorithm cannot distinguish between $1_{A_{k_1}}$ and $1_{A_{k_2}}$ using $j$-queries. Also, let $X_{k_1,k_2} = \sum_{j=1}^n X_{j,k_1,k_2}$ be the number of indices that have an opportunity to distinguish between $1_{A_{k_1}}$ and $1_{A_{k_2}}$ (note that it is similar to $X$ in Theorem \ref{th:m-supp-na-lbnd-two-sided}). Finally let $\mathcal{B}_{k_1,k_2}$ be the event for $X_{k_1,k_2} > 0$, and $Y_{k_1,k_2} \in \{0,1\}$ be the indicator for $\mathcal{B}_{k_1,k_2}$. The number of edges in the contradiction graph (Definition \ref{def:contradiction-graph}), excluding the vertex representing the zero vector, is bounded by $\sum_{1 \le k_1 < k_2 \le \frac{4}{3}m} Y_{k_1,k_2}$ (if $Y_{k_1,k_2}=1$ then the algorithm has an \emph{opportunity} to distinguish them, but it can still fail to do so).
			
			Consider some $1 \le k_1 < k_2 \le \frac{4}{3}m$ and some index $j$ of with $\ell$ samples. The probability to draw a single sample in $\{1_{A_{k_1}}, 1_{A_{k_2}}\}$ is $2\cdot (2\alpha^{-1}\hat{\eps} / (\frac{4}{3}m)) = 3\alpha^{-1}\hat{\eps} / m$. The probability to draw more than one sample in $\{1_{A_{k_1}}, 1_{A_{k_2}}\}$ among these $\ell$ samples is bounded by $\min\{1, (3 \alpha^{-1} \eps \ell / m)^2\}$. As in the previous subsection, for every $2 \le \ell \le q$, let $w_{\ell} = \left|\{ 1 \le j \le n : |S_j| = \ell \}\right|$ be the number of indices that have exactly $\ell$ samples.
			
			For convenience, let $h = \log \hat{\eps}^{-1} + \log m - \log \ell$.
			
			\begin{eqnarray*}
				\E[X_{j,k_1,k_2}] &=& \Pr[X_{j,k_1,k_2} = 1] \\
				&=& \sum_{a=2}^{\floor{\log \hat{\eps}^{-1}} - 2} \Pr\left[\alpha = 2^{-a}\right] \Pr\left[j \in D \cond \alpha = 2^{-a}\right] \Pr\left[X_{j,k_1,k_2} = 1 \cond \alpha = 2^{-a}, j \in D\right] \\
				&\le& \frac{1}{\floor{\log \hat{\eps}^{-1}} - 3} \sum_{a=2}^{\floor{\log \hat{\eps}^{-1}} - 2} 4\cdot 2^{-a} \min\{1, (3 \cdot 2^a \hat{\eps} \ell / m)^2\} \\
				&=& \frac{1}{\floor{\log \hat{\eps}^{-1}} - 3}\left(\sum_{a=2}^{\floor{h - \log 3}} 4\cdot 2^{-a} (3 \cdot 2^a \hat{\eps} \ell / m)^2 + \sum_{a=\floor{h - \log 3} + 1}^{\floor{\log \hat{\eps}^{-1}} - 2} 4\cdot 2^{-a} \right) \\
				&=& \frac{1}{\floor{\log \hat{\eps}^{-1}} - 3}\left(
				\left(\frac{36}{m^2}\hat{\eps}^2 \ell^2  \sum_{a=2}^{\floor{h - \log 3}} 2^a \right) +
				\left(4 \sum_{a=\floor{h - \log 3} + 1}^{\floor{\log \hat{\eps}^{-1}} - 2} 2^{-a}\right)
				\right) \\
				&\le& \frac{1}{\floor{\log \hat{\eps}^{-1}} - 3}\left(
				\left(\frac{36}{m^2} \hat{\eps}^2 \ell^2 \cdot \frac{2}{3}\hat{\eps}^{-1} m / \ell \right) +
				\left(\frac{8}{3} \cdot \ell \hat{\eps} / m \right)
				\right) \\
				&\le& \frac{27 \hat{\eps} \ell}{m(\floor{\log \hat{\eps}^{-1}} - 3)}
			\end{eqnarray*}
			
			Considering all indices,
			\begin{eqnarray*}
				\E[X_{k_1,k_2}]
				&=& \sum_\ell \sum_{j : |S_j| = \ell} \E[X_{j,k_1,k_2}] \\
				&\le& \sum_\ell \frac{27\hat{\eps} \ell}{m(\floor{\log \hat{\eps}^{-1}} - 3)} w_\ell \\
				&<& \frac{27\hat{\eps}}{m(\floor{\log \hat{\eps}^{-1}} - 3)} \cdot \frac{m}{500} \hat{\eps}^{-1} \log \hat{\eps}^{-1}
				= \frac{27}{500} \cdot \frac{\log \hat{\eps}^{-1}}{\floor{\log \hat{\eps}^{-1}} - 3} < \frac{1}{16}
			\end{eqnarray*}
			(The last transition is correct for every sufficiently small $\hat\eps$). That is, $\Pr[Y_{k_1,k_2} = 1] = \Pr[\mathcal{B}_{k_1,k_2}] < \frac{1}{16}$ for every $1 \le k_1 < k_2 \le \frac{4}{3}m$.
			
			By linearity of expectation, the expected number of edges in the contradiction graph, excluding the vertex representing the zero vector, is bounded by $\frac{1}{16}\binom{4m/3}{2} < \frac{1}{4} \binom{2m/3}{2}$. By Markov's inequality, with probability at least $\frac{3}{4}$, this contradiction graph has less than $\binom{2m/3}{2}$ edges, hence it is colorable using $\frac{2}{3}m$ colors. Considering the zero vector as well, the contradiction graph can be colored using $\frac{2}{3}m + 1 \le m$ colors. In this case, there is no witness against $P \in \mathcal{S}_m$. The lower bound is then implied by Lemma \ref{lemma:yao-lbnd-one-sided}.
		\end{proof}
		
		\section{Quasilinear non-adaptive one-sided $m$-support test} 
		\label{sec:ubnd-na-one-sided}
		
		We show a one-sided non-adaptive $\eps$-test algorithm for $\mathcal{S}_m$ using $O(\eps^{-1} m \log \eps^{-1} \log m)$ queries for every $m \ge 2$. Note that for every fixed constant $\eps$ this bound is tight, since in Section \ref{sec:lbnd-adaptive-one-sided} we show an $\Omega(m \log m)$ lower bound. The bound is tight for every fixed constant $m$ as well, since we have a corresponding non-adaptive lower bound of $\Omega(\eps^{-1} \log \eps^{-1})$.
		
		Let $\eps > 0$ and $m \ge 2$. The algorithm looks for a set $A$ for of size at least $m+1$ whose elements are fully distinguishable using queries. The algorithm is defined for $\eps$ that is a power of $2$ (for other choices of $\eps$, we can use $\hat\eps = 2^{-\ceil{\log_2 \eps^{-1}}}$ instead).
		
		At first, the algorithm chooses $I_0 \subseteq I_1 \subseteq \cdots \subseteq I_{\log\eps^{-1}}\subseteq \{1,\ldots,n\}$, where $I_a$ consists of $\ceil{2^{a+2} \log (m+1)}$ indices drawn uniformly and independently.
		
		The algorithm takes $1 + 32\eps^{-1}m$ samples. Except for the first sample, they are partitioned into $2m$ ``blocks'' of at most $16\eps^{-1}$ samples each. For every $1 \le k \le 2m$ and $0 \le a \le \log\eps^{-1}$, the algorithm takes a sequence $S_{a,k}$ of $2^{3-a} \eps^{-1}$ new samples, and queries every sample in it at the indices of $I_a$.
		
		The algorithm rejects if there exists a \emph{distinguishable composition} of size $m+1$ (which in particular is also a witness against $\mathcal{S}_m$). We formally define this term (and others) below.
		
		\begin{definition}[$K$-composition] \label{def:composition}
			For $K \subseteq \{1,\ldots,2m\}$, a sequence $A = (u_1,a_2,u_2,\ldots,a_\ell,u_\ell)$ is called a \emph{$K$-composition of length $\ell$} if $u_1=u$ (the first sample) and for every $2 \le i \le \ell$, $0 \le a_i \le \log\eps^{-1}$ and $u_i \in S_{a_i,k_i}$ (for some $k_i \in K$). A $\{1,\ldots,2m\}$-composition is called a \emph{composition}.
		\end{definition}
		
		\begin{definition}[Soundness of a $K$-composition] \label{def:composition-soundness}
			For some $K$, let $A = (u_1,a_2,u_2,\ldots,a_\ell,u_\ell)$ be a $K$-composition. We say that it is \emph{sound}, if for every $1 \le i < j \le \ell$, $d(u_i, u_j) > 2^{-a_j-1}$.
		\end{definition}
		In other words, a composition is \emph{sound} if for every $i$, the choice of $a_i$ results in lower bound for the distance of $u_i$ from all $\{u_1,\ldots,u_{i-1}\}$. Note that the algorithm cannot be certain about the soundness of a $K$-composition unless it makes $\Omega(\eps n)$ queries for every individual element in the composition (which it does not).
		
		\begin{definition}[Monotonicity of a $K$-composition] \label{def:composition-monotonity}
			For some $K$, let $A = (u_1,a_2,u_2,\ldots,a_\ell,u_\ell)$ be a $K$-composition. We say that it is \emph{monotone} if $a_2 \ge \ldots \ge a_\ell$.
		\end{definition}
		Observe that the algorithm can easily verify the monotonicity of a $K$-composition.
		
		\begin{definition}[Distinguishability of a $K$-composition] \label{def:composition-distinguishability}
			For some $K$, let $A = (u_1,a_2,u_2,\ldots,a_\ell,u_\ell)$ be a $K$-composition. We say that it is \emph{distinguishable} if, for every $1 \le i_1 < i_2 \le \ell$, there exists a query $j \in I_{a_{i_1}} \cap I_{a_{i_2}}$ for which $(u_{i_1})_j \ne (u_{i_2})_j$. For this definition, we set $a_1 = \log \eps^{-1}$ (since $u_1 = u$, and the algorithm queries $u$ at $I_{\log \eps^{-1}}$).
		\end{definition}
		In other words, a sequence is distinguishable if for every two samples in the composition, there exists a common query that distinguishes them. Observe that the algorithm can always verify the distinguishability of a $K$-composition.
		
		\begin{definition}[Valid composition] \label{def:composition-valid}
			For some $K$, a $K$-composition is \emph{valid} if it is both sound and monotone.
		\end{definition}
		
		\begin{definition}[Rank of a monotone composition]
			Let $A = (u_1,a_2,u_2,\ldots,a_\ell,u_\ell)$ be a monotone $K$-composition (for some $K$). Its \emph{rank} is defined as $\vec{r}(A) = (a_2,\ldots,a_\ell)$.
		\end{definition}
		
		\begin{definition}[Order of ranks]
			Ranks are ordered lexicographically as strings in $\{0,\ldots,\log\eps^{-1} \}^*$. In particular, if $\vec{r}(A_1)$ is a proper prefix of $\vec{r}(A_2)$, then $\vec{r}(A_1) < \vec{r}(A_2)$. That is, the ``end-of-string'' virtual character is considered smaller than every actual character.
		\end{definition}
		
		\begin{algorithm}
			\caption{Non-adaptive construction of a valid composition}
			\label{alg:m-supp-na}
			\begin{algorithmic}
				\State \textbf{choose} indices $i_1,\ldots,i_{\ceil{4\eps^{-1}\log (m+1)}}$ uniformly and independently, with repetitions.
				\For{$0 \le a \le \log\eps^{-1}$}
				\Let $I_a = \{i_1, \ldots, i_{\ceil{2^{a+2} \log (m+1)}}\}$.
				\EndFor
				\State \textbf{take} a sample $u$.
				\State \textbf{query} $u$ at $I_{\log\eps^{-1}}$.
				\For{$k$ \textbf{from} $1$ \textbf{to} $2m$}
				\For{$a$ \textbf{from} $0$ \textbf{to} $\log\eps^{-1}$}
				\State \textbf{take} $2^{3-a}\eps^{-1}$ new samples, denoting the sequence by $S_{a,k}$.
				\State \textbf{query} all samples in $S_{a,k}$ at $I_a$.
				\EndFor
				\EndFor
				\If{there exists a distinguishable composition of size $m+1$}
				\State \Return \reject
				\Else
				\State \Return \accept
				\EndIf
			\end{algorithmic}
		\end{algorithm}
		
		\begin{observation}
			For every composition $A$ of length $\ell$ there exists some $K \subseteq \{1,\ldots,2m\}$ of size $\ell-1$ for which $A$ is a $K$-composition.
		\end{observation}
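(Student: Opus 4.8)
This is essentially a bookkeeping unwinding of Definition~\ref{def:composition}, so the plan is to make the witnessing indices explicit and then pad. First I would apply the definition with $K=\{1,\ldots,2m\}$ to the given composition $A=(u_1,a_2,u_2,\ldots,a_\ell,u_\ell)$: for each $2\le i\le\ell$ this guarantees some $k_i\in\{1,\ldots,2m\}$ with $u_i\in S_{a_i,k_i}$. Setting $K_0=\{k_2,\ldots,k_\ell\}$, the very same witnesses $(k_i)_{i=2}^\ell$ show that $A$ is a $K_0$-composition, and $|K_0|\le\ell-1$ since $K_0$ is the image of the $\ell-1$ indices $k_2,\ldots,k_\ell$.

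Next I would note the trivial monotonicity of the defining property in $K$: if $A$ is a $K'$-composition and $K'\subseteq K$, then $A$ is a $K$-composition as well, because each $k_i$ still lies in $K$. Hence it suffices to extend $K_0$ to a set $K\supseteq K_0$ with $|K|=\ell-1$ by adjoining arbitrary elements of $\{1,\ldots,2m\}\setminus K_0$; this is possible exactly when $\ell-1\le 2m$, which holds in every case where the observation is used (the algorithm only looks for distinguishable compositions of size $m+1$, so there $\ell-1=m\le 2m$). Then $A$ is a $K$-composition with $|K|=\ell-1$, as claimed.

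There is no real obstacle here; the only point deserving a word of care is the final padding step, which tacitly uses that a composition of interest is no longer than $2m+1$ --- I would either restrict the statement to $\ell\le 2m+1$ or simply invoke that the algorithm's target length is $m+1$.
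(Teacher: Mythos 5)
Your proof is correct, and since the paper states this observation without proof (it is left as a trivial unwinding of Definition~\ref{def:composition}), it essentially matches the intended argument: extract the witnessing block indices $k_2,\ldots,k_\ell$, collect them into $K_0$, and note that $A$ is already a $K_0$-composition with $|K_0|\le\ell-1$. You are also right to flag the one genuine subtlety: if the statement is read as demanding $|K|$ to equal $\ell-1$ exactly, then the padding step needs $\ell-1\le 2m$, which holds in every place the observation is invoked (the algorithm seeks compositions of length $m+1$, and Lemma~\ref{lemma:maximum-must-be-long} only applies the bad-event machinery when $|A|\le m$); alternatively one can read ``of size $\ell-1$'' as ``of size at most $\ell-1$'', in which case $K_0$ itself suffices and no padding or length restriction is needed at all.
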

		
		\begin{definition}[Bad events]
			For every $1 \le \ell \le m$ and for every $K \subseteq \{1,\ldots,2m\}$ of size $\ell$, let $B_{K,\ell}$ be the following event: considering the maximum rank $\vec{r}$ of a valid $K$-composition $A = (u_1, a_2, u_2, \ldots, a_\ell, u_\ell)$ of length $\ell$ (for some $a_2, \ldots, a_\ell$), there is no valid composition $\tilde{A}$ (not necessarily a $K$-composition) with $\vec{r}(\tilde{A}) > \vec{r}(A)$ of size $2 \le \ell' \le \ell + 1$.
		\end{definition}
		
		Informally, $B_{K,\ell}$ is the event that there exist some valid $K$-composition whose length is too short and yet has the maximal rank among all valid compositions.
		
		\begin{lemma}\label{lemma:maximum-must-be-long}
			Assume that for every $1 \le \ell \le m$ and for every $K \subseteq \{1,\ldots,2m\}$ of size $\ell$, $B_{K,\ell}$ does not happen. Then there exists a valid composition of size $m+1$ or more.
		\end{lemma}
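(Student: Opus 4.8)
The plan is to identify the right extremal object and then read the conclusion straight off the hypothesis. First I would record that the collection of valid compositions is non-empty and finite: the one-element sequence $(u)$, where $u$ is the first sample, is vacuously both sound and monotone, hence a valid composition; and in any valid composition the elements are pairwise distinct, since soundness forces $d(u_i,u_j)>2^{-a_j-1}>0$ for every $i<j$. Thus a valid composition has length at most the number of samples, and since there are only finitely many choices of samples, scales $a_i$, and block indices $k_i$, there are only finitely many valid compositions. Consequently there is a valid composition $A^{\ast}=(u_1,a_2,u_2,\ldots,a_{\ell^{\ast}},u_{\ell^{\ast}})$ whose rank $\vec r^{\ast}:=\vec r(A^{\ast})$ is maximal (with respect to the order on ranks) among all valid compositions. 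The lemma will follow once I show $\ell^{\ast}\ge m+1$, with $A^{\ast}$ itself as the witness.

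Second, I would assume towards a contradiction that $\ell^{\ast}\le m$ and manufacture an instance of the hypothesis. The block indices $k_2,\ldots,k_{\ell^{\ast}}$ occurring in $A^{\ast}$ form a set $K_0\subseteq\{1,\ldots,2m\}$ with $|K_0|\le \ell^{\ast}-1$; since $\ell^{\ast}\le m\le 2m$ I can enlarge it to a set $K'$ with $K_0\subseteq K'\subseteq\{1,\ldots,2m\}$ and $|K'|=\ell^{\ast}$. Now $A^{\ast}$ is in particular a valid $K'$-composition of length $\ell^{\ast}$, so the set of valid $K'$-compositions of length $\ell^{\ast}$ is non-empty, and the ``maximum rank $\vec r$ of a valid $K'$-composition of length $\ell^{\ast}$'' appearing in the definition of $B_{K',\ell^{\ast}}$ is well defined. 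It satisfies $\vec r\ge \vec r(A^{\ast})=\vec r^{\ast}$ because $A^{\ast}$ lies in that set, and $\vec r\le \vec r^{\ast}$ because $\vec r^{\ast}$ is maximal among \emph{all} valid compositions; hence $\vec r=\vec r^{\ast}$.

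Third, since $1\le \ell^{\ast}\le m$ and $|K'|=\ell^{\ast}$, the standing assumption says that $B_{K',\ell^{\ast}}$ does not occur, which by definition means there exists a valid composition $\tilde A$ of length between $2$ and $\ell^{\ast}+1$ with $\vec r(\tilde A)>\vec r=\vec r^{\ast}$. This directly contradicts the maximality of $\vec r^{\ast}$ among all valid compositions. Therefore $\ell^{\ast}\ge m+1$, which is what we wanted.

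I do not expect a genuine obstacle: the argument is essentially immediate once one commits to taking the globally rank-maximal valid composition, and nothing about distinguishability (or soundness beyond the trivial use ensuring finiteness) is needed. The only step requiring a little care is the index bookkeeping — namely that a length-$\ell^{\ast}$ valid composition can always be viewed as a $K'$-composition for some $K'$ of size exactly $\ell^{\ast}$ (this is where $\ell^{\ast}\le 2m$ enters), and that the ``local'' maximum rank in the definition of $B_{K',\ell^{\ast}}$ coincides with the global maximum rank $\vec r^{\ast}$.
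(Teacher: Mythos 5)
Your proof is correct and follows essentially the same route as the paper's: take a valid composition of globally maximal rank, observe that if its length were $\le m$ then some $B_{K,\ell}$ would have to occur, and conclude. The one place where you are more careful than the paper is the bookkeeping around $K$ — you explicitly pad the block set $K_0$ (of size $\le \ell^{\ast}-1$) to a set $K'$ of size exactly $\ell^{\ast}$ so that $B_{K',\ell^{\ast}}$ is well-defined, and you verify that the ``local'' maximum rank in the definition of $B_{K',\ell^{\ast}}$ coincides with the global one; the paper states $|K|\le|A|$ and invokes $B_{K,|A|}$ directly, leaving that step implicit.
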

		\begin{proof}
			Let $A$ be a valid composition with the maximal rank among all valid compositions ($A$ is not necessarily unique). Let $K \subseteq \{1, \ldots, 2m\}$ be the set of blocks that contain the elements of $A$ ($|K| \le |A|$). If $|A| \le m$, then by maximality, $B_{K,|A|}$ must have happened. Since we assumed that it has not, the length of $|A|$ must be at least $m + 1$.
		\end{proof}
		
		\begin{lemma}\label{lemma:always-better-composition}
			If the input distribution is $\eps$-far from $\mathcal{S}_m$, then for every $1 \le \ell \le m$ and $K \subseteq \{1,\ldots,2m\}$ of size $\ell$, $\Pr[B_{K,\ell}] \le e^{-4m}$.
		\end{lemma}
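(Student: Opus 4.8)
The plan is to show that the event $B_{K,\ell}$ forces a greedily built valid composition to remain short, and that this contradicts the fact that $\eps$-farness from $\mathcal{S}_m$ makes every fresh block of samples overwhelmingly likely to contribute a new composition element.

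First I would carry out a structural reduction. Suppose $B_{K,\ell}$ holds, and let $A^\ast$ be a valid composition of globally maximal rank. I claim $|A^\ast|\le \ell$: otherwise the length-$(\ell+1)$ prefix $A^\ast|_{\ell+1}$ is again a valid composition, of length at most $\ell+1$, and its rank strictly exceeds the rank $\vec r(A)$ of the maximal-rank length-$\ell$ $K$-composition $A$ — indeed $\vec r(A^\ast)$ dominates $\vec r(A)$ (as $A$ is just one composition while $A^\ast$ is maximal), so the first $\ell-1$ scale entries of $\vec r(A^\ast)$ are at least $\vec r(A)$, and appending one more entry keeps the relation strict in the lexicographic order (where "end of string" is below every character) — contradicting $B_{K,\ell}$. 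Next I would observe that $A^\ast$ can be realized by a greedy sweep over the distance scales: for $a=\log\eps^{-1},\log\eps^{-1}-1,\dots,0$ in turn, scan the scale-$a$ samples of all $2m$ blocks and append every one whose normalized Hamming distance to the current element set exceeds $2^{-a-1}$; the scales are then non-increasing (monotonicity) and every appended element is far from all earlier ones (soundness), and one fixes the choice of appended samples so as to maximize the rank, which is possible precisely because the order among scales matches the lexicographic order on ranks. Consequently it suffices to prove: if $P$ is $\eps$-far from $\mathcal{S}_m$, then this greedy composition has length more than $m\ge\ell$ except with probability $e^{-4m}$.

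The quantitative heart is a per-block progress estimate. As long as the current element set $W$ has at most $m$ elements, $P$ is $\eps$-far from $\mathcal{S}_W$, so by Observation \ref{obs:dxA-far} we have $\E_{x\sim P}[d(x,W)]>\eps$; writing this expectation as $\int_0^1 \Pr_{x}[d(x,W)>t]\,dt$ and bounding the integrand by $\Pr_x[d(x,W)>2^{-a-1}]$ on each dyadic interval $[2^{-a-1},2^{-a}]$ and by $1$ on $[0,\eps/2]$ gives $\sum_{a=0}^{\log\eps^{-1}} 2^{-a}\,\Pr_x[d(x,W)>2^{-a-1}]>\eps$. Since a block carries $2^{3-a}\eps^{-1}$ independent samples at each scale $a$, the probability that a whole block contributes no new element to $W$ is at most
\[\prod_{a=0}^{\log\eps^{-1}}\bigl(1-\Pr_x[d(x,W)>2^{-a-1}]\bigr)^{2^{3-a}\eps^{-1}}\le \exp\Bigl(-8\eps^{-1}\sum_{a=0}^{\log\eps^{-1}}2^{-a}\,\Pr_x[d(x,W)>2^{-a-1}]\Bigr)<e^{-8}.\]
Thus, conditioned on the past and on $|W|\le m$, each newly processed block yields a new composition element with probability more than $1-e^{-8}$.

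Finally I would apply a Chernoff bound over the $2m$ blocks. Either the composition reaches length $m+1>\ell$ at some point, or all $2m$ blocks are processed with the element set never exceeding $m$ elements, in which case fewer than $m$ of the $2m$ blocks contributed; by the multiplicative Chernoff bound (Lemma \ref{lemma:chernoff-mult}) applied to the sequence of per-block success indicators (each with conditional success probability at least $1-e^{-8}$), this happens with probability at most $e^{-\Omega(m)}$, and with the constants above below $e^{-4m}$. Together with the structural reduction this gives $\Pr[B_{K,\ell}]\le e^{-4m}$. The main obstacle is the bookkeeping in the first paragraph: making the greedy sweep genuinely rank-maximal despite the freedom in which samples to append at each scale, and re-organizing the sweep block by block while keeping the composition monotone (capping the scales available to later blocks by the last scale already used, and checking that the waste this causes when the cap becomes coarse is paid for by the fact that coarse scales are then productive in the above estimate). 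The probabilistic content of the last two paragraphs is comparatively routine once the estimate $\sum_a 2^{-a}\Pr_x[d(x,W)>2^{-a-1}]>\eps$ is available.
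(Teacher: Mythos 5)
Your approach diverges substantially from the paper's, and it has a genuine unresolved gap at its core. The paper never attempts to build or reason about a globally maximal-rank composition. Instead it exploits the precise phrasing of $B_{K,\ell}$: since $B_{K,\ell}$ only requires exhibiting a single valid composition of length at most $\ell+1$ with rank exceeding $\vec r(A)$, it suffices to find one sample $v$, in one free block $k\notin K$ at one scale $a$, with $d(v,\{u_1,\ldots,u_\ell\})>2^{-a-1}$. That $v$ is then inserted into $A$ at the appropriate position to yield a valid composition $\tilde A$ of length $\ell+1$ with $\vec r(\tilde A)>\vec r(A)$. The probability of no such $v$ existing is bounded by $\Pr[X=0]\le e^{-\mathrm E[X]}$ where $X$ is the (sum of independent indicators for) "matches" across the $\ge m$ free blocks and all scales, with $\mathrm E[X]\ge 4m$; crucially, the samples in free blocks are independent of $A$ (which depends only on blocks in $K$), so this expectation-and-Poisson-style bound is clean.

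Your structural reduction (under $B_{K,\ell}$, the globally maximal composition $A^\ast$ has $|A^\ast|=\ell$) is correct, and your dyadic-integral derivation of $\sum_a 2^{-a}\Pr[d(x,W)>2^{-a-1}]>\eps$ is essentially the same estimate the paper uses. But your plan then hinges on two things you explicitly flag as obstacles and do not resolve: (i) that a greedy scale-ordered sweep realizes the maximal-rank composition $A^\ast$, which is a nontrivial packing claim (at a fixed scale $a$, appending one far sample can preclude appending another, so "append every far sample" is order-dependent and need not be rank-optimal); and (ii) that the per-block progress estimate, which is naturally block-ordered and conditioned on a fixed current set $W$, can be reconciled with the scale-ordered sweep in which each block is visited once per scale and $W$ keeps changing. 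Without (i), showing your greedy composition is long does not show $A^\ast$ is long, and without (ii) the conditional per-block success probability $1-e^{-8}$ is not justified. A smaller issue: the Chernoff bound "applied to the sequence of per-block success indicators" as a lower-tail bound on successes (mean $\approx 2m$, threshold $m$, i.e.\ $\delta\approx 1/2$) only gives roughly $e^{-0.3m}$; to hit $e^{-4m}$ you would need to apply the upper tail to the number of failing blocks (mean $\approx 2me^{-8}$, threshold $m$), where the relative deviation is enormous. The paper's route sidesteps all of this by improving the fixed $K$-composition $A$ directly rather than chasing $A^\ast$.
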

		\begin{proof}
			There are at least $2m -|K| \ge m$ blocks that are entirely free from the conditions on $A$. For every $k \in \{1, \ldots, 2m\} \setminus K$, and for every $0 \le a \le \log \eps^{-1}$, the expected number of samples in $S_{a,k}$ that are $2^{-a-1}$-far from all elements $A$ is at least 
			\[\Pr\left[2^{-a-1} < d(x,A) \le 2^{-a}\right] \cdot 2^{3-a}\eps^{-1} \ge 2^a c^A_a \cdot 8 \cdot 2^{-a}\eps^{-1} = 8\eps^{-1}c^A_a\]
			Where $c^A_a$ is a short notation for $\Ct[d(x,A) | 2^{-a-1} < d(x,A) \le 2^{-a}]$.
			
			Considering all possible values for $a$ in the same block, the expected number of these ``matches'' is at least
			\[\sum_{a=0}^{\log\eps^{-1}} 8\eps^{-1} c^{A}_a
			= 8 \eps^{-1} \Ct\left[d(x,A) \cond d(x,A) > \frac{1}{2}\eps \right]
			\ge 8\eps^{-1}\left(\E[d(x,A)] - \frac{1}{2}\eps\right)
			> 8 \eps^{-1} \cdot \frac{1}{2}\eps = 4\]
			
			Considering all $k\in\{1,\ldots,2m\}\setminus K$ as well (at least $m$ of them), the expected number of these matches is at least $4m$. This is a sum of independent binomial variables, hence by Lemma \ref{lemma:pr-chernoff-zero} the probability that there are no matches at all is bounded by $e^{-4m}$. That is, with probability at least $1 - e^{-4m}$, there exist $a \in \{0,\ldots,\log\eps^{-1}\}$, $k \in \{1,\ldots,2m\}\setminus K$ and $v \in S_{a,k}$ for which $d(v,\{u_1,\ldots,u_\ell\}) > 2^{-a-1}$. Consider the valid composition:
			$\tilde{A} = (u_1,a_2,u_2,\ldots,a_{i_0},u_{i_0},a,v)$, where $i_0 = \max(\{1\}\cup \{i | a_i \ge a\})$ (possibly $i_0 = 1$, and in this case $\tilde{A} = (u_1,a,v)$). Comparing the ranks,
			\[
			\vec{r}(A)
			= (a_2,\ldots,a_{i_0},\mathbf{a_{i_0+1}},\ldots)
			< (a_2,\ldots,a_{i_0},\mathbf{a})
			= \vec{r}(\tilde{A})
			\]
			Note that possibly also $i_0 = \ell$, and in this case $a_{i_0+1}$ is the virtual ``end-of-string'' character that is defined as smaller than every value of $\mathbf{a}$. Hence in all cases $\vec{r}(\tilde{A}) > \vec{r}(A)$ as desired.
		\end{proof}
		
		\begin{theorem} \label{th:alg-m-supp-na-correct}
			Algorithm \ref{alg:m-supp-na} is a one-sided $\eps$-test of $\mathcal{S}_m$ that makes $O(\eps^{-1} \log \eps^{-1} \cdot m \log m)$ queries.
		\end{theorem}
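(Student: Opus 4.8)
The plan is to prove the three things bundled into the statement: the query bound, one-sided error (perfect completeness on $\mathcal{S}_m$), and soundness against inputs that are $\eps$-far. For the \emph{query complexity} I would simply count: the single sample $u$ costs $|I_{\log\eps^{-1}}| = O(\eps^{-1}\log m)$ queries, and for each block $k\in\{1,\dots,2m\}$ and each scale $a\in\{0,\dots,\log\eps^{-1}\}$ the cost is $|S_{a,k}|\cdot|I_a| = 2^{3-a}\eps^{-1}\cdot\ceil{2^{a+2}\log(m+1)}$, in which the dependence on $a$ cancels up to lower-order terms, leaving $O(\eps^{-1}\log m)$; summing over the $O(\log\eps^{-1})$ scales and the $2m$ blocks yields the claimed $O(\eps^{-1}m\log\eps^{-1}\log m)$. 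When $\eps$ is not a power of two I would run the algorithm with $\hat\eps = 2^{-\ceil{\log_2\eps^{-1}}}\le\eps$; an input that is $\eps$-far from $\mathcal{S}_m$ is in particular $\hat\eps$-far, and $\hat\eps^{-1}\le 2\eps^{-1}$ keeps the bound within the same order.

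For \emph{perfect completeness}, suppose $P\in\mathcal{S}_m$, so $|\supp(P)|\le m$. I would argue that no distinguishable composition $(u_1,a_2,u_2,\dots,a_{m+1},u_{m+1})$ of size $m+1$ can exist: by Definition \ref{def:composition-distinguishability} every pair among $u_1,\dots,u_{m+1}$ is separated at a commonly queried index, so the $u_i$ are pairwise distinct and their contradiction graph is $K_{m+1}$, which is not $m$-colorable. Lemma \ref{lemma:witness-iff-not-colorable} then forces more than $m$ distinct samples, contradicting $|\supp(P)|\le m$. Hence the algorithm never rejects such $P$, so it accepts with probability $1$.

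For \emph{soundness}, assume $d(P,\mathcal{S}_m)>\eps$. Let $\mathcal{E}$ be the event that $B_{K,\ell}$ fails for every $1\le\ell\le m$ and every $K\subseteq\{1,\dots,2m\}$ with $|K|=\ell$. There are at most $\sum_{\ell=1}^m\binom{2m}{\ell}\le 2^{2m}$ such events, so Lemma \ref{lemma:always-better-composition} and a union bound give $\Pr[\neg\mathcal{E}]\le 4^m e^{-4m}$, which is negligible for $m\ge 2$. Note that $\mathcal{E}$ is determined by the sampled strings alone, since validity (soundness plus monotonicity) does not refer to the queried indices. On $\mathcal{E}$, Lemma \ref{lemma:maximum-must-be-long} produces a valid composition of length at least $m+1$; truncating it to its first $m+1$ elements preserves monotonicity and soundness, so I may fix, as a (measurable) function of the samples, a valid composition $A=(u_1,a_2,u_2,\dots,a_{m+1},u_{m+1})$ of length exactly $m+1$. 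It remains to show that $A$ is \emph{distinguishable} with good probability over the independently chosen index list, since then the algorithm rejects. By monotonicity and nestedness, $I_{a_{i_1}}\cap I_{a_{i_2}} = I_{a_{i_2}}$ for $i_1<i_2$ (with $a_1=\log\eps^{-1}$ the maximum), and by soundness $d_H(u_{i_1},u_{i_2})>2^{-a_{i_2}-1}$. Each of the $\ge 2^{a_{i_2}+2}\log(m+1)$ independent uniform indices of $I_{a_{i_2}}$ separates the pair with probability $>2^{-a_{i_2}-1}$, so none does with probability $\le(1-2^{-a_{i_2}-1})^{2^{a_{i_2}+2}\log(m+1)}\le e^{-2\log(m+1)}$, and a union bound over the $\binom{m+1}{2}$ pairs bounds the probability that $A$ is not distinguishable by $\binom{m+1}{2}e^{-2\log(m+1)}<\tfrac14$ for all $m\ge 2$. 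Since the index list is drawn independently of the samples, this conditional bound is unaffected by fixing $A$, so
\[\Pr[\text{reject}]\ \ge\ \Pr[\mathcal{E}]\cdot\left(1-\binom{m+1}{2}e^{-2\log(m+1)}\right)\ \ge\ \left(1-4^m e^{-4m}\right)\cdot\tfrac34\ >\ \tfrac23 .\]

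The \emph{main obstacle} is the interplay between the two sources of randomness in the soundness part: the composition $A$ is selected using the samples, while distinguishability is a statement about the index list that was chosen beforehand, so the argument must be phrased carefully as ``fix $A$ as a measurable function of the samples, then invoke independence to randomize the indices''. Once this is set up correctly, what remains is routine constant-chasing — in particular verifying (with $\log=\log_2$) that the per-pair failure probability $e^{-2\log(m+1)}$ is small enough to survive the $\binom{m+1}{2}$ union bound with enough slack to also absorb the negligible $4^m e^{-4m}$ loss incurred by the union bound over the events $B_{K,\ell}$.
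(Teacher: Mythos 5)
Your proof is correct and takes essentially the same approach as the paper's: bound queries by summing over the $O(m\log\eps^{-1})$ pairs $(a,k)$, argue perfect completeness because a distinguishable composition of length $m+1$ is an explicit witness against $\mathcal{S}_m$, then combine Lemma~\ref{lemma:always-better-composition} with a union bound over the events $B_{K,\ell}$ to obtain a valid composition of length $m+1$ with high probability over the samples, and finally use the independently chosen index sets to show that composition is distinguishable with constant probability. The extra care you take to separate the two sources of randomness (fixing the composition as a measurable function of the samples, then invoking independence of the index list) makes explicit what the paper leaves implicit, but is the same argument.
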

		\begin{proof}
			For the query complexity, note that for every $1 \le k \le 2m$ and for every $0 \le a \le \log\eps^{-1}$, the algorithm makes $\ceil{2^{3-a}\eps^{-1}} \cdot \ceil{2^{a+2} \log (m+1)} = O(\eps^{-1} \log m)$ queries inside $S_{a,k}$. Since the number of pairs of $(a,k)$ is $O(m \log\eps^{-1})$, the query complexity is $O(\eps^{-1} \log\eps^{-1} \cdot m \log m)$.
			
			Perfect completeness is trivial, since a distinguishable composition of length $m+1$ is in particular an explicit witness against $\mathcal{S}_m$. For soundness, recall Lemma \ref{lemma:maximum-must-be-long}. If none of the bad events $B_{K,\ell}$ happens, then there exists a valid composition $A$ of size $m+1$. If the input is $\eps$-far from $\mathcal{S}_m$, then by the union bound (of the complement events), the probability for this is at least
			\[1 - \sum_{\ell=1}^{m} \cdot \binom{2m}{\ell} \cdot e^{-4m} \ge 1 - 2^{2m} e^{-4m} > \frac{99}{100} \mathrm{\ for\ } m \ge 2 \]
			Hence, with probability $\frac{99}{100}$, there exists a valid composition $A$ of size $m+1$.
			
			Assume that this happens with some valid composition $A = (u_1,a_2,u_2,\ldots,a_{m+1},u_{m+1})$. Set $a_1 = \log\eps^{-1}$ for convenience.
			
			By the constraints of $A$, for every $1 \le i < j \le m+1$:
			\begin{itemize}
				\item $d(u_i,u_j) > 2^{-a_j-1}$.
				\item $u_i$ is queried in $I_{a_i} \supseteq I_{a_j}$.
				\item $u_j$ is queried in $I_{a_j}$.
			\end{itemize}
			The probability that $I_{a_j}$ distinguishes $u_i$ and $u_j$ is at least $1 - (1 - 2^{-a_j-1})^{\ceil{2^{a_j + 2}\log (m+1)}} > 1 - e^{-2 \log (m+1)}$. The probability that this happens for every $1 \le i < j \le m + 1$ is bounded by $1 - \binom{m + 1}{2}e^{-2 \log (m+1)}$, which is at least $\frac{5}{6}$ (for $m \ge 2$). Overall, with probability at least $\frac{99}{100} \cdot \frac{5}{6} > \frac{2}{3}$, there exists a valid composition $A$ of size $m+1$ which is also distinguishable, and in this case the algorithm rejects.
		\end{proof}

		\section{The fishing expedition paradigm}
		\label{sec:fishing-expedition}
		
		%\ todo{\sout{If splitted half is moved to prelims then fine, otherwise it might be remerged} \sout{Now move it to before current Section 9}}
		
		We construct and prove here the algorithm for the fishing expedition paradigm, Lemma \ref{lemma:fishing-expedition-full}.
		
		%\begin{definition}[Fail stability] \label{def:fail-stability}
		%    Let $\mathcal{A}$ be a subroutine that may succeed with some output, or fail (with no additional output). Specifically let $R_1,\ldots,R_N$ be random variables that detail the outputs of the first $N$ executions of $\mathcal{A}$, where a $0$ value means that an execution has failed (and all other values imply success). We say that $\mathcal{A}$ \emph{is fail stable}, if for every $2 \le i \le N$ and every result sequence $(r_1,\ldots,r_{i-2})\in\supp(R_1,\ldots,R_{i-2})$:
		%    \[\Pr\left[R_i \ne 0 \mid R_1 = r_1, \ldots, R_{i-2} = r_{i-2}, R_{i-1} = 0 \right] = \Pr\left[R_{i-1} \ne 0 \mid R_1 = r_1, \ldots, R_{i-2} = r_{i-2} \right]\]
		%    In other words, a failure does not affect the probability of further executions to succeed.
		%\end{definition}
		
		%\begin{definition}[Diminishing returns] \label{def:diminishing-returns}
		%    Let $\mathcal{A}$ and $R_1,\ldots,R_N$ be as in Definition \ref{def:fail-stability}. We say that $\mathcal{A}$ \emph{has diminishing returns}, if for every $2 \le i \le N$ and every result sequence $(r_1,\ldots,r_{i-1})\in\supp(R_1,\ldots,R_{i-1})$:
		%    \[\Pr\left[R_i \ne 0 \mid R_1 = r_1, \ldots, R_{i-2} = r_{i-2}, R_{i-1} = r_{i-1} \right] \le \Pr\left[R_{i-1} \ne 0 \mid R_1 = r_1, \ldots, R_{i-2} = r_{i-2}\right]\]
		%    That is, if $\mathcal{A}$ has diminishing returns, then a success in a single execution never increases, but may decrease, the probability of further executions to succeed.
		%\end{definition}
		
		The algorithm has three parameters: a threshold $p$, a confidence $q$ and a goal $k \ge 1$. The input is a subroutine $\mathcal{A}$ with diminishing returns and fail stability. Informally, the goal of the algorithm is to have $k$ successful executions of $\mathcal{A}$, but also to terminate earlier if the probability of $\mathcal{A}$ to succeed becomes lower than $p$. Since the algorithm has no actual access to the success probability of $\mathcal{A}$, it should terminate early only if it is confident enough that the success probability of further executions is too low for them to be effective.
		
		%We consider a black box subroutine $\mathcal{A}$ with fail stability (Definition \ref{def:fail-stability}) and diminishing returns (Definition \ref{def:diminishing-returns}). Recall the fishing expedition lemma:
		%\begin{lemma}
		%    \label{lemma:fishing-expedition-full}
		%    For an arbitrary algorithm that repeatedly executes $\mathcal{A}$ as a black box, we define the following random variables:
		%    \begin{itemize}
			%        \item $N$ -- the number of executions.
			%        \item $R_1,\ldots,R_N$ -- their outcomes.
			%        \item $X_1,\ldots,X_N$ -- indicators of success (that is, $X_i = 1$ if and only if $R_i \ne 0$).
			%        \item $H = \sum_{i=1}^N X_i$ -- the number of successful executions.
			%        \item $\hat p = \Pr[X_{N+1} = 1 | R_1,\ldots,R_N]$ -- the success probability of an additional, hypothetical execution of $\mathcal{A}$.
			%    \end{itemize}
		%
		%    Let $p > 0$, $q > 0$, $k \ge 1$, and let $\mathcal{A}$ be a subroutine with diminishing returns and fail stability. There exists such algorithm for which $N \le p^{-1}(4H + 5(\log q^{-1} + \log (\log k + 1))) + 1$ and $H \le k$, and with probability higher than $1-q$, either $H = k$ or $\hat p \le p$ (or both).
		%\end{lemma}
		%\lemmaZfishingZexpeditionZfull*
		%In this section we prove Lemma \ref{lemma:fishing-expedition-full}. The algorithm guaranteed by the lemma is Algorithm \ref{alg:fishing-expedition}.
		In this section we construct the fishing expedition paradigm providing Lemma \ref{lemma:fishing-expedition-full}. The algorithm providing it is Algorithm \ref{alg:fishing-expedition}. The observations and lemmas below show that this algorithm satisfies the corresponding components of Lemma \ref{lemma:fishing-expedition-full}.
		
		Algorithm \ref{alg:fishing-expedition} repeatedly executes $\mathcal{A}$. Of course, if $\mathcal{A}$ was successful for the $k$-th time, the algorithm terminates immediately. At some predefined check points (which are determined by $p$, $q$ and $k$), the algorithm considers an early termination. Concretely, for $t_\mathrm{max} = \floor{\log k + 1}$ and for every $2 \le t \le t_\mathrm{max}$, after $\ceil{p^{-1}\cdot \max\{2^t, 5(\log q^{-1} + \log (\log k + 2))\}}$ executions of $\mathcal{A}$, the algorithm terminates if the number of successful executions was less than a $\frac{1}{2}p$-portion of the total number of executions. The algorithm must terminate in one of these iterations, as stated in the following observation.
		
		\begin{algorithm}
			\caption{Fishing expedition}
			\label{alg:fishing-expedition}
			\begin{algorithmic}
				\State \textbf{parameters} $k \ge 1$ (goal), $p > 0$ (threshold), $q > 0$ (confidence).
				\State \textbf{input} A subroutine $\mathcal{A}$ with output, given as a black box, where the output ``$0$'' means \textsc{fail}.
				
				\State \textbf{let} $t_\mathrm{max} \gets \floor{\log k + 1}$.
				
				\State \textbf{let} $N_1 \gets 0$.
				
				\State \textbf{set} $H \gets 0$.
				
				\For{$t$ \textbf{from} $2$ \textbf{to} $t_\mathrm{max}$}
				\State \textbf{let} $N_t \gets \ceil{p^{-1}\max\{2^t, 5(\log q^{-1} + \log (\log k + 1))\}}$.
				
				\For{$N$ \textbf{from} $N_{t-1} + 1$ \textbf{to} $N_t$ \Comment{possibly empty}}
				\State \textbf{run} $\mathcal{A}$, let $R_N$ be its outcome.
				
				\State \textbf{let} $X_N$ be an indicator for success ($X_N = 1$ if $R_N \in G$, otherwise $X_N = 0$).
				
				\State \textbf{set} $H \gets H + X_N$.
				
				\If{$H = k$}
				\textbf{terminate} with $N$. \Comment{goal is reached}
				\EndIf
				\EndFor
				\If{$H < \frac{1}{2}pN_t$}
				\State \textbf{terminate} with $N_t$. \Comment{continuing is ineffective}
				\EndIf
				\EndFor
				\State \textbf{unreachable point} \Comment{Observation \ref{obs:fishing-expedition-unreachable-point}.}
			\end{algorithmic}
			
		\end{algorithm}
		
		\begin{observation} \label{obs:fishing-expedition-unreachable-point}
			If Algorithm \ref{alg:fishing-expedition} has not terminated by the $t_\mathrm{max}$-th iteration, it must do so there.
		\end{observation}
		\begin{proof}
			Note that $N_{t_\mathrm{max}} = N_{\floor{\log k + 1}} \ge \ceil{p^{-1} 2^{\floor{\log k + 2}}} \ge p^{-1} 2^{\log k + 1} = 2p^{-1}k$. That is, at the end of the $t_\mathrm{max}$-th iteration, if $H < k$ then $H < \frac{1}{2}pN_{t_\mathrm{max}}$ and the iteration must terminate.
		\end{proof}
		
		For every $N \ge 0$, let $H_N$ be the value of $H$ after the $N$-th execution (that is, $H_N = \sum_{i=1}^N X_i$).
		
		\begin{lemma} \label{lemma:fishing-expedition-first-constraint}
			Algorithm \ref{alg:fishing-expedition} always terminates with $N \le p^{-1}(4H + 5(\log q^{-1} + \log (\log k + 1))) + 1$.
		\end{lemma}
		\begin{proof}
			If the algorithm terminates in the first iteration ($t=2$), then
			\[N \le N_2 = \ceil{p^{-1} \max\{2^2, 5(\log q^{-1} + \log (\log k + 2))\}} \le 5p^{-1}(\log q^{-1} + \log(\log k + 2)) + 1 \]
			
			In terminations outside the first iteration, observe that $N_t \le 2N_{t-1}$. Since the algorithm did not terminate in the previous iteration, $H_{N_{t-1}} \ge \frac{1}{2}pN_{t-1} \ge \frac{1}{4}pN_t$. Since $H_{N_t} \ge H_{N_{t-1}}$ as well, we have $H_{N_t} \ge \frac{1}{4}pN_t$ and thus $N_t \le 4p^{-1}H_{N_t}$.
			
			By Observation \ref{obs:fishing-expedition-unreachable-point}, the algorithm must have terminated in one of the iterations. This completes the proof.
		\end{proof}
		
		\begin{lemma} \label{lemma:fishing-expedition-second-constraint}
			Let $p > 0$, $q > 0$, $k \ge 1$, and let $\mathcal{A}$ be a subroutine with diminishing returns and fail stability. Let $N$ be the number of executions of $\mathcal{A}$ done by Algorithm \ref{alg:fishing-expedition} and let $H$ be the number of successful executions. Let $\hat{p}=\Pr[X_{N_t+1}=1 | R_1,\ldots,R_{N_t} ]$ be the probability that an additional, hypothetical execution of $\mathcal{A}$ is successful (note that $\hat{p}$ is a random variable that depends on the outcomes of the $N$ executions of $\mathcal{A}$).
			In this setting, with probability higher than $1-q$, $H = k$ or $\hat{p} \le p$ (or both).
		\end{lemma}
		
		\begin{proof}
			Consider the following equivalent algorithm: we simulate Algorithm \ref{alg:fishing-expedition}, but ignore the termination requests. When reaching what was the ``unreachable point'', we stop and choose the output $(N,H)$ according to the first termination request. Clearly, this algorithm always returns the same $N$, $H$ as would be in a run of Algorithm \ref{alg:fishing-expedition} for the same outcome sequence, but it is non-adaptive (more precisely, it always makes $N_{t_\mathrm{max}}$ executions of $\mathcal{A}$ regardless of their outcomes and then chooses the output).
			
			For every $0 \le n \le N_{t_\mathrm{max}}$, let $H_n = \sum_{i=1}^n X_i$. For every $2 \le t \le t_\mathrm{max}$, let $\mathcal{B}_t$ be the following bad event: $(H_{N_t} < k) \wedge (H_{N_t} < \frac{1}{2}pN_t) \wedge (\Pr[X_{N+1} = 1 | R_1,\ldots,R_{N}] \ge p)$. Note that if no bad event happens, then the output satisfies $H \ge k$ or $\Pr[X_{N_t+1} = 1 | R_1,\ldots,R_{N_t}] < p$. We will use a variant of Chernoff's bound that is proved in the appendix (Lemma \ref{lemma:chernoff-with-goal}) to bound the probabilities of the bad events.
			
			Let $\mathcal{G} = \left\{ (R_1,\ldots,R_N) \in \supp(R_1,\ldots,R_N) : \left|\{R_i \ne 0\}\right| \ge k \vee \Pr[R_{N+1} \ne 0 | R_1,\ldots,R_N] < p \right\}$ be the set of outcome sequences where a termination would be justifiable (note that it is not exactly the same as the set of termination conditions). Note that if $(R_1,\ldots,R_{N-1}) \in \mathcal{G}$, then $(R_1,\ldots,R_N) \in \mathcal{G}$ as well, since the number of non-zero elements cannot decrease and the probability of the next trial cannot increase.
			
			For every $2 \le t \le t_\mathrm{max}$, by Lemma \ref{lemma:chernoff-with-goal}, for $\delta=\frac{1}{2}$, $m = N_t$ and $X = H_{N_t}$
			\begin{eqnarray*}
				\Pr[\mathcal{B}_t]
				&=& \Pr\left[(H_{N_t} < k) \wedge (H_{N_t} < \frac{1}{2}pN_t) \wedge (\Pr\left[R_{N_t+1} \notin G \cond R_1,\ldots,R_{N_t})\right] \ge p\right] \\
				&=& \Pr\left[((R_1,\ldots,R_{N_t}) \notin \mathcal{G}) \wedge (H_{N_t} < \frac{1}{2}pN_t)\right] \\
				&<& (\sqrt{2/e})^{pN_t}
				\le 0.86^{pN_t}
				\le 0.86^{5(\log q^{-1} + \log (\log k + 2))}
				< 0.5^{\log q^{-1} + \log(\log k + 2)}
				= \frac{q}{\log k + 2}
			\end{eqnarray*}
			
			By the union bound, $\Pr[\bigvee_{t=2}^{t_\mathrm{max}} \mathcal{B}_t] < \floor{\log k + 2} \cdot \frac{q}{\log k + 2} \le q$. With probability greater than $1-q$, no bad event happens, and the algorithm terminates with $H \ge k$ or $\Pr\left[R_{N+1} \notin G \cond R_1,\ldots,R_N\right] < p$.
		\end{proof}
		
		At this point we can prove the fishing expedition lemma.
		\begin{proof}[Proof of Lemma \ref{lemma:fishing-expedition-full}]
			This lemma follows immediately from Lemma \ref{lemma:fishing-expedition-first-constraint} (number of executions) and Lemma \ref{lemma:fishing-expedition-second-constraint} (the algorithm reaches one of its goals with probability at least $1-q$).
		\end{proof}
		
		\section{Adaptive $m$-support test}
		
		We construct here an adaptive one-sided error algorithm using $O(\eps^{-1} m \log m \cdot \min\{\log \eps^{-1}, \log m\})$ many queries.
		
		\paragraph{The advantage of being adaptive}
		
		The non-adaptive algorithm considers $\Omega(\log \eps^{-1})$ buckets of distance ranges at the cost of $\Omega(\eps^{-1} \cdot m \log m)$ queries per bucket, and we believe that we cannot do much better (the $\Omega(\log \eps^{-1})$ buckets are required according to the non-adaptive lower bound, and the $\Omega(\eps^{-1} m \log m)$ queries per bucket are required according to the adaptive lower-bound in Section \ref{sec:lbnd-adaptive-one-sided}). For $\eps < \frac{1}{m}$, the number of distance buckets is more than $\log m + 1$.
		
		An adaptive algorithm can do better. Initially, we consider $\log m + 1$ distance buckets. Then, using a decision tree constructed incrementally as more distinct elements are found, we avoid the need to consider the rest of the buckets. In particular, the ``far buckets'' phase is the bottleneck of the algorithm ($\Theta(m^3 \log m + \eps^{-1} m \log^2 m)$ queries), where the second phase is extremely cheap: $O(\eps^{-1} m)$ queries, below the lower bound for adaptive algorithms (see Section \ref{sec:lbnd-adaptive-one-sided}). This means that further improvements of the query complexity must address the first phase (which only considers distinct elements that are $\frac{1}{2m}$-far from each other). We later get rid of the $\Theta(m^3\log m)$ term by using the non-adaptive algorithm when $\eps$ is too large.
		
		\subsection{Additional building blocks}
		
		The ``fishing expedition'' paradigm (Algorithm \ref{alg:fishing-expedition}) is an important building block in our algorithm. Here we define some additional algorithmic building blocks.
		
		\begin{building-block}[Using a decision tree] \label{bb:using-a-decision-tree}
			Let $A = \{x_1,\ldots,x_k\}$ be a set of $k$ distinct strings, and let $\mathcal{T}$ be a query-based decision tree with exactly $k$ leaves, such that every string in $A$ corresponds to a different leaf in $\mathcal{T}$. For every input string $x$, we can algorithmically find an $1 \le i \le k$ such that $\mathcal{T}(x) = \mathcal{T}(x_i)$. The number of queries made by this procedure is bounded by the height of $\mathcal{T}$.
		\end{building-block}
		\begin{proof}
			Trivially, start at the root and follow the path according to the queries and the answers of $x$ to these queries. At some point we reach a leaf. This leaf must correspond to some $x_i \in A$, since $\mathcal{T}$ has exactly $k$ leaves and they fully distinguish the $k$ elements in $A$.
		\end{proof}
		
		\begin{building-block}[Updating a decision tree] \label{bb:insertion-to-decision-tree}
			Let $\mathcal{T}$ be a decision tree with $k$ leaves that fully distinguishes $A = \{x_1,\ldots,x_k\}$. Given a string $x$, an $i$ for which $\mathcal{T}(x) = \mathcal{T}(x_i)$, and an index $j$ for which $x|_j \ne x_i|_j$, we can algorithmically construct a decision tree $\mathcal{T}'$ with $k+1$ leaves that fully distinguishes $A' = \{x_1,\ldots,x_k,x\}$, at the cost of no additional queries.
		\end{building-block}
		\begin{proof}
			Just substitute the leaf $\mathcal{T}(x_i)$ by an internal node consisting of the query $j$ and two children $x$ and $x_i$.
		\end{proof}
		
		\begin{building-block}[Construction of a decision tree]
			\label{bb:construction-of-a-decision-tree}
			Let $A = \{x_1,\ldots,x_k\}$ be a distinguishable set of strings, that is, for every $1 \le i_1 < i_2 \le k$, there exists an index $j$ at which both $x_{i_1}$, $x_{i_2}$ were queried and $(x_{i_1})_j \ne (x_{i_2})_j$. We can construct a decision tree $\mathcal{T}$ with exactly $k$ leaves that distinguishes all $x_1,\ldots,x_k$, at the cost of at most $k^2$ queries. Moreover, at the end of the construction, for every $1 \le i \le k$, every $x_i$ was queried at all indices in the search path of $x_i$ in $\mathcal{T}$.
		\end{building-block}
		\begin{proof}
			For $k=1$ it is trivial. For $k>1$, consider $A' = \{x_1,\ldots,x_{k-1}\}$ and construct a decision tree $\mathcal{T}'$ with $k-1$ leaves distinguishing $x_1,\ldots,x_{k-1}$ at the cost of at most $(k-1)^2$ queries. Use Building Block \ref{bb:using-a-decision-tree} (using a decision tree) to find $1 \le i \le k-1$ for which $\mathcal{T}'(x_k) = \mathcal{T}'(x_i)$, at the cost of at most $k-1$ queries (note that these additional queries are only done in $x_k$). According to the statement, there exists some index $j$ at which both $x_i$ and $x_k$ were queried, and $(x_i)_j \ne (x_k)_j$. Use Building Block \ref{bb:insertion-to-decision-tree} (updating a decision tree) to insert $x_k$ to the tree at the cost of no additional queries.
			
			We used at most $(k-1)^2$ queries to construct $\mathcal{T}'$ and at most $k-1$ additional queries to insert $x_k$ to it. The total number of queries is at most $k^2$, as required.
		\end{proof}
		
		\subsection{The algorithm}
		
		If $\eps \ge \frac{1}{m^2}$ then we just execute Algorithm \ref{alg:m-supp-na}. The query complexity of the algorithm is $O(\eps^{-1} \log \eps^{-1} \cdot m \log m)$ in this case, and it is the same as $O(\eps^{-1} m \log m \cdot \min\{\log \eps^{-1}, \log m\})$ since $\log \eps^{-1} \le \log m^2 \le 2 \log m = O(\log m)$. If $\eps < \frac{1}{m^2}$, then we use the adaptive algorithm below.
		
		\begin{subalgorithms}
			
			For $\eps < \frac{1}{m^2}$, the algorithm consists of two phases: the first one is intended to find distinct samples that are $\frac{1}{2m}$-far from each other, and the second one uses a decision tree to reduce the number of queries required to find additional distinct elements that are $\frac{1}{2m}$-close to those already found.
			
			\paragraph{Batches} Assume that we know (or guess) that $\Pr[ d(x,A) > \alpha] > \frac{\alpha^{-1}\eps}{\log m}$. If we draw $\alpha \eps^{-1} \log m$ samples, then with high probability there is a sample $Y$ that is $\alpha$-far from $A$. In this case, a set of $2^a \log m$ indices should distinguish $Y$ from all $X \in A$. That is, under this assumption, we can find an additional distinct element with probability greater than some global positive constant. This subroutine is called a \emph{batch}.
			
			Concretely, every $a$-batch chooses a set $J$ of $\ceil{2^{a+2} \log m}$ indices (uniformly and independently) and queries all samples in $A$ at the indices of $J$. Then it draws additional $\ceil{2^{2-a} \eps^{-1} \log m}$ samples and queries all of them at the indices of $J$. If there exists a sample $Y$ for which $Y|_J \ne X|_J$ for every $X \in A$, the batch is considered successful, and we add $Y$ to $A$.
			
			\begin{algorithm}
				\caption{Adaptive one-sided $\eps$-test for $\mathcal{S}_m$, a single batch}
				\label{alg:m-supp-adaptive-first-phase-batch}
				\begin{algorithmic}
					\State \textbf{parameters} $\eps > 0$, $A$, $m \ge 2$, $0 \le a \le \ceil{\log m}$ where $|A| \le m$.
					\State \textbf{input} A distribution $P$.
					\State \textbf{choose} a set $J$ of $\ceil{2^{a+2} \log m}$ indices uniformly and independently.
					\State \textbf{query} $X$ at $J$ for every $X \in A$.
					\State \textbf{take} $\ceil{2^{2-a} \eps^{-1} \log m}$ samples.
					\State \textbf{query} each new sample at $J$.
					\If{there exists a sample $Y$ for which $Y|_J \ne X_J$ for every $X \in A$}
					\State \textbf{set} $A \gets A \cup \{Y\}$.
					\State \Return \textsc{success} with $(Y,J)$.
					\Else
					\State \Return \textsc{fail}
					\EndIf
				\end{algorithmic}
			\end{algorithm}
			
			\begin{observation}
				Algorithm \ref{alg:m-supp-adaptive-first-phase-batch} has diminishing returns and fail stability as per Definition \ref{def:fail-stability} and Definition \ref{def:diminishing-returns}, where for formality's sake we use some fixed mapping of the set of possible non-failing output values to distinct positive natural numbers.
			\end{observation}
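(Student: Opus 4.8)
The plan is to exploit the fact that the only information a batch carries over from previous executions is the current set $A$, that this set is a deterministic function of the outputs observed so far, and that it can only grow. Both properties will then follow from a single monotonicity statement about the per-batch success probability, which I denote $p(A)$: the probability that one $a$-batch launched on the set $A$ returns \textsc{success}.

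First I would make precise the ``Markov'' structure. A batch's outcome is a deterministic function of three ingredients: the current set $A$, the freshly chosen index set $J$ of size $\ceil{2^{a+2}\log m}$, and the $\ceil{2^{2-a}\eps^{-1}\log m}$ fresh samples drawn from $P$; the last two are drawn independently of everything in previous executions. A failing batch leaves $A$ untouched, and a succeeding batch appends exactly the discovered string $Y$ (which is recorded in the output). Hence, writing $A_j = A_j(r_1,\ldots,r_j)$ for the set held after $j$ executions whose recorded outputs were $r_1,\ldots,r_j$, the map $(r_1,\ldots,r_j)\mapsto A_j$ is well defined (failures contribute nothing, each success contributes its recorded $Y$). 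Conditioned on $(X_1,\ldots,X_j)=(r_1,\ldots,r_j)$, execution $j+1$ is a fresh batch run on $A_j$ with randomness independent of the conditioning, so $\Pr[X_{j+1}\ne 0 \mid X_1=r_1,\ldots,X_j=r_j] = p(A_j)$.

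Next I would prove the monotonicity lemma: if $A_1\subseteq A_2\subseteq\{0,1\}^n$ then $p(A_2)\le p(A_1)$. Couple the two batches by giving them the same index set $J$ and the same fresh samples $Z_1,\ldots,Z_N$. If the batch on $A_2$ succeeds, some $Z_\ell$ satisfies $Z_\ell|_J\ne X|_J$ for every $X\in A_2$; since $A_1\subseteq A_2$ this also holds for every $X\in A_1$, so the batch on $A_1$ succeeds as well. Thus the success event for $A_2$ is contained in the success event for $A_1$ under the coupling, which gives $p(A_2)\le p(A_1)$.

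Finally I would combine these. For fail stability, a failure in execution $i-1$ leaves the state unchanged, so $A_{i-1}(r_1,\ldots,r_{i-2},0)=A_{i-2}(r_1,\ldots,r_{i-2})$, and applying the Markov identity twice yields $\Pr[X_i\ne 0\mid X_1=r_1,\ldots,X_{i-2}=r_{i-2},X_{i-1}=0]=p(A_{i-2}(r_1,\ldots,r_{i-2}))=\Pr[X_{i-1}\ne 0\mid X_1=r_1,\ldots,X_{i-2}=r_{i-2}]$. For diminishing returns, $A_{i-1}(r_1,\ldots,r_{i-1})$ is either equal to $A_{i-2}(r_1,\ldots,r_{i-2})$ (when $r_{i-1}=0$) or obtained from it by adding one string (when $r_{i-1}$ records a success), so in either case $A_{i-2}(r_1,\ldots,r_{i-2})\subseteq A_{i-1}(r_1,\ldots,r_{i-1})$, and the monotonicity lemma gives $\Pr[X_i\ne 0\mid X_1=r_1,\ldots,X_{i-1}=r_{i-1}]=p(A_{i-1})\le p(A_{i-2})=\Pr[X_{i-1}\ne 0\mid X_1=r_1,\ldots,X_{i-2}=r_{i-2}]$. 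The only step requiring care (rather than a genuine obstacle) is the bookkeeping of the second paragraph: one must verify that the output history determines $A$ exactly, so that conditioning on $(X_1,\ldots,X_j)$ is equivalent to fixing the state $A_j$ together with independent residual randomness, which is what reduces both defining inequalities to comparisons of $p$ on nested sets; the rest is the one-line coupling.
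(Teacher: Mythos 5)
Your proof is correct, and it fills in exactly the bookkeeping that the paper elides by presenting this as an observation without proof: the history of outputs determines the current set $A$ deterministically, a failure leaves $A$ unchanged (giving fail stability as an equality), and since successes only enlarge $A$ the coupling argument $A_1\subseteq A_2\Rightarrow p(A_2)\le p(A_1)$ yields diminishing returns. This is the intended argument, carefully spelled out.
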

			
			\begin{lemma} \label{lemma:alg-m-supp-adaptive-first-phase-batch-correct}
				Algorithm \ref{alg:m-supp-adaptive-first-phase-batch} uses $O(m^2 \log m + \eps^{-1} \log^2 m)$ queries, and if $\Pr[d(x,A) > 2^{-a-1}] > \frac{2^a \eps}{4 \log m}$, then the success probability of Algorithm \ref{alg:m-supp-adaptive-first-phase-batch} is at least $\frac{1}{3}$.
			\end{lemma}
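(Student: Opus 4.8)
The plan is to prove the two assertions of Lemma \ref{lemma:alg-m-supp-adaptive-first-phase-batch-correct} separately: the query bound by a direct count over the two querying steps of Algorithm \ref{alg:m-supp-adaptive-first-phase-batch}, and the success probability via a two-stage argument — first that the batch draws at least one fresh sample that is $2^{-a-1}$-far from $A$, and then that the index set $J$ actually separates such a sample from every element of $A$.

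For the query complexity I would count as follows. Querying each $X\in A$ at $J$ costs $|A|\cdot|J|\le m\cdot\ceil{2^{a+2}\log m}$; since $0\le a\le\ceil{\log m}$ we have $2^a=O(m)$, so this is $O(m^2\log m)$. Querying each of the $\ceil{2^{2-a}\eps^{-1}\log m}$ new samples at $J$ costs $\ceil{2^{2-a}\eps^{-1}\log m}\cdot\ceil{2^{a+2}\log m}$; multiplying the leading terms gives $2^4\eps^{-1}\log^2 m=O(\eps^{-1}\log^2 m)$, and the extra contributions introduced by the two ceilings are dominated by $O(2^{-a}\eps^{-1}\log m)+O(2^{a}\log m)+O(1)=O(\eps^{-1}\log^2 m+m^2\log m)$. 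Adding the two steps gives the claimed $O(m^2\log m+\eps^{-1}\log^2 m)$ bound.

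For the success probability, write $p=\Pr_{x\sim P}[d(x,A)>2^{-a-1}]$ and let $N=\ceil{2^{2-a}\eps^{-1}\log m}$ be the number of fresh samples the batch draws. Substituting the hypothesis $p>2^a\eps/(4\log m)$ yields $Np>2^{2-a}\eps^{-1}\log m\cdot 2^a\eps/(4\log m)=1$. Let $E_1$ be the event that at least one of the $N$ fresh samples is $2^{-a-1}$-far from $A$; then $\Pr[E_1]=1-(1-p)^N\ge 1-e^{-Np}>1-e^{-1}$. Conditioned on $E_1$, let $Y$ be the first such far sample, so that $d_H(Y,X)\ge d(Y,A)>2^{-a-1}$ for every $X\in A$. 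Since $J$ is drawn at the very start of the batch, independently of all samples, it is independent of which fresh sample is the first far one and of its value; hence for each fixed $X\in A$ the probability that $J$ contains no coordinate on which $Y$ and $X$ disagree is at most $(1-2^{-a-1})^{|J|}\le(1-2^{-a-1})^{2^{a+2}\log m}\le e^{-2\log m}$. A union bound over the at most $m$ strings of $A$ shows that, conditioned on $E_1$, with probability at least $1-m\,e^{-2\log m}$ the restriction $Y|_J$ differs from $X|_J$ for every $X\in A$, which triggers a \textsc{success}. Combining, the success probability is at least $(1-e^{-1})(1-m\,e^{-2\log m})$, and since $m\,e^{-2\log m}=m^{1-2\log_2 e}$ is decreasing in $m$ with value already below $\frac{1}{2}$ at $m=2$, this product exceeds $\frac{1}{3}$ for every $m\ge2$ (its value at $m=2$ being roughly $0.46$).

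The only mildly delicate point, and the one I would write most carefully, is the independence bookkeeping in the last step: one must justify applying the per-pair bound $e^{-2\log m}$ after conditioning on $E_1$ and on the identity and value of the first far sample, which is legitimate precisely because $J$ is fixed before any sample is seen and nothing random depends on $J$ until the comparison at the end of the batch. Everything else — the ceiling estimates in the query count and the final numeric inequality $(1-e^{-1})(1-m\,e^{-2\log m})>\frac{1}{3}$ for $m\ge2$ — is routine and requires no additional ideas.
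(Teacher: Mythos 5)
Your proof is correct and follows essentially the same two-stage argument as the paper: first bounding the probability that some fresh sample is $2^{-a-1}$-far from $A$ via $(1-p)^N\le e^{-Np}$ with $Np\ge1$, then a union bound over $A$ showing $J$ separates that sample from every $X\in A$, and multiplying the two probabilities. The only (minor and welcome) difference is that you make the independence of $J$ from the samples explicit, whereas the paper takes it for granted; the query-count bookkeeping and the final numeric verification match the paper's.
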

			\begin{proof}
				The query complexity of Algorithm \ref{alg:m-supp-adaptive-first-phase-batch} is $(|A| + \ceil{2^{-a} \eps^{-1} \log m})|J| \le (m + 2^{-a}\eps^{-1}\log m + 1)(2^a \log m + 1) = O(m^2 \log m + \eps^{-1} \log^2 m)$ (since $a \le \log m + 1$).
				
				If $\Pr[d(x,A) > 2^{-a-1}] > \frac{2^a \eps}{4 \log m}$, then the expected number of samples $2^{-a-1}$-far from $A$ within the new $\ceil{2^{2-a} \eps^{-1} \log m}$ samples is at least $1$. Hence the probability that there is at least one them is at least $1 - e^{-1} > \frac{3}{5}$ (by Lemma \ref{lemma:pr-chernoff-zero}).
				
				If this happens, let $Y$ be such a sample. With probability at least $1 - m(1 - 2^{-a-1})^{2^{a+2} \log m} > \frac{7}{10}$, $Y|_J \ne X|_J$ for every $X \in A$. Overall, the probability to have a sample $Y$ for which $Y|_J \ne X|_J$ for all $X \in A$ is at least $\frac{3}{5} \cdot \frac{7}{10} > \frac{1}{3}$.
			\end{proof}
			
			\paragraph{The first phase} The batches are not standalone, since they must have some parameter $a$. The first phase of the algorithm consists of $\ceil{\log m} + 1$ iterations. For every $0 \le a \le \ceil{\log m}$, the $a$-th iteration consists of batches with parameter $a$. To make sure that the batches are only executed when it is cost-effective, we use the ``fishing expedition'' paradigm (Algorithm \ref{alg:fishing-expedition}).
			
			\begin{algorithm}
				\caption{Adaptive one-sided $\eps$-test for $\mathcal{S}_m$, first phase}
				\label{alg:m-supp-adaptive-first-phase}
				\begin{algorithmic}
					\State \textbf{parameters} $\eps > 0$, $m \ge 2$.
					
					\State \textbf{input} A distribution $P$, a set $A \subseteq \supp(P)$ of distinguishable elements.
					
					\For{$a$ \textbf{from} $0$ \textbf{to} $\ceil{\log m}$}
					\State \textbf{let} $k_a = m + 1 - |A|$.
					
					\State \textbf{run} Algorithm \ref{alg:fishing-expedition} (``fishing expedition'') with parameters $k = k_a$, $q=\frac{1}{4\ceil{\log m + 1}}$, $p = \frac{1}{3}$,
					
					\State \phantom{\textbf{run} Algorithm \ref{alg:fishing-expedition} (``fishing expedition'') with} input $\mathcal{A} = $ Algorithm \ref{alg:m-supp-adaptive-first-phase-batch} (a single batch).
					
					\If{$|A| \ge m + 1$}
					\State \Return \reject
					\EndIf
					\EndFor
					
					\State Proceed to the second phase with $A$.
				\end{algorithmic}
			\end{algorithm}
			
			\begin{lemma} \label{lemma:alg-m-supp-adaptive-first-phase-correct}
				Algorithm \ref{alg:m-supp-adaptive-first-phase} makes $O(m^3 \log m + \eps^{-1} m \log^2 m)$ queries. With probability at least $\frac{3}{4}$, either $|A| \ge m + 1$ or $\Ct[d(x,A) | d(x,A) > \frac{1}{2m}] \le \frac{1}{2}\eps$.
			\end{lemma}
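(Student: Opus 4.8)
The plan is to establish the query bound and the correctness statement separately.

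For the query bound, I would start from Lemma~\ref{lemma:alg-m-supp-adaptive-first-phase-batch-correct}, which says each batch (Algorithm~\ref{alg:m-supp-adaptive-first-phase-batch}) costs $O(m^2\log m + \eps^{-1}\log^2 m)$ queries, so it remains to bound the total number of batches performed. In the $a$-th iteration the fishing expedition (Algorithm~\ref{alg:draining-the-swamp}) is run with goal $k_a = m+1-|A|$, threshold $p=\tfrac13$, and confidence $q = \tfrac1{4\lceil\log m+1\rceil}$; since $\log q^{-1} + \log(\log k_a+1) = O(\log\log m)$, Lemma~\ref{lemma:draining-the-swamp-first-constraint} bounds the number of batches in that iteration by $3(4H_a + O(\log\log m)) + 1 = 12H_a + O(\log\log m)$, where $H_a$ is the number of successful batches there. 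I would then sum over the $\lceil\log m\rceil+1$ iterations, using $\sum_a H_a \le m+1$ (each success adds a new element to $A$, and the algorithm stops as soon as $|A|=m+1$), to obtain a total of $O(m) + O(\log m\cdot\log\log m) = O(m)$ batches, hence $O(m^3\log m + \eps^{-1}m\log^2 m)$ queries in total.

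For correctness, I would fix $0\le a\le\lceil\log m\rceil$ and condition on the state of $A$ at the start of the $a$-th iteration (this conditioning is the point requiring care, since $A$, and therefore $k_a$ and the batch's behaviour, are random). The batch subroutine has diminishing returns and fail stability, so Lemma~\ref{lemma:draining-the-swamp-second-constraint} gives: with probability greater than $1-q$, either the goal is reached (whence $|A|=m+1$ and the algorithm rejects) or the probability $\hat p$ that one more $a$-batch succeeds satisfies $\hat p\le\tfrac13$. In the latter case I would invoke the contrapositive of Lemma~\ref{lemma:alg-m-supp-adaptive-first-phase-batch-correct} --- whose proof actually shows the success probability \emph{exceeds} $\tfrac13$ (it is at least $\tfrac35\cdot\tfrac7{10}$) whenever $\Pr_{x\sim P}[d(x,A)>2^{-a-1}]>\tfrac{2^a\eps}{4\log m}$ --- to conclude $\Pr_{x\sim P}[d(x,A)>2^{-a-1}]\le\tfrac{2^a\eps}{4\log m}$ for the set $A$ at the end of iteration $a$. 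Since this holds for every fixed starting state, it holds with probability greater than $1-q$, and a union bound over the $\lceil\log m+1\rceil = \tfrac1{4q}$ iterations makes all of these events hold simultaneously with probability at least $\tfrac34$; call this the good event.

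On the good event, if the algorithm has not rejected then $|A|\le m$ throughout, and by monotonicity of $d(x,\cdot)$ under enlarging $A$ the final set satisfies $\Pr_{x\sim P}[d(x,A)>2^{-a-1}]\le\tfrac{2^a\eps}{4\log m}$ for all $0\le a\le\lceil\log m\rceil$. I would finish with a dyadic decomposition: writing $S_a=\{x:2^{-a-1}<d(x,A)\le2^{-a}\}$, every $x$ with $d(x,A)>\tfrac1{2m}$ lies in some $S_a$ with $a\le\lceil\log m\rceil$, and the contribution of $S_a$ to $\Ct[d(x,A)\mid d(x,A)>\tfrac1{2m}]$ is at most $2^{-a}\cdot\Pr[d(x,A)>2^{-a-1}]\le\tfrac{\eps}{4\log m}$; summing over the at most $\lceil\log m\rceil+1\le 2\log m$ relevant values of $a$ yields $\Ct[d(x,A)\mid d(x,A)>\tfrac1{2m}]\le\tfrac12\eps$, as claimed. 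The main obstacle is not conceptual but lies in two pieces of bookkeeping: invoking Lemma~\ref{lemma:draining-the-swamp-second-constraint} conditionally on the random start-of-iteration state so the per-iteration failure probabilities can legitimately be union-bounded, and tracking the constants tightly enough --- using the slack in the batch success probability and the elementary bound $\lceil\log m\rceil+1\le2\log m$ (which needs a quick check for $m\in\{2,3\}$) --- to land on exactly $\tfrac12\eps$ rather than merely $O(\eps)$.
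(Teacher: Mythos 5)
Your proposal is correct and follows essentially the same approach as the paper: bound the batch count per iteration via Lemma~\ref{lemma:draining-the-swamp-first-constraint}, sum to $O(m)$ total batches, and for correctness apply Lemma~\ref{lemma:draining-the-swamp-second-constraint} per iteration, take the contrapositive of Lemma~\ref{lemma:alg-m-supp-adaptive-first-phase-batch-correct}, union-bound over iterations, and finish with the dyadic decomposition. You are slightly more explicit than the paper on two points the paper leaves implicit — invoking Lemma~\ref{lemma:draining-the-swamp-second-constraint} conditionally on the random start-of-iteration state, and using monotonicity of $d(x,\cdot)$ as $A$ grows to carry the bound $\Pr[d(x,A)>2^{-a-1}]\le\tfrac{2^a\eps}{4\log m}$ from the end of iteration $a$ through to the final $A$ — and these are genuine subtleties worth flagging, not deviations.
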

			\begin{proof}
				For every $0 \le a \le \ceil{\log m}$, let $H_a$ be the number of successful executions of Algorithm \ref{alg:m-supp-adaptive-first-phase-batch} within the $a$-th iteration. Also, let $N_a$ be the total number of executions. Lemma \ref{lemma:fishing-expedition-first-constraint} guarantees that
				\[N_a \le 3\left(4 H_a + 5\left(\log \left(4 \ceil{\log m + 1}\right) + \log\left(\log m + 2\right)\right)\right) + 1 \le 12H_a + O(\log m)\]
				
				Note that $\sum_{a=0}^{\ceil{\log m}} H_a = |A| - 1$. Either $|A| \ge m + 1$ or $\sum_{a=0}^{\ceil{\log m}} H_a < m$. Considering all iterations,
				\begin{eqnarray*}
					\sum_{a=0}^{\ceil{\log m}} N_a
					&\le& 12 \sum_{a=0}^{\ceil{\log m}} H_a + (\ceil{\log m} + 1)O(\log m) \\
					&<& 12m + O(\log^2 m) = O(m)
				\end{eqnarray*}
				
				Every iteration makes at most $O(m^2 \log m + \eps^{-1} \log^2 m)$ queries, hence the total number of queries is $O(m^3 \log m + \eps^{-1} m \log^2 m)$.
				
				For every $0 \le a \le \ceil{\log m}$, by Lemma \ref{lemma:fishing-expedition-second-constraint}, with probability at least $1 - \frac{1}{4\ceil{\log m}}$, either $H_a = k_a$ or the probability to find an additional distinct element is less than $\frac{1}{3}$. In the first case, $|A| = m + 1$ at the end of the iteration (since $k_a = m + 1 - |A|$, considering the size of $A$ at the beginning of the iteration). In the second case, $\Pr[d(x,A) > 2^{-a-1}] \le \frac{2^a \eps}{4 \log m}$ (by Lemma \ref{lemma:alg-m-supp-adaptive-first-phase-batch-correct}).
				
				With probability at least $1 - \frac{\ceil{\log m} + 1}{4 \ceil{\log m + 1}} = \frac{3}{4}$, this happens for all values of $a$. That is, either $|A| \ge m + 1$ (once) or $\Pr[d(x,A) > 2^{-a-1}] \le \frac{2^a \eps}{4 \log m}$ for all $0 \le a \le \ceil{\log m}$. In the first case we reject, and in the second case,
				\begin{eqnarray*}
					\Ct\left[d(x,A) | d(x,A) > \frac{1}{2m}\right]
					&\le& \sum_{a=0}^{\ceil{\log m}} 2^{-a} \Pr[2^{-a-1} < d(x,A) \le 2^{-a}] \\
					&\le& \sum_{a=0}^{\ceil{\log m}} 2^{-a} \Pr[d(x,A) > 2^{-a-1}] \\
					&\le& \sum_{a=0}^{\ceil{\log m}} 2^{-a}\cdot \frac{2^a \eps}{4 \log m}
					= \frac{\ceil{\log m} + 1}{4 \log m}\eps \le \frac{1}{2}\eps
			\end{eqnarray*}\end{proof}
			
			\paragraph{The second phase}
			The second phase of the algorithm handles the case where $|A| \le m$ and also $\Ct[d(x,A) | d(x,A) \ge \frac{1}{2m}] \le \frac{1}{2}\eps$. If the input distribution is $\eps$-far from being supported by any subset of $A$, the contribution of ``small distances'', $\Ct[d(x,A) | d(x,A) < \frac{1}{2m}]$, is strictly greater than $\frac{1}{2}\eps$.
			
			First, we construct a decision tree $\mathcal{T}$ for the elements in $A$, according to Building Block \ref{bb:construction-of-a-decision-tree}, at the cost of at most $m^2$ queries. After the decision tree is constructed, the algorithm is iterative. It tracks a set $A = \{X_0,\ldots,X_{|A|-1}\}$ of distinguishable samples (initialized with the $A$ supplied by the first phase) and a decision tree $\mathcal{T}$ with $|A|$ leaves corresponding to $A$'s elements.
			
			In every iteration, the algorithm draws a new sample $Y \sim P$ and executes the decision tree $\mathcal{T}$ on $Y$, resulting in an index $0 \le i \le |A| - 1$ for which $\mathcal{T}(Y) = \mathcal{T}(X_i)$. Then it queries both $X_i$ and $Y$ at a brand new query set $J$ of size $m$. If $Y$ is $\frac{1}{2m}$-close to $A$, then with high probability (proportionally to the distance), $Y|_J \ne X_i|_J$. If this happens, then we add $Y$ to $A$ and to the decision tree.
			
			\begin{algorithm}
				\caption{Adaptive one-sided $\eps$-test for $\mathcal{S}_m$, a single iteration of the second phase}
				\label{alg:m-supp-adaptive-second-phase-iteration}
				\begin{algorithmic}
					\State \textbf{input} A sample $Y \in \supp(P)$, $A \subseteq \supp(P)$, a decision tree $\mathcal{T}$; $|A| \ge 1$.
					\State \textbf{invariant} $\mathcal{T}$ has $|A|$ leaves corresponding to $A$'s elements.
					
					\State \textbf{choose} a set $J$ of $m$ indices uniformly, independently, with repetitions.
					
					\State \textbf{let} $X \in A$ for which $\mathcal{T}(Y) = \mathcal{T}(X)$. \Comment{Building Block \ref{bb:using-a-decision-tree}}
					
					\State \textbf{query} $X, Y$ at $J$.
					
					\If{$Y|_J \ne X|_J$}
					\State \textbf{set} $A \gets A \cup \{Y\}$.
					
					\State \textbf{add} $Y$ to $\mathcal{T}$. \Comment{Building Block \ref{bb:insertion-to-decision-tree}, no extra queries}
					\EndIf
				\end{algorithmic}
			\end{algorithm}
			
			\begin{lemma} \label{lemma:m-supp-adaptive-second-phase-iteration-correct}
				Algorithm \ref{alg:m-supp-adaptive-second-phase-iteration} makes at most $3m$ queries and keeps its invariants. If $|A| \le m$ and the input distribution $P$ is $\eps$-far from $\mathcal{S}_m$, then the algorithm adds $Y$ to $A$ with probability at least $\frac{1}{4}m\eps$.
			\end{lemma}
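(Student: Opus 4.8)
The plan is to verify in turn the query bound, the invariant preservation, and the success probability, each of which is fairly direct once the right quantity is identified. For the query bound: choosing the size-$m$ index multiset $J$ is free; locating $X \in A$ with $\mathcal{T}(Y) = \mathcal{T}(X)$ via Building Block~\ref{bb:using-a-decision-tree} costs at most the height of $\mathcal{T}$, which is at most $|A|-1 \le m-1$ (these queries all being on $Y$); and querying each of $X$ and $Y$ at the at most $m$ indices occurring in $J$ costs at most $2m$ more queries, for a total of at most $3m-1 < 3m$. For the invariant: the only branch that changes $A$ or $\mathcal{T}$ is the case $Y|_J \ne X|_J$, in which there is a coordinate $j \in J$ with $Y|_j \ne X|_j$ and both $X$ and $Y$ have been queried at $j$; hence Building Block~\ref{bb:insertion-to-decision-tree} applies (with the index of $X$ returned by the lookup) and produces, at no extra query cost, a decision tree $\mathcal{T}'$ with $|A|+1$ leaves fully distinguishing $A\cup\{Y\}$. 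Since $Y$ is a fresh sample from $P$, the invariants ``$A \subseteq \supp(P)$'' and ``$\mathcal{T}$ has $|A|$ leaves corresponding to $A$'s elements'' are preserved.

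For the success probability, assume $|A| \le m$ and $P$ is $\eps$-far from $\mathcal{S}_m$. As $|A| \le m$ we have $\mathcal{S}_A \subseteq \mathcal{S}_m$, so $P$ is $\eps$-far from $\mathcal{S}_A$ as well, and Observation~\ref{obs:dxA-far} gives $\E_{x\sim P}[d(x,A)] > \eps$; combining this with the second-phase premise $\Ct[d(x,A)\mid d(x,A)\ge \tfrac{1}{2m}] \le \tfrac12\eps$ yields $\Ct[d(x,A)\mid d(x,A) < \tfrac{1}{2m}] > \tfrac12\eps$. Now condition on the drawn sample $Y$ and on the element $X = X(Y) \in A$ returned by running $\mathcal{T}$ on $Y$: the sample $Y$ is added exactly when the $m$ independent uniform indices of $J$ include one on which $Y$ and $X$ differ, an event of probability $1-(1-d_H(Y,X))^m \ge 1-(1-d(Y,A))^m$ since $X \in A$. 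Using the elementary estimate $1-(1-t)^m \ge mt\left(1-\tfrac{mt}{2}\right) \ge \tfrac{m}{2}\,t$, valid for $0 \le t < \tfrac{1}{2m}$, taking expectation over $Y$, discarding the nonnegative contribution of samples with $d(Y,A) \ge \tfrac1{2m}$, and recognizing the restricted expectation as a contribution in the sense of the $\Ct[\cdot\mid\cdot]$ notation, I obtain
\[
\Pr[\text{$Y$ is added}] \;\ge\; \tfrac{m}{2}\,\Ct\!\left[d(x,A) \mid d(x,A) < \tfrac{1}{2m}\right] \;>\; \tfrac14 m\eps .
\]

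The steps needing care are the elementary inequality $1-(1-t)^m \ge mt(1-\tfrac{mt}{2})$ (derivable from $1-(1-t)^m \ge 1-e^{-mt} \ge mt - \tfrac{(mt)^2}{2}$) and, more essentially, the realization that one must invoke the second-phase bound on $\Ct[d(x,A)\mid d(x,A)\ge\tfrac{1}{2m}]$ rather than mere $\eps$-farness: from $\E[d(x,A)] > \eps$ alone one only gets $\Pr[\text{$Y$ is added}] > \eps$ (via $1-(1-t)^m \ge t$ by concavity of $t \mapsto 1-(1-t)^m$), which is weaker than $\tfrac14 m\eps$ once $m > 4$. This is precisely where the two-phase design pays off: after the first phase the remaining distance is concentrated on samples that are $\tfrac{1}{2m}$-close to $A$, and on that scale a random set of $m$ indices is comparatively efficient at catching a difference.
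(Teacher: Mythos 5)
Your proof is correct and follows essentially the same route as the paper's: use the first-phase guarantee to isolate $\Ct[d(x,A)\mid d(x,A)\le \tfrac{1}{2m}] > \tfrac12\eps$, bound the per-iteration success probability by $1-(1-d(Y,A))^{m}$, and linearize to $\tfrac{m}{2}d(Y,A)$ on the small-distance range (the paper's Observation~\ref{obs:exp-linearization} plus Lemma~\ref{lemma:mapping-contribution}). The only slip is that you state the first-phase premise with $\ge\tfrac{1}{2m}$ while Lemma~\ref{lemma:alg-m-supp-adaptive-first-phase-correct} gives it with strict $>$ (so the complement should carry $\le$, and the linearization should be checked to hold at the endpoint $t=\tfrac{1}{2m}$, which it does since $1-e^{-1/2}>\tfrac14$); this is immaterial but worth getting straight.
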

			\begin{proof}
				There exists some $\hat{A} \subseteq A$ of size at most $m$ for which $\Ct[d(x,\hat{A}) | d(x,\hat{A}) > \frac{1}{2m}] \le \frac{1}{2}\eps$ ($\hat{A}$ is the output of the first, non-adaptive phase). Since $A$ contains $\hat{A}$, $\Ct[d(x,A) | d(x,A) > \frac{1}{2m}] \le \frac{1}{2}\eps$ as well. Since $|A| \le m$, $\E[d(x,A)] \ge d(P,\mathcal{S}_m) > \eps$, and $\Ct[d(x,A) | d(x,A) \le \frac{1}{2m}] > \frac{1}{2}\eps$.
				
				For every $Y\in\supp(P)$, consider $X(Y) \in A$ for which $\mathcal{T}(Y) = \mathcal{T}(X(Y))$. For a sample $Y$ drawn from $P$, by Lemma \ref{lemma:mapping-contribution}, The probability to distinguish between $X(Y)$ and $Y$ is:
				\begin{eqnarray*}
					\E_{Y\sim P}\left[1 - \left(1 - d\left(Y,X(Y)\right)\right)^{|J|}\right]
					&\ge& \Ct_{Y\sim P}\left[1 - \left(1 - d\left(Y,X(Y)\right)\right)^{|J|} \cond d(Y,A) \le \frac{1}{2m}\right] \\
					&\ge& \Ct_{Y\sim P}\left[1 - \left(1 - d(Y,A)\right)^m \cond d(Y,A) \le \frac{1}{2m}\right] \\
					&\ge& \frac{1}{2}m \Ct_{Y\sim P}\left[d(Y,A) \cond d(Y,A) \le \frac{1}{2m}\right]
					> \frac{1}{4}m\eps
				\end{eqnarray*}
				If the algorithm distinguishes between $Y$ and $X(Y)$, then the invariant is kept by the constraints of Building Block \ref{bb:insertion-to-decision-tree}. If the algorithm fails to distinguish between $Y$ and $X(Y)$, then the invariant is kept trivially.
			\end{proof}
			
			\begin{algorithm}
				\caption{Adaptive one-sided $\eps$-test for $\mathcal{S}_m$}
				\label{alg:m-supp-adaptive}
				\begin{algorithmic}
					\State \textbf{input} A distribution $P$.
					
					\If{$\eps \ge \frac{1}{m^2}$}
					\State \textbf{run} Algorithm \ref{alg:m-supp-na} and \Return its answer.
					\EndIf
					
					\State \textbf{take} the first sample $u$.
					
					\State \textbf{set} $A \gets \{u\}$.
					
					\State \textbf{run} Algorithm \ref{alg:m-supp-adaptive-first-phase} (possibly modifying $A$, possibly rejecting).
					
					\State \textbf{construct} a decision tree $\mathcal{T}$ based on $A$. \Comment{Building Block \ref{bb:construction-of-a-decision-tree}}
					
					\State \textbf{invariant}$\mathcal{T}$ has $|A|$ leaves corresponding to $A$'s elements.
					
					\For{$\ceil{48 \eps^{-1}}$ \textbf{times}}
					\State \textbf{draw} another sample $Y$.
					
					\State \textbf{run} Algorithm \ref{alg:m-supp-adaptive-second-phase-iteration} with $(Y, A, \mathcal{T})$ (note that $A$, $\mathcal{T}$ may have been modified).
					
					\If{$|A| \ge m+1$}
					\State \Return \reject
					\EndIf
					\EndFor
					\State \Return \accept
				\end{algorithmic}
			\end{algorithm}
			
		\end{subalgorithms}
		
		\begin{theorem} \label{th:alg-m-supp-adaptive-correct}
			Algorithm \ref{alg:m-supp-adaptive} is a one-sided $\eps$-test of $\mathcal{S}_m$ using $O(\eps^{-1} m \log m \cdot \min\{\log \eps^{-1}, \log m\})$ many queries.
		\end{theorem}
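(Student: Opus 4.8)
The plan is to handle the two branches of Algorithm~\ref{alg:m-supp-adaptive} separately. When $\eps\ge\frac{1}{m^2}$ the algorithm simply returns the answer of Algorithm~\ref{alg:m-supp-na}, which by Theorem~\ref{th:alg-m-supp-na-correct} is already a one-sided $\eps$-test using $O(\eps^{-1}\log\eps^{-1}\cdot m\log m)$ queries; since $\eps\ge\frac{1}{m^2}$ forces $\log\eps^{-1}\le 2\log m$, one checks that $\log\eps^{-1}=O(\min\{\log\eps^{-1},\log m\})$ in this regime, so the bound already matches $O(\eps^{-1}m\log m\cdot\min\{\log\eps^{-1},\log m\})$ and there is nothing more to do. So all the work is in the branch $\eps<\frac{1}{m^2}$, where the algorithm takes a first sample, runs the first phase (Algorithm~\ref{alg:m-supp-adaptive-first-phase}), builds a decision tree on the returned set $A$ via Building Block~\ref{bb:construction-of-a-decision-tree} (legal since the elements of $A$ are pairwise distinguishable and $|A|\le m$ after the first phase, as otherwise it would have rejected), and then runs $\ceil{48\eps^{-1}}$ iterations of the second phase (Algorithm~\ref{alg:m-supp-adaptive-second-phase-iteration}).

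For the query complexity in this branch I would just sum the contributions: $O(m^3\log m+\eps^{-1}m\log^2 m)$ for the first phase (Lemma~\ref{lemma:alg-m-supp-adaptive-first-phase-correct}), at most $m^2$ queries for the decision-tree construction, and $\ceil{48\eps^{-1}}\cdot 3m=O(\eps^{-1}m)$ for the second phase (Lemma~\ref{lemma:m-supp-adaptive-second-phase-iteration-correct}). Then I use $\eps^{-1}>m^2$ to collapse the first two terms: $m^3\log m=m^2\cdot m\log m<\eps^{-1}m\log m$ and $m^2<\eps^{-1}$, so the total is $O(\eps^{-1}m\log^2 m)$. Finally, $\eps<\frac{1}{m^2}$ gives $\log\eps^{-1}>2\log m$, hence $\min\{\log\eps^{-1},\log m\}=\log m$, and $O(\eps^{-1}m\log^2 m)=O(\eps^{-1}m\log m\cdot\min\{\log\eps^{-1},\log m\})$ as claimed.

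Completeness is immediate in both branches: in the first branch Algorithm~\ref{alg:m-supp-na} has perfect completeness, and in the second branch the only way to \reject is to have $|A|\ge m+1$, at which point $A$ consists of $m+1$ pairwise-distinguishable (hence pairwise-distinct) samples drawn from $P$, i.e.\ an explicit witness against $\mathcal{S}_m$, which cannot exist when $P\in\mathcal{S}_m$. For soundness, assume $d(P,\mathcal{S}_m)>\eps$ (only the $\eps<\frac{1}{m^2}$ branch needs new argument). By Lemma~\ref{lemma:alg-m-supp-adaptive-first-phase-correct}, with probability at least $\frac34$ the first phase either already rejects (when $|A|\ge m+1$) or terminates with $|A|\le m$ and $\Ct[d(x,A)\mid d(x,A)>\frac{1}{2m}]\le\frac12\eps$. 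In this latter situation Lemma~\ref{lemma:m-supp-adaptive-second-phase-iteration-correct} guarantees that, as long as $|A|\le m$, each second-phase iteration adds a new element to $A$ with probability at least $\frac14 m\eps$, regardless of the history. Now, if the whole algorithm errs (accepts this far $P$ after reaching the second phase), then $|A|$ stayed $\le m$ throughout all $N:=\ceil{48\eps^{-1}}$ iterations and fewer than $m$ of them succeeded; on this event every iteration had conditional success probability at least $\frac14 m\eps$, so a coupling lets me lower-bound the number of successes stochastically by $\mathrm{Bin}(N,\frac14 m\eps)$, whose mean is at least $12m$. A multiplicative Chernoff bound (Lemma~\ref{lemma:chernoff-mult}) then bounds the probability of fewer than $m$ successes by $e^{-\Theta(m)}<\frac{1}{12}$ for $m\ge 2$, and a union bound with the $\le\frac14$ first-phase failure probability gives total error $<\frac14+\frac{1}{12}<\frac13$.

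The step I expect to require the most care is the second-phase soundness bound. One must notice that the early-termination behaviour does not hurt: reaching $|A|=m+1$ is a \emph{correct} rejection, so the only error scenario is the one in which $|A|\le m$ is maintained for all $N$ iterations, and that is exactly the regime where Lemma~\ref{lemma:m-supp-adaptive-second-phase-iteration-correct}'s per-iteration lower bound applies uniformly and the domination by a fixed binomial is valid; one then has to verify that the constant $48$ makes the Chernoff mean large enough and that $\frac14+\frac{1}{12}<\frac13$. Everything else is assembly — the genuinely new ingredients (the fishing-expedition primitive deciding when to abandon a distance scale, and the decision-tree trick that makes the small-distance scale cost only $O(\eps^{-1}m)$ queries) have already been isolated in the building blocks, Algorithm~\ref{alg:draining-the-swamp}, and Lemmas~\ref{lemma:alg-m-supp-adaptive-first-phase-correct} and \ref{lemma:m-supp-adaptive-second-phase-iteration-correct}.
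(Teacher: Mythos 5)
Your proposal is correct and matches the paper's proof in structure: the same case split on $\eps\ge\frac{1}{m^2}$, the same query-complexity accounting, the same trivial completeness argument, and the same invocation of Lemma~\ref{lemma:alg-m-supp-adaptive-first-phase-correct} and Lemma~\ref{lemma:m-supp-adaptive-second-phase-iteration-correct} for soundness. The only place you diverge is in bounding the failure probability of the second phase: the paper views the time until $|A|$ reaches $m+1$ as stochastically dominated by a sum of at most $m$ geometric random variables with parameter $\frac14 m\eps$, computes its expectation as at most $4\eps^{-1}$, and applies Markov to get failure probability at most $\frac{1}{12}$ within $\ceil{48\eps^{-1}}$ iterations; you instead lower-bound the number of successes in $N=\ceil{48\eps^{-1}}$ iterations by a $\mathrm{Bin}(N,\frac14 m\eps)$ variable and apply multiplicative Chernoff. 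Both routes rely on the same underlying coupling (each iteration, conditioned on the history and on $|A|\le m$, succeeds with probability at least $\frac14 m\eps$), which you correctly identify as the step needing care; the paper's Markov route is slightly more economical since it only needs a first-moment bound, while your Chernoff route gives an exponentially small tail, which is stronger than necessary here but equally valid. The arithmetic on both sides checks out ($\frac14+\frac{1}{12}<\frac13$ for you, $\frac34\cdot\frac{11}{12}>\frac23$ for the paper).
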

		\begin{proof}
			If $\eps \ge \frac{1}{m^2}$ then the correctness is implied by Theorem \ref{th:alg-m-supp-na-correct}, and the query complexity is $O(\eps^{-1} \log \eps^{-1} \cdot m \log m)$. Since $\eps \ge \frac{1}{m^2}$, $\log \eps^{-1} \le 2 \log m$ and the query complexity is bounded by $O(\eps^{-1} m \log m \cdot \min\{\log \eps^{-1}, \log m\})$.
			
			If $\eps < \frac{1}{m^2}$, the query complexity of the first phase is $O(m^3 \log m + \eps^{-1} m \log^2 m)$ (Lemma \ref{lemma:alg-m-supp-adaptive-first-phase-correct}). The query complexity of constructing $\mathcal{T}$ for the first time (between the phases) is $m^2$ queries, which is at most $\eps^{-1}$ since $\eps < \frac{1}{m^2}$. The query complexity of the second phase is $O(\eps^{-1}) \cdot 3m = O(\eps^{-1} m)$ (Lemma \ref{lemma:m-supp-adaptive-second-phase-iteration-correct}). Overall, the query complexity of the algorithm is $O(m^3 \log m + \eps^{-1} m \log^2 m)$. Since $m^2 <\eps^{-1}$, $m^3 < \eps^{-1} m$ and thus the query complexity is bounded by $O(\eps^{-1} m \log^2 m)$. Since $\log m \le \log \eps^{-1}$, it is bounded by $O(\eps^{-1} m \log m \cdot \min\{\log \eps^{-1}, \log m\})$ as well.
			
			Perfect completeness is trivial, since the algorithm rejects only if $|A| \ge m+1$, where $|A|$ is a set of fully distinguishable samples.
			
			For soundness, consider an input distribution $P$ that is $\eps$-far from $\mathcal{S}_m$.
			
			By Lemma \ref{lemma:alg-m-supp-adaptive-first-phase-correct}, with probability $\frac{3}{4}$, either $|A| \ge m+1$ or $\Ct\left[d(x,A) | d(x,A) > \frac{1}{2m}\right] \le \frac{1}{2}\eps$. In the first case the algorithm rejects immediately. Otherwise, we analyze the second phase.
			
			Assume that the second phase of the algorithm had infinitely many iterations. By Lemma \ref{lemma:m-supp-adaptive-second-phase-iteration-correct}, as long as $|A| \le m$, every iteration of the second phase extends $A$ with probability at least $\frac{1}{4}m \eps$. The number of iterations until $A$ has $m+1$ elements is a sum of at most $m$ geometric random variables, each having success probability at least $\frac{1}{4}m\eps$. The expected number of iterations until $A$ has $m+1$ elements is thus bounded by $m\cdot 4\eps^{-1}/m = 4\eps^{-1}$. By Markov's inequality, with probability at least $\frac{11}{12}$, this number of iterations is at most $48\eps^{-1}$.
			
			To conclude: with probability at least $\frac{3}{4}$, there exists a distinguishable set $A$ for which $|A| \ge m+1$ or $\Ct[d(x,A) | d(x,A) > \frac{1}{2m}] \le \frac{1}{2}\eps$. In the first case the algorithm rejects the input immediately, and in the second case, with probability at least $\frac{11}{12}$, the algorithm rejects the input within the first $\ceil{48\eps^{-1}}$ samples of the second phase. Overall, the probability to reject an $\eps$-far input is at least $\frac{3}{4} \cdot \frac{11}{12} > \frac{2}{3}$.
		\end{proof}
		
		\section{Superlinear lower-bound for one-sided adaptive $m$-support test}
		\label{sec:lbnd-adaptive-one-sided}
		
		\subsection{Lower bound on the size of witnesses against $\mathcal{S}_m$}
		
		We show that every witness against $\mathcal{S}_m$ must be of size at least $\Omega(m \log m)$, hence every one-sided $\eps$-test algorithm for $\mathcal{S}_m$ must use $\Omega(m \log m)$ queries as well. We use this to show an $\Omega(\eps^{-1} m \log m)$ lower bound for non-adaptive algorithms, and after a short discussion we extend this result to adaptive algorithms as well.
		
		\begin{definition}[Capacity of an edge cover]
			Let $G$ be a graph over a set $V$ vertices and let $\mathcal{G} = (G_1,\ldots,G_k)$ be a sequence of graphs over $V_1,\ldots,V_k \subseteq V$ such that $G = \bigcup_{i=1}^k G_i$. We define the \emph{capacity} of $\mathcal{G}$ as $\mathrm{cap}(\mathcal{G}) = \sum_{i=1}^k |V_k|$.
		\end{definition}
		
		The following observation follows directly from the definition of capacity.
		\begin{observation} \label{obs:capacity-queries-equivalence}
			Let $P$ be a distribution over $\{0,1\}^n$, $x_1,\ldots,x_s \in \supp(P)$ be a set of samples and $Q \subseteq \{1,\ldots,s\} \times \{1,\ldots,n\}$ be a query set. Let $S_1,\ldots,S_n$ be the index-specific query sets, that is, $Q = \bigcup_{j=1}^n (S_j \times \{j\})$. In other words, for every $j$, all samples in $S_j$ are queried at the index $j$.
			Let $\mathcal{G} = (G_1,\ldots,G_n)$ be the edge cover of the contradiction graph (Definition~\ref{def:contradiction-graph}) implied by $(x_1,\ldots,x_s;Q)$: for every $1 \le j \le n$, $G_j$ is the complete bipartite graph whose vertices are $S_j$ and the sides are:
			\begin{align*}
				& L_j = \{i \in S_j | (x_i)_j = 0 \}, && R_j = \{i \in S_j | (x_i)_j = 1 \}
			\end{align*}
			
			In this setting, $\mathrm{cap}(\mathcal{G}) = |Q|$.
		\end{observation}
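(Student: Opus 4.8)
The plan is to reduce everything to bookkeeping. First I would check that $\mathcal{G}=(G_1,\ldots,G_n)$ is genuinely an edge cover of the contradiction graph $G$ of $(x_1,\ldots,x_s;Q)$, so that speaking of $\mathrm{cap}(\mathcal{G})$ is legitimate. For the forward inclusion, take any edge $\{i_1,i_2\}$ of $G$: by Definition \ref{def:contradiction-graph} there is an index $j$ with $(x_{i_1})_j \ne (x_{i_2})_j$ and $(i_1,j),(i_2,j)\in Q$, i.e.\ $i_1,i_2\in S_j$; since the bits disagree, one of $i_1,i_2$ lies in $L_j$ and the other in $R_j$, so $\{i_1,i_2\}$ is an edge of the complete bipartite graph $G_j$. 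For the reverse inclusion, every edge of $G_j$ joins some $i_1\in L_j$ to some $i_2\in R_j$; by the definition of $L_j,R_j\subseteq S_j$ we then have $(x_{i_1})_j=0\ne 1=(x_{i_2})_j$ and $(i_1,j),(i_2,j)\in Q$, so $\{i_1,i_2\}$ is an edge of $G$. Hence $G=\bigcup_{j=1}^n G_j$ as required.

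Next I would compute the two sides separately. By the definition of capacity, $\mathrm{cap}(\mathcal{G})=\sum_{j=1}^n |S_j|$, since the vertex set of $G_j$ is $S_j$ by construction (this holds whether or not $L_j$ or $R_j$ is empty — isolated vertices are still counted, which is exactly the point of the capacity measure). On the other hand, the sets $S_j\times\{j\}$ are pairwise disjoint because their second coordinates differ, and their union is $Q$ by hypothesis, so $|Q|=\sum_{j=1}^n |S_j\times\{j\}|=\sum_{j=1}^n |S_j|$. Combining the two identities yields $\mathrm{cap}(\mathcal{G})=|Q|$.

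There is no real obstacle here. The only point that deserves a line of care is the verification that the bipartite graphs $G_j$ cover \emph{exactly} the contradiction graph (both inclusions above), since otherwise invoking the capacity definition — and, downstream, the Hansel-type bound on edge covers — would not be justified. Everything else is a disjoint-union count.
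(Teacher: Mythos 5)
Your proof is correct and is essentially the paper's (unstated) argument: the paper dismisses this observation as following directly from the definition of capacity, and your two disjoint-union counts, $\mathrm{cap}(\mathcal{G})=\sum_{j=1}^n|S_j|$ from the vertex sets of the $G_j$ and $|Q|=\sum_{j=1}^n|S_j|$ from the partition $Q=\bigsqcup_j(S_j\times\{j\})$, are exactly the intended bookkeeping. Your explicit verification that $\bigcup_{j=1}^n G_j$ is precisely the contradiction graph (both inclusions) is a worthwhile extra line of care, since the paper's definition of capacity requires the union to equal $G$, not merely contain it, and this is what licenses the later appeal to the Hansel-type bound.
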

		
		\begin{lemma}[\cite{hansel64, ks67,alon2023bipartite}] \label{lemma:hansel}
			Let $V$ be a set of vertices, and let $\mathcal{G} = (G_1,\ldots,G_k)$ be an edge cover of the $V$-clique such that all graphs $G_1,\ldots,G_k$ are bipartite. Then $\mathrm{cap}(\mathcal{G}) \ge |V| \log_2 |V|$.
		\end{lemma}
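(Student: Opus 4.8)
The plan is to use the classical probabilistic ``random side selection'' argument. Write $n = |V|$, and for each $i$ fix once and for all an arbitrary bipartition $V_i = A_i \cup B_i$ witnessing that $G_i$ is bipartite, so that every edge of $G_i$ joins $A_i$ to $B_i$ (isolated vertices of $G_i$ may be placed on either side; this is harmless since including them only increases $|V_i|$). For a vertex $v$ let $d(v) = |\{i : v \in V_i\}|$. Then $\mathrm{cap}(\mathcal{G}) = \sum_{i} |V_i| = \sum_{v \in V} d(v)$, so it suffices to show that the average of $d(v)$ over $v \in V$ is at least $\log_2 n$.

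Next I would run the following experiment: for each $i$ independently, choose one of the two sides $A_i, B_i$ uniformly at random, and say a vertex $v$ \emph{survives round $i$} if either $v \notin V_i$ or $v$ lies on the chosen side of $G_i$. Since the rounds relevant to $v$ are independent fair coins, $\Pr[v \text{ survives every round}] = 2^{-d(v)}$. The key point is that at most one vertex survives all rounds: if distinct $u, w$ both survived, then, because $\mathcal{G}$ covers the $V$-clique, some $G_i$ contains the edge $\{u,w\}$ and thus places $u$ and $w$ on opposite sides of $G_i$; round $i$ selects only one side, so one of $u, w$ fails to survive, a contradiction. Hence the number of survivors is always $0$ or $1$, so its expectation is at most $1$, which gives $\sum_{v \in V} 2^{-d(v)} \le 1$.

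Finally I would conclude by convexity: by the AM--GM inequality (equivalently, convexity of $x \mapsto 2^{-x}$), $\frac{1}{n}\sum_{v \in V} 2^{-d(v)} \ge 2^{-\frac{1}{n}\sum_{v \in V} d(v)}$, so $2^{-\frac{1}{n}\sum_{v} d(v)} \le \frac{1}{n}$, which rearranges to $\sum_{v \in V} d(v) \ge n \log_2 n$, i.e.\ $\mathrm{cap}(\mathcal{G}) \ge |V| \log_2 |V|$. The degenerate case $|V| \le 1$ is trivial since then $\log_2 |V| \le 0$. I do not expect a real obstacle here; the only step requiring care is the ``at most one survivor'' claim, which is precisely where the edge-cover hypothesis enters, and fixing one bipartition per graph at the outset removes any ambiguity coming from disconnected or partially isolated $G_i$.
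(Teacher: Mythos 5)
Your proof is correct, and it is the standard probabilistic proof of Hansel's lemma (random side selection per bipartite graph, at most one survivor, then Jensen on the convex function $x \mapsto 2^{-x}$). The paper does not supply its own proof of this lemma — it is cited as known from \cite{hansel64,ks67,alon2023bipartite} — so there is no in-paper argument to compare against; your proposal is a correct, self-contained derivation of the cited result.
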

		
		\begin{lemma} \label{lemma:clique-witness-reduction}
			Let $G$ be a graph over a set $V$ of vertices that is not $m$-colorable, and let $\mathcal{G} = (G_1,\ldots,G_k)$ be an edge cover of $G$ such that all graphs $G_1,\ldots,G_k$ are bipartite. Then $\mathrm{cap}(\mathcal{G}) \ge (m+1) \log_2 (m+1)$.
		\end{lemma}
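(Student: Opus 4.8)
The plan is to deduce this from Hansel's Lemma (Lemma~\ref{lemma:hansel}) by transforming $\mathcal{G}$ into a bipartite edge cover of a \emph{clique} on at least $m+1$ vertices without increasing the capacity. First I would reduce to the case where every $G_i$ is a \emph{complete} bipartite graph: fix any bipartition $(X_i,Y_i)$ of $G_i$ and replace $G_i$ by the complete bipartite graph with sides $X_i,Y_i$; this only adds edges, so $\mathcal{G}$ still covers $G$, and it leaves each $V_i$ and hence $\mathrm{cap}(\mathcal{G})$ unchanged. Call a pair of vertices \emph{compatible} if no $G_i$ has one of them in $X_i$ and the other in $Y_i$. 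I would then run the following process on the current graph $G'$ with current cover $\mathcal{G}'$: while $G'$ contains a pair $u,v$ that is non-adjacent in $G'$ \emph{and} compatible, identify $u$ and $v$ into a single new vertex $w$ and apply the same identification inside every $G_i'$. Compatibility guarantees that whenever both $u,v$ lie in some $V_i'$ they lie on the same side, so the modified $G_i'$ is again complete bipartite, with $|V_i'|$ either unchanged or decreased by one.

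Next I would verify three invariants of this process. (i) $\mathrm{cap}$ is non-increasing, which is immediate. (ii) $\mathcal{G}'$ still covers $G'$: the only edges created by an identification are those incident to $w$, and such an edge arises from an edge $ux$ or $vx$ of the previous graph, which was covered by some complete bipartite $G_i'$ with the relevant endpoint on the side opposite $x$; after the identification $w$ stays on that same side, so $wx$ is again an edge of $G_i'$ (edges not touching $w$ keep their endpoints and sides, hence remain covered). (iii) The chromatic number is non-decreasing, since a proper coloring of the identified graph pulls back to a proper coloring of $G'$ by giving both $u$ and $v$ the color of $w$, which is legal because $u\not\sim v$. The process terminates because $|V(G')|$ strictly decreases at each step.

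Finally, when the process halts at some $G'$ with cover $\mathcal{G}'$, every pair of distinct vertices of $V(G')$ is ``separated'' by some $G_i'$ (one of the pair in $X_i'$, the other in $Y_i'$): adjacent pairs because they are edges of $G'$, which are covered and hence lie in some complete bipartite $G_i'$; non-adjacent pairs because otherwise the process would not have halted. Since each $G_i'$ is complete bipartite, a separated pair is an edge of that $G_i'$, so $\mathcal{G}'$ is a bipartite edge cover of the complete graph on $V(G')$. By invariant (iii), $\chi(G')\ge\chi(G)\ge m+1$, hence $|V(G')|\ge\chi(G')\ge m+1$. Applying Lemma~\ref{lemma:hansel} to $\mathcal{G}'$ gives $\mathrm{cap}(\mathcal{G}')\ge|V(G')|\log_2|V(G')|\ge(m+1)\log_2(m+1)$, using that $t\mapsto t\log_2 t$ is increasing for $t\ge 1$; combined with invariant (i) this yields $\mathrm{cap}(\mathcal{G})\ge\mathrm{cap}(\mathcal{G}')\ge(m+1)\log_2(m+1)$. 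The one place that genuinely needs care is keeping the cover graphs bipartite through the identifications — this is exactly why we first pass to complete bipartite graphs and only ever identify vertices lying on a common side of every cover graph containing them; the rest is bookkeeping.
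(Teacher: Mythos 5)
Your proof is correct, and it shares the paper's high-level strategy of contracting $G$ to a clique and then invoking Lemma~\ref{lemma:hansel}. The mechanism, however, is genuinely different. The paper fixes a proper $(m+1)$-coloring of $G$ (after first trimming $G$ so that $\chi(G)=m+1$) and contracts each color class $U_i$ to a single vertex in one shot; the observation that makes this work is that each $U_i$ must lie entirely on one side of every bipartite $G_j$ containing any of its vertices, so the contracted cover stays bipartite and the contracted $G$ is forced to be a clique because the $U_i$ form a minimum coloring. You instead iterate a local move --- merge a non-adjacent, ``compatible'' pair --- and track the three invariants (capacity does not increase, cover is preserved, chromatic number does not decrease) until the process stalls, at which point the stalled graph must be a clique on $\geq m+1$ vertices; you then lean on monotonicity of $t\log_2 t$ rather than landing on exactly $m+1$ vertices.

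Two remarks on the comparison. First, your version handles more carefully a point that the paper actually glosses over: the claim that ``all vertices of $U_i$ must be on the same side'' of each $G_j$ is only automatic when the $G_j$ are \emph{complete} bipartite (a general bipartite graph can have an independent set straddling both sides). Your explicit passage to complete bipartite closures at the outset is exactly the WLOG that makes the side assignment unambiguous; the paper implicitly relies on the same reduction. Second, your argument is somewhat longer because the greedy process needs a termination argument and because you only get a lower bound on the final vertex count, whereas the paper's coloring fixes the count at $m+1$ and reads off Hansel directly. One very small technicality: the paper's definition of an edge cover uses $G = \bigcup_i G_i$ (equality), so passing to complete bipartite closures is most cleanly stated as proving the lemma first in the complete-bipartite case, then deducing the general case by noting that the closures exactly cover a supergraph $G' \supseteq G$ which is also not $m$-colorable; your phrasing (``this only adds edges, so $\mathcal{G}$ still covers $G$'') treats cover as containment, which is harmless here but worth aligning with the definition.
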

		\begin{proof}
			Without loss of generality we assume that $G$ is exactly $m+1$-colorable (otherwise we can just omit vertices one by one until it is). Let $U_1,\ldots,U_{m+1}$ be a coloring of $G$ using $m+1$ colors (that is, $U_i$ is an independent set for every $1 \le i \le m+1$ and $\bigcup_{i=1}^k U_i = V$).
			
			Let $\hat{G}$ be a graph over $\{1,\ldots,m+1\}$ such that the edge $\{i,j\}$ exists if and only if there is an edge between a vertex in $U_i$ and a vertex in $U_j$. We define the edge cover $\hat{\mathcal{G}} = (\hat{G}_1,\ldots,\hat{G}_k)$ similarly: the vertex $i$ belongs to $\hat{G}_j$ if some vertex in $U_i$ belongs to $G_j$. Note that the sides are implied since all vertices of $U_i$ must be on the same side. Note that $\mathrm{cap}(\hat{\mathcal{G}}) \le \mathrm{cap}(\mathcal{G})$, since every vertex in $\hat{\mathcal{G}}$ represents a (disjoint) set of vertices in $\mathcal{G}$.
			
			Observe that $\hat{G}$ must be a clique: if an edge $\{i,j\}$ is missing, then $U_i \cup U_j$ is an independent set and hence $G$ is $m$-colorable. Hence $\hat{\mathcal{G}}$ is an edge cover of $m+1$-clique using bipartite graphs, and by Lemma \ref{lemma:hansel} its capacity must be at least $(m+1) \log_2 (m+1)$. Hence $\mathrm{cap}(\mathcal{G}) \ge \mathrm{cap}(\hat{\mathcal{G}}) \ge (m+1) \log_2 (m+1)$.
		\end{proof}

		\begin{proposition} \label{prop:lbnd-one-sided-witness-m-log-m}
			Every witness against belonging to $\mathcal{S}_m$ must be at least $m \log_2 m$-bits long. In particular, every one-sided $\eps$-testing algorithm for $\mathcal{S}_m$ must make at least $m \log_2 m$ queries.
		\end{proposition}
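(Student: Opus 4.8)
The plan is to derive this directly from the two structural facts already established: Observation~\ref{obs:capacity-queries-equivalence}, which turns any query set into a bipartite edge cover of the contradiction graph of the same total capacity, and Lemma~\ref{lemma:clique-witness-reduction}, which lower-bounds the capacity of any bipartite edge cover of a graph that is not $m$-colorable.

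Concretely, I would start from an arbitrary witness $(x_1,\ldots,x_s;Q)$ against $\mathcal{S}_m$. By Definition~\ref{def:witness-against-Sm-not-colorable}, its contradiction graph $G$ (Definition~\ref{def:contradiction-graph}) is not $m$-colorable. Decompose $Q$ by index, $Q = \bigcup_{j=1}^n (S_j \times \{j\})$, and form the edge cover $\mathcal{G} = (G_1,\ldots,G_n)$ where each $G_j$ is the complete bipartite graph on $S_j$ whose sides $L_j,R_j$ are given by the $j$-th bit, exactly as in Observation~\ref{obs:capacity-queries-equivalence}; this construction makes every $G_j$ bipartite and gives $\mathrm{cap}(\mathcal{G}) = |Q|$. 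Applying Lemma~\ref{lemma:clique-witness-reduction} to $G$ and $\mathcal{G}$ yields $|Q| = \mathrm{cap}(\mathcal{G}) \ge (m+1)\log_2(m+1) > m\log_2 m$. Since a witness consists of the answers to the queries in $Q$, it is at least $|Q| > m\log_2 m$ bits long, which proves the first statement (in fact with the slightly stronger bound $(m+1)\log_2(m+1)$).

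For the ``in particular'' part, recall the remark preceding Lemma~\ref{lemma:yao-lbnd-one-sided}: without loss of generality a one-sided $\eps$-test rejects only when the queries it has already made constitute a witness against $\mathcal{S}_m$. A one-sided $\eps$-test must reject every input $\eps$-far from $\mathcal{S}_m$ with probability more than $\tfrac23$, so on such an input there is a positive-probability run on which it rejects, and on that run it has gathered a witness; by the first part that run makes at least $m\log_2 m$ queries, so the worst-case query complexity is at least $m\log_2 m$. I do not expect a genuine obstacle here, since the statement is an immediate corollary of the two cited lemmas; the only point requiring a little care is phrasing the ``in particular'' implication correctly — a one-sided tester need not make many queries on every run, but it must be \emph{able} to make that many in order to ever reject, which is exactly what the query-complexity lower bound asserts.
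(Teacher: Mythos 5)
Your proof is correct and follows essentially the same chain as the paper: witness $\Rightarrow$ not $m$-colorable contradiction graph $\Rightarrow$ apply Lemma~\ref{lemma:clique-witness-reduction} via the bipartite edge cover of Observation~\ref{obs:capacity-queries-equivalence}. The only cosmetic difference is that you invoke Definition~\ref{def:witness-against-Sm-not-colorable} directly rather than routing through Lemma~\ref{lemma:witness-iff-not-colorable}, and you spell out the one-sided-tester-must-find-a-witness step a bit more explicitly than the paper does; both are fine.
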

		
		\begin{proof}
			By Lemma \ref{lemma:witness-iff-not-colorable}, there exists a witness against $\mathcal{S}_m$ if and only if the contradiction graph is not $m$-colorable. By Lemma \ref{lemma:clique-witness-reduction}, the capacity of every bipartite cover of the contradiction graph is at least $(m+1) \log_2 (m+1)$. By Observation \ref{obs:capacity-queries-equivalence}, the number of queries must be at least $(m+1) \log_2 (m+1)$ as well.
		\end{proof}
		
		\subsection{An improved bound for non-adaptive algorithms}
		
		The lower bound on the size of a witness implies a trivial bound of $\Omega(m \log m)$ queries for every one-sided $\eps$-test of $\mathcal{S}_m$ for any $0<\eps<1$. To extend this result to a slightly better $\Omega(\eps^{-1} m \log m)$ bound for non-adaptive algorithms, we first need the following almost-trivial observation.
		\begin{observation}\label{obs:graph-coloring-in-parts}
			Let $G$ be a graph over $V = V_0 \cup V_1$ (where $V_0 \cap V_1 = \emptyset$). If the subgraph of $G$ induced by $V_1$ is $k$-colorable, then $G$ is $(|V_0|+k)$-colorable.
		\end{observation}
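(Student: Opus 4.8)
The plan is to exhibit an explicit proper coloring of $G$ that uses exactly $|V_0| + k$ colors, built directly on top of a given coloring of $G[V_1]$. First I would invoke the hypothesis to fix a proper $k$-coloring $c_1 : V_1 \to \{1,\ldots,k\}$ of the induced subgraph $G[V_1]$. Then I would enumerate $V_0 = \{w_1,\ldots,w_{|V_0|}\}$ and assign the $|V_0|$ fresh colors $k+1,\ldots,k+|V_0|$ bijectively, setting $c_0(w_i) = k+i$. Combining these, define $c : V \to \{1,\ldots,k+|V_0|\}$ by $c|_{V_1} = c_1$ and $c|_{V_0} = c_0$; this palette has size $|V_0| + k$, so it only remains to check that $c$ is proper.

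Next I would verify properness by a three-case analysis on an arbitrary edge $\{u,v\} \in E(G)$. If $u,v \in V_1$, then $\{u,v\}$ is also an edge of $G[V_1]$ and $c(u) = c_1(u) \ne c_1(v) = c(v)$ because $c_1$ is proper. If exactly one endpoint, say $u$, lies in $V_0$, then $c(u) \in \{k+1,\ldots,k+|V_0|\}$ while $c(v) = c_1(v) \in \{1,\ldots,k\}$, and these ranges are disjoint, so the colors differ. If both $u,v \in V_0$, they receive distinct colors because $c_0$ is injective. In all cases the endpoints get different colors, so $c$ witnesses that $G$ is $(|V_0| + k)$-colorable.

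I do not expect any genuine obstacle: the entire content is the observation that the two palettes $\{1,\ldots,k\}$ and $\{k+1,\ldots,k+|V_0|\}$ are disjoint, which is exactly what makes every edge touching $V_0$ automatically monochromatic-free. The only minor points worth a sentence are the degenerate cases $V_0 = \emptyset$ (where $c = c_1$ already works) and $V_1 = \emptyset$ (where $G[V_1]$ is vacuously $0$-colorable and $c = c_0$ works), both of which are consistent with the stated bound. This is precisely why the paper labels the statement ``almost-trivial.''
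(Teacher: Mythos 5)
Your proof is correct. The paper itself states Observation~\ref{obs:graph-coloring-in-parts} without a proof (calling it ``almost-trivial''), and your argument --- give each vertex of $V_0$ its own fresh color, keep the proper $k$-coloring on $V_1$, and note that properness of every edge follows from properness of $c_1$, injectivity of $c_0$, or disjointness of the two palettes --- is exactly the standard argument one would supply.
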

		
		Based on this observation we can improve the $\Omega(m \log m)$ bound for non-adaptive algorithms.
		\begin{proposition} \label{prop:lbnd-one-sided-m-locally-bounded}
			Every one-sided non-adaptive $\eps$-testing algorithm for $\mathcal{S}_m$ must make at least $\Omega(\eps^{-1} m \log m)$ many queries.
		\end{proposition}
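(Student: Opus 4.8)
The plan is to apply Yao's principle for one-sided algorithms (Lemma~\ref{lemma:yao-lbnd-one-sided}) to a single ``anchor-heavy'' hard distribution. For $\eps$ small enough (the case $\eps=\Omega(1)$ already follows from Proposition~\ref{prop:lbnd-one-sided-witness-m-log-m}), fix $n$ large enough and a family $v_1,\dots,v_{2m}\in\{0,1\}^n$ of strings that are pairwise $0.499$-far, and let $D$ be the point mass on the distribution $P$ that outputs $v_1$ (the \emph{anchor}) with probability $1-c\eps$ and each of $v_2,\dots,v_{2m}$ with probability $\frac{c\eps}{2m-1}$, for a suitable absolute constant $c$ (e.g.\ $c=9$). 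First I would check that $P$ is $\eps$-far from $\mathcal S_m$: by the triangle inequality any string is within distance $0.249$ of at most one $v_i$, so the support of any $Q\in\mathcal S_m$ can ``absorb'' at most $m$ of the $v_i$'s; the cheapest choice absorbs $v_1$ together with $m-1$ of the light atoms, leaving $m$ light atoms of total weight $\frac{m}{2m-1}c\eps>\frac{c}{2}\eps$, each of whose mass must travel more than $0.249$, so $d(P,\mathcal S_m)>0.249\cdot\frac{c}{2}\eps>\eps$ by the choice of $c$.

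Next I would fix an arbitrary deterministic non-adaptive algorithm, which is just a query set $Q\subseteq\{1,\dots,s\}\times\{1,\dots,n\}$ (together with an accepting set), write $q=|Q|$ and $q_i=|\{j:(i,j)\in Q\}|$ so that $\sum_i q_i=q$, and call a sample \emph{free} if it does not equal the anchor $v_1$. Let $Q_{\mathrm{free}}$ be the sub-collection of queries landing on free samples and $q_{\mathrm{free}}=|Q_{\mathrm{free}}|$. Since all anchored samples are mutually identical, they are pairwise non-adjacent in the contradiction graph, so if the subgraph $G_{\mathrm{free}}$ induced by the free samples is $(m-1)$-colorable then the whole contradiction graph is $m$-colorable (assign a fresh $m$-th color to every anchored sample), and by Lemma~\ref{lemma:witness-iff-not-colorable} the run is not a witness against $\mathcal S_m$. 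Now by Observation~\ref{obs:capacity-queries-equivalence} applied to the free samples and $Q_{\mathrm{free}}$, the graph $G_{\mathrm{free}}$ admits a bipartite edge cover of capacity exactly $q_{\mathrm{free}}$, so by the contrapositive of Lemma~\ref{lemma:clique-witness-reduction} applied with parameter $m-1$ (whose threshold is $((m-1)+1)\log_2((m-1)+1)=m\log_2 m$), whenever $q_{\mathrm{free}}<m\log_2 m$ the graph $G_{\mathrm{free}}$ is $(m-1)$-colorable. Hence $\Pr_{\mathcal R(Q,D)}[\text{witness}]\le\Pr[q_{\mathrm{free}}\ge m\log_2 m]$.

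It remains to bound $\Pr[q_{\mathrm{free}}\ge m\log_2 m]$, and this is where non-adaptivity is essential: because $Q$ (hence every $q_i$) is fixed in advance and each sample is free independently with probability $c\eps$, linearity of expectation gives $\E[q_{\mathrm{free}}]=\sum_i q_i\cdot c\eps=c\eps q$. So if $q<\frac{1}{3c}\eps^{-1}m\log_2 m$, then $\E[q_{\mathrm{free}}]<\frac{1}{3}m\log_2 m$, and Markov's inequality yields $\Pr[q_{\mathrm{free}}\ge m\log_2 m]<\frac{1}{3}$. Since $D$ always draws the $\eps$-far distribution $P$, Lemma~\ref{lemma:yao-lbnd-one-sided} then forces every one-sided non-adaptive $\eps$-test for $\mathcal S_m$ to make at least $\frac{1}{3c}\eps^{-1}m\log_2 m=\Omega(\eps^{-1}m\log m)$ queries.

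The only genuinely delicate point is the distance computation for $P$: one has to be sure that an optimal $m$-point approximation cannot cheaply ``merge'' the pairwise-far atoms, which is exactly what forces the use of $\Theta(m)$ light atoms (so that removing $m-1$ of them still leaves a constant fraction of the light mass) and a concrete separation constant; everything else is a mechanical assembly of the already-established Hansel-type bound (Lemma~\ref{lemma:clique-witness-reduction}, via Lemma~\ref{lemma:hansel}) with the capacity/query correspondence. I would also note that this argument genuinely breaks for adaptive testers — the per-sample counts $q_i$ are no longer fixed in advance, and for this specific anchor-based $P$ an adaptive tester could even recognize the anchor after a single well-chosen query — which is precisely why the adaptive lower bound (Corollary~\ref{col:lbnd-one-sided-m-adaptive}) needs the additional machinery of short strings and shared-secret code ensembles.
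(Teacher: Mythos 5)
Your proof is correct and takes essentially the same route as the paper: an anchor-heavy distribution with $2m$ pairwise-far light atoms, a Markov bound on the expected number of queries landing on non-anchor samples, and the Hansel-type capacity bound (Lemma~\ref{lemma:clique-witness-reduction} via Observation~\ref{obs:capacity-queries-equivalence}) to rule out a witness when the free subgraph must be $(m-1)$-colorable. The only cosmetic difference is that the paper draws $a_1,\ldots,a_{2m}$ uniformly at random from $\{0,1\}^n$ (so pairwise separation holds with probability $1-o(1)$) and uses the zero string as the anchor, while you fix a pairwise $0.499$-far family in advance and use $v_1$; you also spell out the colorability/capacity step that the paper compresses into ``there cannot be a witness for $m$ distinct elements among $a_1,\ldots,a_{2m}$.''
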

		\begin{proof}
			Consider the following $D_\mathrm{no}$ distribution: first, we choose $a_1,\ldots,a_{2m} \sim \{0,1\}^n$ uniformly and independently, and then we return
			\begin{align*}
				P \sim \begin{cases}
					0 & \mathrm{with\ probability\ } 1 - 10\eps \\
					a_i & \mathrm{with\ probability\ } \frac{5\eps}{m}, 1 \le i \le 2m
				\end{cases}
			\end{align*}
			With probability $1 - o(1)$, all $a_i$s are $0.49$-far from each other and from the zero vector, hence the distance of $P$ from $\mathcal{S}_m$ is at least $0.24 \cdot m \cdot \frac{5\eps}{m} > \eps$.
			
			Let $Q$ be a query set with $|Q| < \frac{1}{100}\eps^{-1} m \log_2 m$ queries over $s$ samples. For every $1 \le i \le s$, let $q_i$ be the number of queries in the $i$-th sample. Note that $q = \sum_{i=1}^s q_i$, by definition. This is where we use the assumption that the algorithm is non-adaptive, since $q_1,\ldots,q_s$ may not be pre-determined for adaptive algorithms.
			
			By linearity of expectation, the expected number of queries applied to non-zero samples is bounded by $10\eps\sum_{i=1}^s q_i = 10\eps \cdot |Q| < \frac{1}{10} m \log_2 m$. With probability higher than $\frac{4}{5}$, this number of queries is smaller than $\frac{1}{2} m \log_2 m$, and in this case, there cannot be a witness for $m$ distinct elements among $a_1,\ldots,a_{2m}$. This means that, with the same probability, there is no witness for $m+1$ elements among $0, a_1,\ldots, a_{2m}$. The lower bound follows from Lemma \ref{lemma:yao-lbnd-one-sided}.
		\end{proof}
		
		\subsection{Extending the bound to adaptive algorithms}
		
		Proposition \ref{prop:lbnd-one-sided-m-locally-bounded} only applies to non-adaptive algorithms, and may also apply to the class of locally-bounded adaptive algorithms defined in \cite{adar23} (but we do not show it here). The bottleneck of the proof is the need of having a ``good'' upper bound on the maximum number of queries per sample as compared to the expected number of queries per sample, which is impossible for adaptive algorithms. To extend the proof, we must make the algorithm completely clueless unless it queries every individual sample within a fixed portion of the maximum. To achieve these goals we use two concepts: very short strings (to reduce the maximum number of queries per sample -- but the extension for arbitrarily long strings is then proved as a simple corollary) and secret sharing (see below).
		
		\begin{lemma}[Definition and existence of secret-sharing code ensembles, \cite{beflr2020}] \label{lemma:shared-secret-exists}
			There exist some constants $\delta,\zeta > 0$ and $\eta > 1$, for which: for every $k \ge 1$, there exist $m(k) \le 2k$ divisible by $3$, and some $n(k)$ with $n(k) \le \eta \log_2 m(k)$ for which there exists a code-ensemble $\mathcal{H} : \{0,1\} \to \mathcal{P}(\{0,1\}^{n(k)})$ with the following properties:
			
			\begin{itemize}
				\item Sufficiently large: $|\mathcal{H}(0)| = |\mathcal{H}(1)| = \frac{2}{3} m(k)$.
				\item Fixed lower bound on the distance: for every $u,v \in \mathcal{H}(0) \cup \mathcal{H}(1)$ either $u=v$ or $d(u,v) > \delta$.
				\item Shared secret: for every set $I \subseteq \{1,\ldots,n(k)\}$ of size $|I| \le \zeta n(k)$, and for every $w \in \{0,1\}^{|I|}$,
				\[\Pr_{u \sim \mathcal{H}(0)}[u|_I = w] = \Pr_{u \sim \mathcal{H}(1)}[u|_I = w]\]
			\end{itemize}
		\end{lemma}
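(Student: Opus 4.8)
The plan is to recall that this lemma is quoted verbatim from \cite{beflr2020}; beyond citing it, I would give a short self-contained construction using cosets of a binary linear code whose dual has linear minimum distance. First I would fix an absolute rate $\rho = 1/2$ and invoke two standard facts: a uniformly random binary linear code $C$ of length $n$ and dimension $\rho n$ has relative minimum distance at least some absolute $\delta_0 > 0$ with probability $1-o(1)$; and its dual $C^\perp$, being a uniformly random linear code of dimension $(1-\rho)n$, likewise has relative minimum distance at least $\delta_0$ with probability $1-o(1)$. A union bound then yields, for every large enough $n$, a single code $C$ enjoying both properties. I would set $\delta = \zeta = \delta_0/2$ throughout; in particular $C^\perp$ has minimum distance $> \zeta n$.

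Second I would produce the two ``shares''. Writing $H$ for the binary entropy function, the Hamming ball of relative radius $\delta_0/2$ around $C$ has at most $2^{\rho n}\cdot 2^{H(\delta_0/2)n}$ points, which is less than $2^n$ once $\rho + H(\delta_0/2) < 1$ (true for $\rho = 1/2$, and enforceable by shrinking $\delta_0$); hence there is a vector $z$ with $d(z,C) > \delta_0 n/2$, and in particular $z \notin C$. Define $\mathcal{H}(0) = C$ and $\mathcal{H}(1) = C + z$; these are disjoint sets of size $2^{\rho n}$ each. If $u \ne v$ both lie in $\mathcal{H}(0) \cup \mathcal{H}(1)$, then either $u+v$ is a nonzero codeword of $C$, so $d(u,v) > \delta_0 \ge \delta$, or $u + v \in C + z$, so $d(u,v) = \mathrm{wt}(u+v)/n \ge d(z,C)/n > \delta_0/2 = \delta$, which is the distance bullet. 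For the shared-secret bullet, fix $I$ with $|I| \le \zeta n < d(C^\perp)$; then no nonzero codeword of $C^\perp$ is supported inside $I$, so the coordinate projection $C \to \{0,1\}^I$ is onto and all its fibres have equal size, giving $\Pr_{u\sim\mathcal{H}(0)}[u|_I = w] = 2^{-|I|}$ for every $w$; the coset $C + z$ has the identical fibre structure, so $\Pr_{u\sim\mathcal{H}(1)}[u|_I = w] = 2^{-|I|}$ as well, and the two are equal.

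Finally I would set the parameters: given $k$, let $d$ be the largest integer with $3\cdot 2^{d-1} \le 2k$, put $m(k) = 3\cdot 2^{d-1}$ (so $m(k)$ is divisible by $3$, is $\Theta(k)$, and satisfies $m(k) \le 2k$), and take $n(k) = 2d$ with $C$ the code above of length $2d$ and dimension $d$; then $|\mathcal{H}(0)| = |\mathcal{H}(1)| = 2^d = \frac23 m(k)$, and $n(k) = 2d \le \eta\log_2 m(k)$ for an absolute $\eta$ because $\log_2 m(k) = d - 1 + \log_2 3$. Small values of $k$, where the random-coding asymptotics are not yet effective, are absorbed by adjusting the absolute constants. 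I do not expect a genuine obstacle, as the statement is imported from \cite{beflr2020}; the only point needing care is making the three requirements (fixed share size, linear minimum distance of $\mathcal{H}(0) \cup \mathcal{H}(1)$, and linear dual distance of the underlying code) hold simultaneously with universal constants, which comes down to the standard existence of a constant-rate binary linear code whose dual also has linear minimum distance, together with the covering-radius argument furnishing the shift $z$.
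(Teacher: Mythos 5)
The paper does not actually prove this lemma: it imports the statement verbatim from \cite{beflr2020} and merely records the explicit constants $\delta = \tfrac{1}{30}$, $\zeta = \tfrac{1}{12}$, $\eta = 4$ that that construction yields. Your write-up supplies a self-contained existence proof, so it is a genuinely different contribution rather than a retracing of the paper's argument. The construction you give is sound: a uniformly random $[n, n/2]$ binary linear code $C$ has, with probability $1-o(1)$, both linear minimum distance (Gilbert--Varshamov) and linear dual distance (the dual of a uniformly random $k$-dimensional subspace is a uniformly random $(n-k)$-dimensional subspace, so the same GV estimate applies), and a union bound fixes one such $C$. Taking $\mathcal{H}(0)=C$ and $\mathcal{H}(1)=C+z$ with $z$ outside the $\delta_0 n/2$-neighbourhood of $C$ (existence by the volume bound $2^{n/2}\cdot 2^{H(\delta_0/2)n}<2^n$) gives the distance bullet by the case split on whether $u+v\in C$ or $u+v\in C+z$, and the shared-secret bullet follows because a set $I$ with $|I|<d(C^\perp)$ admits no nonzero dual codeword supported inside it, so $\pi_I$ restricted to $C$ (and to its coset) is a surjective map with equal-sized fibres, making both restrictions exactly uniform on $\{0,1\}^I$. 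The parameter bookkeeping ($m(k)=3\cdot 2^{d-1}\le 2k$, $n(k)=2d$, $\eta$ a fixed constant above $2$) is also correct. The trade-off against the paper's route: your argument is probabilistic/existential and therefore does not furnish explicit constants, whereas the paper relies on a concrete construction from \cite{beflr2020} that pins down $\delta,\zeta,\eta$ and even gives a stronger ``uniform restriction'' property. The one place you hand-wave --- absorbing small $k$ into the constants --- is harmless but worth one sentence: for small $n(k)$ the constraint $\lfloor\zeta n(k)\rfloor=0$ makes the shared-secret bullet vacuous and the distance bullet trivial ($\delta<1/n(k)$), so any two disjoint equal-size subsets of $\{0,1\}^{n(k)}$ already work.
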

		
		By the specific construction of \cite{beflr2020}, we can have $\delta=\frac{1}{30}$, $\zeta=\frac{1}{12}$ and $\eta = 4$, and also that every restriction of $\mathcal{H}(0)$ (or $\mathcal{H}(1)$) to a set $I$ of at most $\zeta n(k)$ bits is uniform over $\{0,1\}^I$. Note that the following does not rely on the exact values of $\delta, \zeta, \eta$, only on the constraints guaranteed by the lemma.
		
		\begin{theorem} \label{th:lbnd-one-sided-m-adaptive-almost}
			Consider $\delta$, $\zeta$, $\eta$ of Lemma \ref{lemma:shared-secret-exists}. For every $\eps < \frac{1}{4}\delta$ and every $k \ge 1$ there exist $m(k) \ge 2k$, $n(k) \le \eta \log_2 m(k)$ and a distribution $P$ over $\{0,1\}^{n(k)}$ that is $\eps$-far from $\mathcal{S}_{m(k)}$ for which every $\eps$-testing adaptive algorithm that makes less than $\frac{1}{96}\zeta\delta\eps^{-1}m(k) \log_2 m(k)$ queries cannot find a witness against $P \in \mathcal{S}_{m(k)}$ with probability $\frac{1}{3}$.
		\end{theorem}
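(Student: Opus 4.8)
The plan is to feed Lemma~\ref{lemma:yao-lbnd-one-sided} (the one-sided Yao principle) a distribution built from the code ensemble $\mathcal{H}$ of Lemma~\ref{lemma:shared-secret-exists}. Write $m=m(k)$, $n=n(k)$, fix the secret bit (say to $0$), and call the codewords of $\mathcal{H}(0)$ \emph{common} and those of $\mathcal{H}(1)$ \emph{rare}. Let $P$ put probability $1-c\eps$ spread uniformly over $\mathcal{H}(0)$ and probability $c\eps$ spread uniformly over $\mathcal{H}(1)$, for a constant $c=c(\delta)$ to be fixed. Since $|\mathcal{H}(0)\cup\mathcal{H}(1)|=\tfrac43 m>m$ and all codewords are pairwise $\delta$-far, the mapping-to-nearest characterization (Lemma~\ref{lemma:realizing-the-distance-from-a-set}) together with a pigeonhole step (each of the $m$ elements of a candidate support set is $\tfrac\delta2$-close to at most one codeword, so at least $\tfrac13 m$ codewords are $\tfrac\delta2$-far from the whole set) shows $d(P,\mathcal{S}_m)\ge\tfrac{c\delta}{4}\eps>\eps$ once $c>4/\delta$; the hypothesis $\eps<\tfrac14\delta$ keeps $c\eps<1$ for the chosen $c$. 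Note that $P$ lives on $\{0,1\}^{n}$ with $n\le\eta\log_2 m$, as required.

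The core is a colouring claim: for any deterministic adaptive tree $T$ of height $q<\tfrac{1}{96}\zeta\delta\eps^{-1}m\log_2 m$, run on $P$, if fewer than $\tfrac13 m$ of the samples that received at least $\zeta n$ queries turn out to be rare, then the contradiction graph of the run is $m$-colourable, hence the run is no witness. Here I would use the secret-sharing property twice. Call a sample \emph{hidden} if it got fewer than $\zeta n$ queries, say at index set $I$ with answers $w=x|_I$; since $|I|\le\zeta n$ and the sample's identity lies in $\mathcal{H}(0)\cup\mathcal{H}(1)$ and agrees with $w$ on $I$, one of $\Pr_{u\sim\mathcal{H}(0)}[u|_I=w]$, $\Pr_{u\sim\mathcal{H}(1)}[u|_I=w]$ is positive, hence by their equality both are, so there is a common codeword $g(I,w)\in\mathcal{H}(0)$ agreeing with $w$ on all of $I$; colour the hidden sample by $g(I,w)$, and colour every non-hidden sample by its own codeword. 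A colour class indexed by a codeword $u$ then contains only samples whose identity agrees with $u$ on their entire query set, so no two of its members share a queried index on which they differ: the class is independent. The number of colours is therefore at most $|\mathcal{H}(0)|+(\#\text{distinct rare non-hidden identities})\le\tfrac23 m+(\#\text{rare non-hidden samples})$, which is $\le m$ under the stated event.

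It remains to bound $\Pr[\text{at least }\tfrac13 m\text{ rare samples get}\ge\zeta n\text{ queries}]$. Since the tree has height $q$, at most $q/(\zeta n)$ samples get $\ge\zeta n$ queries. The second use of secret-sharing: conditioned on the algorithm's coins and on all other samples, the interaction with any single sample during its first $\zeta n$ queries has a distribution that does not depend on whether that sample is common or rare — each newly queried bit of it is an unbiased coin, because $\le\zeta n$-bit restrictions of $\mathcal{H}(0)$ and of $\mathcal{H}(1)$ are both uniform — so the event ``this sample gets $\ge\zeta n$ queries'' is independent of ``this sample is rare''. Hence $\E[\#\text{rare samples with}\ge\zeta n\text{ queries}]\le c\eps\cdot q/(\zeta n)\le\tfrac{c\delta}{96}\cdot\tfrac{m\log_2 m}{n}$, and since $n\ge\log_2 m-O(1)$ this is below $\tfrac19 m$ for a suitable $c$ (compatible with $c>4/\delta$); Markov gives probability $<\tfrac13$. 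Combining this with the colouring claim and Lemma~\ref{lemma:yao-lbnd-one-sided} yields the theorem.

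The main obstacle, I expect, is making the two invocations of secret-sharing fully rigorous: formalizing the ``oblivious selection'' step via an explicit coupling in which each sample's first $\zeta n$ answers are replaced by fresh fair coins (so that the rare/common labels can be thought of as drawn last), and checking that the colouring stays valid for samples straddling the hidden/non-hidden boundary and for the many ``random'' contradiction edges among hidden rare samples. The distance computation and the reconciliation of the constants ($c>4/\delta$, the Markov threshold, and $\eps<\tfrac14\delta$) are routine but need some care.
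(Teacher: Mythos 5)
Your argument is correct and takes a genuinely different route from the paper's. The paper uses the same distribution and the same ``red/green'' query split (red being the first $\floor{\zeta n}$ queries to each sample), and it too applies a Markov bound via secret-sharing; but it bounds the \emph{total number of queries} made to $\mathcal{H}(1)$-samples and then invokes the edge-cover theorem of Hansel/Katona--Szemer\'edi (Lemma~\ref{lemma:hansel}, via Proposition~\ref{prop:lbnd-one-sided-witness-m-log-m}) to say that a witness against $\frac13 m$-support restricted to those samples would need $\frac13 m\log_2(\frac13 m)$ queried bits. You instead build the $m$-colouring directly, using secret-sharing a second time to colour every ``hidden'' sample (one with $<\zeta n$ queries) by a \emph{common} codeword consistent with its observed bits; the number of colours is then at most $\frac23 m$ plus the count of non-hidden rare samples, which you control by Markov. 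Your route is more elementary for this particular theorem since it avoids Lemma~\ref{lemma:hansel} entirely (you recover the $\log_2 m$ factor from $n(k)\approx\log_2 m$ rather than from Hansel's bound); the paper's route has the advantage that Proposition~\ref{prop:lbnd-one-sided-witness-m-log-m} is reused for the non-adaptive lower bound (Proposition~\ref{prop:lbnd-one-sided-m-locally-bounded}) and is of independent interest as a universal lower bound on witness size. The ``obstacles'' you flag are real but manageable: the needed equality $\Pr[\text{sample $i$ gets}\ge\zeta n\text{ queries}\mid\text{rare}]=\Pr[\text{sample $i$ gets}\ge\zeta n\text{ queries}\mid\text{common}]$ follows by truncating the decision tree at the $\ceil{\zeta n}$-th query to sample $i$ and applying the secret-sharing property to the truncated transcript (which contains $<\zeta n$ bits of sample $i$); and the boundary/random-edge concerns are already resolved by your colouring, since any two samples assigned colour $u$ agree with $u$ on their entire respective query sets, hence cannot share a disagreeing queried index.
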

		\begin{proof}
			Let $k \ge 1$ and let $m = m(k) \ge 2k$ and $\log_2 m \le n(k) \le \eta \log_2 m$ as per Lemma \ref{lemma:shared-secret-exists}. For every $0 < \eps < \frac{1}{4}\delta$, we define the following distribution:
			\begin{align*}
				P : \begin{cases}
					a & \mathrm{with\ probability\ } \frac{1 - 4\delta^{-1}\eps}{2m/3},\; \forall a \in \mathcal{H}(0) \\
					b & \mathrm{with\ probability\ } \frac{4\delta^{-1}\eps}{2m/3}, \;\forall b \in \mathcal{H}(1)
				\end{cases}
			\end{align*}
			
			By the secret sharing property, for every set $Q$ of at most $\zeta n(k)$ indices and for every $w \in \{0,1\}^{|Q|}$,
			\[\Pr_{x\sim P}\left[x \in \mathcal{H}(1) \cond x|_Q = w \right] = \Pr_{x\sim P}\left[x \in \mathcal{H}(1)\right] = 4\delta^{-1}\eps \]
			
			Observe that $P$ is supported by exactly $\frac{4}{3}m$ elements. For every set $A$ of $m$ elements, at least $\frac{1}{3}m$ elements in the support of $P$ are $\frac{1}{2}\delta$-far from every element in $A$. The minimum probability mass of an individual element in $P$ is at least $\frac{4\delta^{-1}\eps}{2m/3}$, hence
			\[d(P,\mathcal{S}_A) > \frac{1}{3}m \cdot \frac{4\delta^{-1}\eps}{2m/3} \cdot \frac{1}{2}\delta = \eps \]
			This holds for every $A$ of size $m$, hence $d(P,\mathcal{S}_m) > \eps$.
			
			Assume that $m \ge \max\left\{2^{2\zeta^{-1}},6\right\}$. Let $n = n(k)$ be the length of the string and $n' = \floor{\zeta n(k)}$ be the number of bits whose reading supplies no information about $x \sim P$ coming from $\mathcal{H}(0)$ or $\mathcal{H}(1)$. Consider a deterministic adaptive algorithm $T$ with $q < \frac{1}{96} \zeta \delta \eps^{-1} m \log_2 m$ queries. The number of queries per sample is trivially bounded by $n = O(\log_2 m)$, since this is the bit length of a sample. 
			
			We color every query in $T$ with either red or green. In every path in the tree, for every $1 \le i \le s$, the first $n'$ queries of the $i$-th sample are red and the others are green. Note that the color of a query is unambiguous even if it is common to multiple paths, since the coloring only takes the path from the root into account. Note that in every path in $T$, the number of queries is bounded by $n/n'$-times the number of red queries, which is:
			\[\frac{n}{n'} = \frac{n}{\floor{\zeta n}} \le \frac{n}{\zeta n - 1} \le \frac{\log_2 m}{\zeta \log_2 m - 1} \le \frac{2\zeta^{-1}}{\zeta\cdot 2\zeta^{-1} - 1} = 2\zeta^{-1}\]
			
			We would like to bound the expected number of queries applied to $\mathcal{H}(1)$-samples to apply Proposition \ref{prop:lbnd-one-sided-witness-m-log-m}. For every $1 \le i \le q$, let $R_i$ be an indicator for ``the $i$-th query is red'' and $X_i$ be an indicator for ``the $i$-th query is red, and it applies to $\mathcal{H}(1)$-sample''. Let $X = \sum_{i=1}^q X_i$ be the number of red queries applied to $\mathcal{H}(1)$-samples and $Y$ be the total number of queries (of any color) applied to $\mathcal{H}(1)$-samples.
			
			By the secret sharing property, and the definition of ``red queries'' as the first $\floor{\zeta n}$ queries, $\E[X_i |\; R_i = 1] = 4\delta^{-1}\eps$, since the restriction of a sample drawn from $P$ to at most $\zeta n$ indices distributes exactly the same regardless of whether it belongs to $\mathcal{H}(0)$ or to $\mathcal{H}(1)$. Hence
			\[\E[X_i] = \Pr[R_i = 0]\underbrace{\E[X_i \mid R_i = 0]}_{=0} + \underbrace{\Pr[R_i = 1]}_{\le 1}\underbrace{\E[X_i \mid R_i = 1]}_{=4\delta^{-1}\eps} \le 4\delta^{-1}\eps\]
			By linearity of expectation, $\E[X] = \sum_{i=1}^q \E[X_i] \le 4\delta^{-1}\eps q < \frac{1}{24} \zeta m \log_2 m$. Recall that $Y \le 2\zeta^{-1} X$ for every sufficiently large $m$, hence $\E[Y] \le \frac{1}{12} m \log_2 m$. By Markov's inequality, $\Pr[Y > \frac{1}{8} m \log_2 m] < \frac{2}{3}$.
			
			Consider a witness against $P \in \mathcal{S}_m$. By Observation \ref{obs:graph-coloring-in-parts}, since $\supp(P) = \mathcal{H}(0) \cup \mathcal{H}(1)$ and $|\mathcal{H}(0)| = \frac{2}{3}m$, there must exist a ``sub-witness'' against $|\mathcal{H}(1)| \le \frac{1}{3}m$, meaning that the restriction of the queries and answers to $\mathcal{H}(1)$-samples forms a witness against having support at most $\frac{1}{3}m$. By Proposition \ref{prop:lbnd-one-sided-witness-m-log-m}, this sub-witness must consist of at least $\frac{1}{3}m \log_2 (\frac{1}{3}m)$ bits of $\mathcal{H}(1)$-samples. For $m \ge 6$, this bound requires more than $\frac{1}{8}m \log_2 m$ bits.
			
			With probability greater than $\frac{1}{3}$, there are fewer than $\frac{1}{8} m \log_2 m$ queries in $\mathcal{H}(1)$-samples, hence they cannot form a witness against $|\mathcal{H}(1)| \le \frac{1}{3}m$. In particular, in this case, there is no witness against $P \in \mathcal{S}_m$. For every $k \ge 1$ there exists $m \ge 2k$ for which this result applies for every $\eps > 0$ smaller than some global constant. The lower bound follows from Lemma \ref{lemma:yao-lbnd-one-sided}.
		\end{proof}
		
		Theorem \ref{th:lbnd-one-sided-m-adaptive-almost} shows a lower bound of $\Omega(\eps^{-1} m \log m)$ queries for every one-sided adaptive $\eps$-testing of $\mathcal{S}_m$ for some fixed $n$ (that depends on $m$). However, a ``full'' property testing lower bound would need to apply for at least an infinite set of value for $n$. The following corollary uses a simple repeating technique to provide this.
		
		\begin{lemma} \label{lemma:gemini}
			Let $P$ be a distribution over $\{0,1\}^n$. Let $f_{\ell} : \{0,1\}^n \to \{0,1\}^{\ell n}$ be defined by the $\ell$-fold repetition, that is $f_{\ell}(x_1,\ldots,x_n)=y$ where $y_{i\cdot n+j}=x_j$ for $1\leq j\leq n$ and $0\leq i\leq\ell-1$, and consider the distribution $\hat{P} = f_\ell(P)$. Then for every $m \ge 1$, $d(P, \mathcal{S}_m) \le d(\hat{P}, \mathcal{S}_m)$.
		\end{lemma}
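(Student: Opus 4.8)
The plan is to exploit, for each $1\le i\le\ell$, the block-projection map $\pi_i:\{0,1\}^{\ell n}\to\{0,1\}^n$ returning the $i$-th consecutive block of $n$ coordinates. Two elementary facts drive the whole argument. First, $\pi_i\circ f_\ell$ is the identity on $\{0,1\}^n$, so as sample maps $\pi_i(\hat P)=\pi_i(f_\ell(P))=P$ for every $i$. Second, since the $\ell n$ coordinates split into $\ell$ blocks of equal size, for all $y,z\in\{0,1\}^{\ell n}$ we have $d_H(y,z)=\frac1\ell\sum_{i=1}^\ell d_H(\pi_i(y),\pi_i(z))$ (the count of disagreeing coordinates is the sum of the counts over the blocks, and all normalizations match up).

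Next I would fix an optimal witness on the right-hand side. By the compactness already built into the definitions of the earth mover's distance and of the distance from a property, there exist a distribution $\hat Q\in(\mathcal S_m)_{\ell n}$ and a transfer distribution $\hat T\in\mathcal T(\hat P,\hat Q)$ with $\E_{(y,z)\sim\hat T}[d_H(y,z)]=d(\hat P,\mathcal S_m)$. For each $i$, consider the sample map $T_i$ of $\hat T$ under $(y,z)\mapsto(\pi_i(y),\pi_i(z))$. Its two marginals are $\pi_i(\hat P)=P$ and $\pi_i(\hat Q)$, so $T_i\in\mathcal T(P,\pi_i(\hat Q))$; moreover $\pi_i(\hat Q)$ has support of size at most $m$ because a deterministic map cannot increase support size, hence $\pi_i(\hat Q)\in\mathcal S_m$. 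Therefore
\[
d(P,\mathcal S_m)\le d(P,\pi_i(\hat Q))\le\E_{(u,v)\sim T_i}[d_H(u,v)]=\E_{(y,z)\sim\hat T}[d_H(\pi_i(y),\pi_i(z))].
\]

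Finally I would average this inequality over $i\in\{1,\ldots,\ell\}$ and apply the second elementary fact:
\[
d(P,\mathcal S_m)\le\frac1\ell\sum_{i=1}^\ell\E_{(y,z)\sim\hat T}[d_H(\pi_i(y),\pi_i(z))]=\E_{(y,z)\sim\hat T}[d_H(y,z)]=d(\hat P,\mathcal S_m),
\]
which is exactly the claim. I do not expect a genuine obstacle: each step is either a one-line pushforward-of-measures computation or the block-partition identity for Hamming distance. The only points needing a moment of care are that $\pi_i(\hat Q)\in\mathcal S_m$ (support size is non-increasing under functions) and the existence of the optimal $\hat Q$ and $\hat T$, which is the same compactness argument invoked elsewhere in the paper. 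One could instead phrase the last step as ``some block $i$ satisfies $d(P,\pi_i(\hat Q))\le d(\hat P,\hat Q)$'', but the averaging formulation keeps the bookkeeping cleanest.
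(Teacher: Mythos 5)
Your proof is correct and follows essentially the same route as the paper's: project $\hat P$ and an optimal $m$-supported witness onto the $\ell$ coordinate blocks, note that each block projection recovers $P$ while keeping support size at most $m$, and average the block-wise distances using the identity $d_H(y,z)=\frac1\ell\sum_i d_H(\pi_i(y),\pi_i(z))$. The only cosmetic difference is that the paper instantiates the optimal coupling via the deterministic nearest-point map of Lemma~\ref{lemma:realizing-the-distance-from-a-set}, whereas you work with a general optimal transfer distribution; both are valid and your phrasing is if anything a bit cleaner.
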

		\begin{proof}
			Let $\hat{A} = \{\hat{x}_1,\ldots,\hat{x}_m\}$ be a set that realizes the distance $d(\hat{P}, \mathcal{S}_m)$. Let $g : \supp(\hat{P}) \to \hat{A}$ be the mapping provided by Lemma \ref{lemma:realizing-the-distance-from-a-set}. For every $0 \le i \le \ell - 1$, let $h_i : \{0,1\}^{\ell n} \to \{0,1\}^n$ be the mapping $h_i(x) = x_{\{i\cdot n + 1,\ldots,i\cdot n + n\}}$. Choose $0 \le i \le \ell - 1$ uniformly, and let $h = h_i$. Let $A = \{h(\hat{x}_1),\ldots,h(\hat{x}_m)\}$.
			
			\begin{eqnarray*}
				d(P,\mathcal{S}_m)
				&\le& \E_{i\sim\{1,\ldots,n\}} [d(P,\mathcal{S}_A)]
				= \E_{x\sim P,i\sim\{1,\ldots,n\}}[d(x,A)]
				= \sum_{x \in \supp(P)} \Pr_P[x] \E_i[d(x,\mathcal{S}_A)] \\
				&\underset{(*)}\le& \sum_{x \in \supp(P)} \Pr_P[x] \E_i[d(x,h(g(f_\ell(x))))] \\
				& \underset{(\dag)}=& \sum_{x \in \supp(P)} \Pr_P[x] d(f_\ell(x),g(f_\ell(x))) \\
				&=& \sum_{x \in \supp(P)} \Pr_{\hat{P}}[f_\ell(x)] d(f_\ell(x),g(f_\ell(x))) \\
				&=& \sum_{\hat{x} \in \supp(\hat{P})} \Pr_{\hat{P}}[\hat{x}] d(\hat{x},g(\hat{x}))
				= d(\hat{P}, \mathcal{S}_{\hat{A}}) = d(\hat{P}, \mathcal{S}_m)
			\end{eqnarray*}
			
			The starred transition is correct since $h(g(f_\ell(x))) \in A$ by its definition. The daggered transition is correct since:
			\begin{eqnarray*}
				\E_{i\sim\{1,\ldots,n\}}[d(x,h(g(f_\ell(x))))]
				&=& \frac{1}{\ell} \sum_{i=0}^{\ell-1} d(x, h_i(f(x))) \\
				&=& d(x \cdots x, h_0(g(f_\ell(x))) \cdots h_{\ell-1}(g(f_\ell(x))))
				= d(f_\ell(x), g(f_\ell(x)))
			\end{eqnarray*}
		\end{proof}
		
		\begin{corollary} \label{col:lbnd-one-sided-m-adaptive}
			Every one-sided (possibly adaptive) $\eps$-testing algorithm for $\mathcal{S}_m$ must make at least $\Omega(\eps^{-1} m \log m)$ many queries, for infinitely many values of the string length $n$.
		\end{corollary}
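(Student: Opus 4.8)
The plan is to combine the ``fixed-$n$'' lower bound of Theorem \ref{th:lbnd-one-sided-m-adaptive-almost} with a black-box reduction built from the repetition map $f_\ell$ of Lemma \ref{lemma:gemini}. The statement should be read for the infinite family $m = m(k)$, $k \ge 1$, supplied by Lemma \ref{lemma:shared-secret-exists}; a lower bound for adaptive tests holds a fortiori for non-adaptive ones, so it suffices to argue the adaptive case. Fix $k$, write $m = m(k)$ and $n_0 = n(k) \le \eta \log_2 m$, and let $P$ be the distribution over $\{0,1\}^{n_0}$ from Theorem \ref{th:lbnd-one-sided-m-adaptive-almost}: it is $\eps$-far from $\mathcal{S}_m$, and no one-sided $\eps$-test making fewer than $\frac{1}{96}\zeta\delta\eps^{-1} m \log_2 m$ queries rejects it with probability $\frac13$. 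For every $\ell \ge 1$ consider $\hat{P} = f_\ell(P)$ over $\{0,1\}^{\ell n_0}$.

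First I would check that $\hat{P}$ is an equally good ``no''-instance. Since $f_\ell$ is injective, $|\supp(\hat{P})| = |\supp(P)| > m$, and by Lemma \ref{lemma:gemini}, $d(\hat{P},\mathcal{S}_m) \ge d(P,\mathcal{S}_m) > \eps$, so $\hat{P}$ is still $\eps$-far from $\mathcal{S}_m$. Conversely, if $P' \in \mathcal{S}_m$ then $f_\ell(P') \in \mathcal{S}_m$ as well. Thus $f_\ell$ carries ``yes''-instances to ``yes''-instances and (via the distance bound) ``$\eps$-far'' instances to ``$\eps$-far'' instances.

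Next I would give the simulation. Suppose $\mathcal{A}$ is a one-sided $\eps$-test for $\mathcal{S}_m$ over $\{0,1\}^{\ell n_0}$. Build an algorithm $\mathcal{A}'$ for length-$n_0$ distributions that runs $\mathcal{A}$, answering each query to coordinate $i$ of a sample of the simulated input by querying coordinate $((i-1) \bmod n_0) + 1$ of the corresponding actual sample; drawing a sample of the real input $P$ and treating it as a sample of $f_\ell(P)$ is consistent, because a sample of $f_\ell(P)$ is by definition $\ell$ verbatim copies of a sample of $P$. Hence $\mathcal{A}'$ run on $P$ produces exactly the answer distribution that $\mathcal{A}$ produces on $f_\ell(P)$, with the same number of queries (and it stays adaptive if $\mathcal{A}$ is). By the previous paragraph $\mathcal{A}'$ has perfect completeness and rejects every $\eps$-far input with probability at least $\frac23$; that is, $\mathcal{A}'$ is a one-sided $\eps$-test for $\mathcal{S}_m$ over $\{0,1\}^{n_0}$.

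Finally, applying Theorem \ref{th:lbnd-one-sided-m-adaptive-almost} to $\mathcal{A}'$ forces $\mathcal{A}'$, and therefore $\mathcal{A}$, to make $\Omega(\eps^{-1} m \log_2 m)$ queries. Since this argument works for every $\ell \ge 1$, the lower bound is witnessed over $\{0,1\}^n$ for all $n$ in the infinite set $\{\ell n_0 : \ell \ge 1\}$, which is the assertion of the corollary. I do not expect a real obstacle here; the only points requiring care are that the reduction preserves one-sidedness (it does, since $f_\ell$ maps $\mathcal{S}_m$ into itself, so no ``yes''-instance is ever rejected) and that Lemma \ref{lemma:gemini} is used in the right direction, namely that repetition never \emph{decreases} the distance to $\mathcal{S}_m$, so far instances remain far.
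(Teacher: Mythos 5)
Your proposal is correct and takes essentially the same approach as the paper: both use the hard distribution of Theorem~\ref{th:lbnd-one-sided-m-adaptive-almost}, apply Lemma~\ref{lemma:gemini} to lift it to $\{0,1\}^{\ell n_0}$, and observe that each query to the lifted distribution corresponds to a concrete query to the original one (the paper phrases this as ``every witness against $\hat{P}$ is a witness against $P$'' while you phrase it as a black-box simulation $\mathcal{A}\mapsto\mathcal{A}'$, but this is the same argument). Note that the corollary's ``non-adaptive'' is evidently a typo for ``adaptive'' given the surrounding section and the paper's own proof, so your decision to argue the adaptive case directly is the right reading.
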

		\begin{proof}
			For a proper choice of $m \ge 2$, $q \ge 1$ and $\eps > 0$, let $P$ be a distribution over $\{0,1\}^n$ (for some $n$ that may depend on $m$ and $\eps$) that is $\eps$-far from $\mathcal{S}_m$ for which, for every deterministic adaptive algorithm $T$ that makes at most $q$ queries, the probability that it finds a witness against $P \in \mathcal{S}_m$ is less than $\frac{1}{3}$. For every $\hat{n} \ge n$, let $\hat{P} = f_{\ceil{\hat{n} / n}}(P)$ (as per Lemma \ref{lemma:gemini}). Note that $P$ is $\eps$-far from $\mathcal{S}_m$ as well.
			
			Every bit of a sample drawn from $\hat{P}$ corresponds to a concrete bit of a sample drawn from $P$, hence every witness against $\hat{P} \in \mathcal{S}_m$ is also a witness against $P \in \mathcal{S}_m$. The probability to find this witness using less than $q$ queries is less than $\frac{1}{3}$, hence every one-sided adaptive $\eps$-testing algorithm for $\mathcal{S}_m$ must use at least $q$ queries, even for distributions over arbitrarily long strings. This completes the proof since $q = \Omega(\eps^{-1} m \log m)$.
		\end{proof}
		
		As a final remark, note that the corollary actually holds for sets of values of $m$ and $n$ that are not very sparse -- all large enough integers of the form $3\cdot 2^{4k}$ for $m$, and a set of positive density (that depends on $m$) for $n$.
		
		\section*{Acknowledgement}
		We thank Noga Alon for pointing out the results of \cite{hansel64} and \cite{ks67} that we use in Section~\ref{sec:lbnd-adaptive-one-sided}.
		
		\bibliographystyle{alpha}
		\bibliography{main.bib}
		
		\appendix
		
		\section{Probabilistic Bounds}
		
		We prove and recall here some technical probabilistic bounds that are used in our proofs.
		
		\begin{lemma}\label{lemma:mapping-contribution}
			Let $X$ be a non-negative random variable and let $f,g : \mathbb R \to \mathbb R$ be two non-negative functions. Assume that $\Pr\left[g(X) \ge f(X) \cond B\right] = 1$ for some event $B$. Then $\E\left[g(X) \cond B\right] \ge \Ct\left[f(X) \cond B\right]$.
		\end{lemma}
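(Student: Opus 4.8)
The plan is to unwind the definition of contribution and then chain two elementary inequalities. Recall that $\Ct[f(X)\mid B]=\E[f(X)\mid B]\cdot\Pr[B]$ (taken to be $0$ when $\Pr[B]=0$). So the claim is precisely that $\E[g(X)\mid B]\ge\E[f(X)\mid B]\cdot\Pr[B]$.

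First I would dispose of the degenerate case $\Pr[B]=0$: there $\Ct[f(X)\mid B]=0$ by convention, while $\E[g(X)\mid B]\ge 0$ since $g$ is non-negative, so the inequality holds trivially. Then, assuming $\Pr[B]>0$, I would use monotonicity of conditional expectation: since $\Pr[g(X)\ge f(X)\mid B]=1$, we get $\E[g(X)\mid B]\ge\E[f(X)\mid B]$. Finally, because $f$ is non-negative we have $\E[f(X)\mid B]\ge 0$, and since $\Pr[B]\le 1$ this gives $\E[f(X)\mid B]\ge\E[f(X)\mid B]\cdot\Pr[B]=\Ct[f(X)\mid B]$. Combining the two displayed inequalities yields $\E[g(X)\mid B]\ge\Ct[f(X)\mid B]$, as desired.

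There is essentially no obstacle here; the only point requiring the smallest amount of care is keeping track of the factor $\Pr[B]\le 1$ (which is what lets one pass from the conditional expectation $\E[f(X)\mid B]$ to the contribution $\Ct[f(X)\mid B]$) and the bookkeeping for the $\Pr[B]=0$ case so that the statement is literally true as written.
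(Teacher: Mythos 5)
Your proof is correct and follows essentially the same chain as the paper's, which goes $\E[g(X)\mid B]\ge\Ct[g(X)\mid B]=\Pr[B]\E[g(X)\mid B]\ge\Pr[B]\E[f(X)\mid B]=\Ct[f(X)\mid B]$; you simply apply the two steps (monotonicity of conditional expectation, and dropping the $\Pr[B]\le 1$ factor) in the opposite order, and you also spell out the $\Pr[B]=0$ case, which the paper leaves implicit.
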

		\begin{proof}
			$\E\left[g(X) \cond B\right] \ge \Ct\left[g(X) \cond B\right] = \Pr[B]\E\left[g(X) \cond B\right] \ge \Pr\left[B\right]\E\left[f(X) \cond B\right] = \Ct\left[f(X) \cond B\right]$
		\end{proof}

		\begin{lemma}[A technical bound] \label{lemma:pr-chernoff-zero}
			Let $X$ be a sum of independent variables $X_1,\ldots,X_n$, where each evaluates to $1$ with probability $p_i$ and evaluates to $0$ otherwise. Then $\Pr[X = 0] \le e^{-\E[X]}$.
		\end{lemma}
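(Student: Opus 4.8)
The plan is to write $\Pr[X=0]$ as a product using independence, and then apply the elementary inequality $1-x \le e^{-x}$ coordinate by coordinate. Concretely, since $X = \sum_{i=1}^n X_i$ is a sum of non-negative integer-valued variables, the event $\{X = 0\}$ is exactly the intersection of the events $\{X_i = 0\}$ over all $i$; by independence of the $X_i$ this gives $\Pr[X=0] = \prod_{i=1}^n \Pr[X_i = 0] = \prod_{i=1}^n (1 - p_i)$.

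Next I would invoke the standard bound $1 - x \le e^{-x}$, valid for all real $x$ (in particular for each $x = p_i \in [0,1]$), to get $\prod_{i=1}^n (1-p_i) \le \prod_{i=1}^n e^{-p_i} = e^{-\sum_{i=1}^n p_i}$. Finally, by linearity of expectation $\E[X] = \sum_{i=1}^n \E[X_i] = \sum_{i=1}^n p_i$, so the right-hand side is exactly $e^{-\E[X]}$, which completes the proof.

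There is no real obstacle here; the only things to be slightly careful about are that the decomposition of $\{X=0\}$ as $\bigcap_i \{X_i = 0\}$ uses that each $X_i$ is non-negative (so no cancellation can occur), and that the inequality $1-x \le e^{-x}$ is applied only where it is needed, namely at the points $p_i$. Everything else is a one-line computation.
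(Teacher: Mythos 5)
Your proof is correct and follows exactly the same route as the paper's: factor $\Pr[X=0]=\prod_i(1-p_i)$ by independence, apply $1-x\le e^{-x}$ termwise, and identify $\sum_i p_i$ with $\E[X]$. You have simply spelled out the intermediate justifications that the paper leaves implicit.
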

		\begin{proof}
			$\Pr[X = 0] = \prod_{i=1}^n (1 - p_i) \le e^{-\sum_{i=1}^n p_i} = e^{-\E[X]}$
		\end{proof}
		
		\begin{observation}\label{obs:exp-linearization}
			For $n \ge 1$ and $0 \le p < \frac{1}{2n}$, $1 - (1 - p)^n \ge \frac{1}{2}np$.
		\end{observation}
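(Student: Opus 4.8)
The plan is to read $1-(1-p)^n$ as a probability and apply a second‑order Bonferroni (inclusion–exclusion) bound. Let $E_1,\dots,E_n$ be independent events, each of probability $p$. Then $1-(1-p)^n=\Pr\!\left[\bigcup_{i=1}^n E_i\right]$, and the truncated inclusion–exclusion inequality gives
\[
1-(1-p)^n \;=\; \Pr\!\left[\bigcup_{i=1}^n E_i\right] \;\ge\; \sum_{i=1}^n \Pr[E_i] \;-\; \sum_{1\le i<j\le n}\Pr[E_i\cap E_j] \;=\; np-\binom{n}{2}p^2 .
\]

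The second step is to control the quadratic correction term using the hypothesis $p<\frac{1}{2n}$, which forces $np<\frac12$. Then $\binom{n}{2}p^2\le \frac{n^2}{2}p^2=\frac{np}{2}\cdot np<\frac{np}{4}$, so that
\[
1-(1-p)^n \;\ge\; np-\frac{np}{4} \;=\; \frac34\,np \;\ge\; \frac12\,np,
\]
which is in fact slightly stronger than the claimed bound. I would also note the degenerate cases $p=0$ (both sides are $0$) and $n=1$ (where the left side equals $np$, so the inequality is immediate with room to spare).

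As a fully elementary alternative to invoking Bonferroni, one can expand $(1-p)^n=\sum_{k\ge 0}\binom{n}{k}(-p)^k$ and observe that for $0\le p<\frac{1}{2n}$ the magnitudes $\binom{n}{k}p^k$ are strictly decreasing in $k$, since the ratio of consecutive ones is $\frac{n-k}{k+1}\,p\le np<\frac12<1$; the alternating‑series truncation bound then yields $(1-p)^n\le 1-np+\binom{n}{2}p^2$, which is exactly the estimate obtained above, and the rest of the argument is identical. There is no real obstacle here: this is a routine elementary inequality, and the only points needing a line of care are the justification of the second‑order Bonferroni bound (equivalently, the alternating‑series truncation) and the observation that the range $p<\frac{1}{2n}$ is precisely what makes the quadratic error term at most a quarter (hence at most half) of the linear main term.
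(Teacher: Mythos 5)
Your proof is correct and is essentially the same as the paper's: both derive a second-order upper bound $(1-p)^n \le 1 - np + O((np)^2)$ (the paper simply asserts $(1-p)^n\le 1-np+(np)^2$, while you justify the tighter $1-np+\binom{n}{2}p^2$ via Bonferroni/alternating-series truncation) and then use $np<\tfrac12$ to absorb the quadratic term into at most half of the linear term.
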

		\begin{proof} Since $np<1/2$, we have 
			$(1-p)^n\le 1-np+(np)^2\le 1-np+\frac{1}{2}np=1-\frac{1}{2}np$.
		\end{proof}
		
		\begin{lemma}[Multiplicative Chernoff's Bound]
			\label{lemma:chernoff-mult}
			Let $X_1,\ldots,X_m$ be independent variables in $\{0,1\}$. Let $X = \sum_{i=1}^m X_i$, then for every $\delta > 0$:
			\begin{eqnarray*}
				\Pr[X < (1-\delta) \E[X]] &<& \left(\frac{e^{-\delta}}{(1 - \delta)^{1 - \delta}}\right)^{\E[X]} \\
				\Pr[X > (1 + \delta)\E[X]] &<& \left(\frac{e^\delta}{(1 + \delta)^{1 + \delta}}\right)^{\E[X]}
			\end{eqnarray*}
		\end{lemma}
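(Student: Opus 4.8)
The plan is to prove this standard bound via the exponential-moment (Chernoff) method: bound the tail probability by applying Markov's inequality to $e^{tX}$ for the upper tail and to $e^{-tX}$ for the lower tail, use independence to factor the moment generating function, and then optimize over the free parameter $t$.

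For the upper tail I would fix $t > 0$ and write $\Pr[X > (1+\delta)\mu] = \Pr[e^{tX} > e^{t(1+\delta)\mu}] \le e^{-t(1+\delta)\mu}\,\E[e^{tX}]$, where $\mu = \E[X]$. By independence, $\E[e^{tX}] = \prod_{i=1}^m \E[e^{tX_i}] = \prod_{i=1}^m (1 + p_i(e^t - 1))$, and since $1 + x \le e^x$ this is at most $\exp\!\left((e^t-1)\sum_i p_i\right) = \exp\!\left((e^t-1)\mu\right)$. Substituting the choice $t = \ln(1+\delta)$ (positive since $\delta > 0$) gives the claimed bound $\left(e^\delta/(1+\delta)^{1+\delta}\right)^\mu$; a one-line derivative check confirms that $t = \ln(1+\delta)$ is in fact the minimizer of $e^{-t(1+\delta)\mu}e^{(e^t-1)\mu}$, so nothing beyond routine calculus is involved.

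For the lower tail I would assume $0 < \delta < 1$ (otherwise $(1-\delta)\mu \le 0$ and the probability is $0$ since $X \ge 0$), fix $t > 0$, and write $\Pr[X < (1-\delta)\mu] = \Pr[e^{-tX} > e^{-t(1-\delta)\mu}] \le e^{t(1-\delta)\mu}\,\E[e^{-tX}]$. Again by independence and $1 + x \le e^x$, $\E[e^{-tX}] = \prod_i (1 + p_i(e^{-t}-1)) \le \exp\!\left((e^{-t}-1)\mu\right)$. Choosing $t = -\ln(1-\delta) > 0$ yields the bound $\left(e^{-\delta}/(1-\delta)^{1-\delta}\right)^\mu$, and this $t$ is again the minimizer.

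The only genuinely delicate points are bookkeeping — verifying that the stated choices of $t$ are the optimizers and being careful about the degenerate regime $\delta \ge 1$ in the lower-tail statement — and neither is a real obstacle. Since this is a textbook inequality recalled only for use elsewhere, the most sensible course is to cite a standard concentration-inequalities reference and include the short derivation above only for completeness.
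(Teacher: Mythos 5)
The paper states this lemma without proof — it is recalled in the appendix as a standard, off-the-shelf concentration inequality, with no derivation given. Your exponential-moment argument (Markov on $e^{\pm tX}$, factor the MGF by independence, bound each factor via $1+x \le e^x$, and optimize $t$) is the canonical textbook proof and is correct. So there is no paper proof to compare against; your plan to cite a standard reference and include the short derivation only for completeness is exactly the right call.

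One small bookkeeping note: as written, your chain of inequalities yields $\le$ while the lemma is stated with strict $<$. If you care to match the statement exactly, observe that whenever $\E[X]>0$ at least one $p_i>0$, and for $t\neq 0$ the step $1+p_i(e^{\pm t}-1)\le e^{p_i(e^{\pm t}-1)}$ is strict for that $i$, which upgrades the overall bound to a strict inequality; and when $\E[X]=0$ the right-hand side equals $1$ while the left-hand side is $0$. Your handling of the degenerate regime $\delta\ge 1$ in the lower tail is also correct and worth keeping.
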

		
		\begin{lemma}[Multiplicative Chernoff for well-dependent variables with a goal] \label{lemma:chernoff-with-goal}
			Let $\mathcal{G}\subset \mathbb{R}^*$ be a set of goal sequences, satisfying that if $u$ is a prefix of $v$ and $u\in\mathcal{G}$ then $v\in\mathcal{G}$. Additionally let $R_1,\ldots,R_m$ be a set of random variables and $p_1,\ldots,p_m$ be values in $[0,1]$, such that for every $1\leq i\leq m$ and $v=(r_1,\ldots,r_{i-1})\in\mathbb{R}^{i-1}\setminus\mathcal{G}$ (that can happen with positive probability) we have $\Pr\left[R_i\ne 0 \cond R_1=r_1,\ldots,R_{i-1}=r_{i-1}\right] \ge p_i$. For every $1 \le i \le m$, let $X_i \in \{0,1\}$ be an indicator for $R_i \ne 0$ and $X = \sum_{i=1}^m X_i$. Under these premises, for every $0 < \delta < 1$,
			\[\Pr\left[((R_1,\ldots,R_m)\notin\mathcal{G}) \wedge \left(X < (1-\delta) \sum_{i=1}^m p_i\right)\right]
			< \left(\frac{e^{-\delta}}{(1 - \delta)^{1 - \delta}} \right)^{\sum_{i=1}^m p_i}\]
		\end{lemma}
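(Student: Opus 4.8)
The plan is to reduce to the ordinary multiplicative Chernoff bound (Lemma~\ref{lemma:chernoff-mult}) by constructing an auxiliary sequence of genuinely independent Bernoulli variables that dominates the $X_i$'s "until the goal is reached". The key device is to couple the process $(R_1,\ldots,R_m)$ with independent variables $Y_1,\ldots,Y_m$, where $Y_i\sim\mathrm{Bernoulli}(p_i)$, built on an extended probability space. Concretely, I would process the indices $i=1,\ldots,m$ in order, and at step $i$: if the prefix $(r_1,\ldots,r_{i-1})$ already lies in $\mathcal{G}$, sample $R_i$ from its true conditional law and sample $Y_i$ independently; otherwise, use the hypothesis $\Pr[R_i\ne 0\mid r_1,\ldots,r_{i-1}]\ge p_i$ together with a standard splitting/quantile coupling so that, on the event $\{(r_1,\ldots,r_{i-1})\notin\mathcal{G}\}$, we have $Y_i\le X_i$ pointwise, while the marginal of $Y_i$ is still exactly $\mathrm{Bernoulli}(p_i)$ and $Y_i$ is independent of $Y_1,\ldots,Y_{i-1}$. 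This last independence is the point that needs a little care: since the coupling at step $i$ is a deterministic function of an auxiliary uniform $U_i$ (independent of everything so far) and of the conditional probability, one can arrange that the marginal law of $Y_i$ given the past is \emph{always} $\mathrm{Bernoulli}(p_i)$, regardless of whether we are in the "goal" branch or not, which yields full mutual independence of the $Y_i$'s.

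With this coupling in hand, the main step is the pointwise domination on the relevant event. Let $\tau$ be the first index at which $(R_1,\ldots,R_\tau)\in\mathcal{G}$ (with $\tau=\infty$ if this never happens). On the event $\{(R_1,\ldots,R_m)\notin\mathcal{G}\}$ we have $\tau>m$, hence for every $i\le m$ the prefix $(R_1,\ldots,R_{i-1})\notin\mathcal{G}$, so the coupling guarantees $Y_i\le X_i$ for all $i\le m$, and therefore $Y\coloneqq\sum_{i=1}^m Y_i\le\sum_{i=1}^m X_i = X$ on this event. Consequently
\[
\Pr\!\left[((R_1,\ldots,R_m)\notin\mathcal{G})\wedge\Big(X<(1-\delta)\textstyle\sum_{i=1}^m p_i\Big)\right]
\le \Pr\!\left[Y<(1-\delta)\textstyle\sum_{i=1}^m p_i\right].
\]
Since $Y$ is a sum of independent $\{0,1\}$ variables with $\E[Y]=\sum_{i=1}^m p_i$, the right-hand side is bounded by $\bigl(e^{-\delta}/(1-\delta)^{1-\delta}\bigr)^{\sum_{i=1}^m p_i}$ by the lower-tail case of Lemma~\ref{lemma:chernoff-mult}, which is exactly the claimed bound.

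The step I expect to be the main obstacle is making the coupling construction fully rigorous — in particular, verifying that the $Y_i$ are \emph{mutually} independent (not merely that each has the right marginal), while simultaneously ensuring the domination $Y_i\le X_i$ holds on the pre-goal event. The cleanest way to handle this is to introduce, for each $i$, an independent uniform $U_i\in[0,1]$, and to define both $R_i$ and $Y_i$ as explicit measurable functions of $U_i$ and the history $(R_1,\ldots,R_{i-1})$: choose the function so that $\{U_i\le p_i\}$ always forces $Y_i=1$, and on the pre-goal branch the set $\{U_i\le p_i\}$ is contained in $\{R_i\ne 0\}$ (possible because the conditional probability of $R_i\ne 0$ is at least $p_i$ there). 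Then $Y_i=\mathbf{1}[U_i\le p_i]$ depends only on $U_i$, giving independence immediately, and the domination is built in. A short remark should also note that conditioning "that can happen with positive probability" is exactly what makes these conditional statements well-defined, matching the hypothesis as stated.
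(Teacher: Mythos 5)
Your proposal is correct and takes essentially the same approach as the paper: both construct an auxiliary sequence of mutually independent Bernoulli$(p_i)$ variables $Y_i$ coupled to the $R_i$ so that $Y_i\le X_i$ holds pointwise on the pre-goal event $(R_1,\ldots,R_{i-1})\notin\mathcal{G}$, and then apply the standard multiplicative Chernoff lower-tail bound to $Y=\sum Y_i$. The only difference is presentational — you realize the coupling by driving $R_i$ and $Y_i$ from a common uniform $U_i$ (so $Y_i=\mathbf{1}[U_i\le p_i]$ and independence is automatic), whereas the paper draws $Y_i$ as a randomized function of $(R_1,\ldots,R_i)$ by thinning the success event $\{R_i\ne 0\}$ with probability $p_i/\alpha_i$; these are two equivalent ways of building the same coupling.
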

		\begin{proof}
			We first define auxiliary random variables $Y_1,\ldots,Y_m \in \{0,1\}$ that will depend on $R_1,\ldots,R_m$ but will be independent of each other. To draw the value of $Y_i$, we consider $R_1,\ldots,R_i$. Considering their respective values $r_1,\ldots,r_i$, if $(r_1,\ldots,r_{i-1})\in\mathcal{G}$ we take $Y_i$ to be equal to $1$ with probability $p_i$ and to $0$ with probability $1-p_i$, independently of all other choices so far. If $(r_1,\ldots,r_{i-1})\notin\mathcal{G}$, then we set $\alpha_i=\Pr[R_i\neq 0|R_1=r_1,\ldots,R_{i-1}=r_{i-1}]$ and choose $Y_i$ according to $r_i$: If $r_i=0$ (meaning in particular that $X_i=0$) then we choose $Y_i=0$. If $r_i\neq 0$, we choose $Y_i=1$ with probability $\frac{p_i}{\alpha_i}$ and choose $Y_i=0$ with probability $\frac{\alpha_i-p_i}{\alpha_i}$. This last choice is drawn independently of previous choices (note that in particular these are indeed probabilities between $0$ and $1$, since by the assumptions of the lemma $p_i\leq\alpha_i\leq 1$). We also define the sum $Y=\sum_{i=1}^mY_i$.
			
			It is not hard to see that $\Pr[Y_i=1]=p_i$ for every $1\leq i\leq m$. To conclude the proof, it remains to show that $Y_1,\ldots,Y_m$ are indeed independent (when not conditioning on the other random variables defined over our probability space), and that it is always the case that $Y\leq X$ or $(R_1,\ldots,R_m)\in\mathcal{G}$ (or both), since then we can use the multiplicative Chernoff bound to conclude that
			\begin{eqnarray*}
				&& \Pr\left[((R_1,\ldots,R_m) \notin \mathcal{G}) \wedge \left(X < (1-\delta) \sum_{i=1}^m p_i \right) \right] \\
				&\le& \Pr\left[((R_1,\ldots,R_m) \notin \mathcal{G}) \wedge \left(Y < (1-\delta) \sum_{i=1}^m p_i \right) \right] \\
				&\le& \Pr\left[Y < (1-\delta)\E[Y]\right]
				< \left(\frac{e^{-\delta}}{(1 - \delta)^{1 - \delta}}\right)^{\sum_{i=1}^m p_i}
			\end{eqnarray*}
			
			For the independence assertion, we need to show that for every sequence of values $(b_1,\ldots,b_{i-1})\in\{0,1\}^{i-1}$ we have $\Pr[Y_i=1|Y_1=b_1,\ldots,Y_{i-1}=b_{i-1}]=p_i$. We note that it is enough to show that for every sequence $(r_1,\ldots,r_{i-1})\in\supp(R_1,\ldots,R_{i-1})$ we have $\Pr[Y_i=1|R_1=r_1,\ldots,R_{i-1}=r_{i-1}]=p_i$, since the choices of $Y_1,\ldots,Y_{i-1}$ depend only on the values of $R_1,\ldots,R_{i-1}$ (and possible additional independent coin tosses). To show the latter, we go over the cases. If $(r_1,\ldots,r_{i-1})\in\mathcal{G}$ then $Y_i$ was explicitly defined to be $1$ with probability exactly $p_i$. If $(r_1,\ldots,r_{i-1})\notin\mathcal{G}$, we write 
			\begin{eqnarray*}
				&& \Pr\left[Y_i = 1 \cond R_1=r_1,\ldots,R_{i-1}=r_{i-1}\right] \\
				&=& \Pr\left[Y_i=X_i=1 \cond R_1=r_1,\ldots,R_{i-1}=r_{i-1}\right] \\
				&=& \Pr\left[Y_i=1 \cond X_i=1,R_1=r_1,\ldots,R_{i-1}=r_{i-1}\right] \cdot \Pr\left[X_i=1 \cond R_1=r_1,\ldots,R_{i-1}=r_{i-1}\right] \\
				&=& \frac{p_i}{\alpha_i}\cdot\alpha_i = p_i
			\end{eqnarray*}
			
			To show the conditional inequality assertion, note that $(R_1,\ldots,R_m)\notin\mathcal{G}$ implies in particular that $(R_1,\ldots,R_{i-1})\notin\mathcal{G}$ for every $1\leq i\leq m$. Hence, all the choices of $Y_i$ in this case are made so that $Y_i\leq X_i$ for $1\leq i\leq m$, and in particular $Y\leq X$.
		\end{proof}

		\section{Proof of Proposition \ref{prop:lbl-inv-and-one-side-is-supp}}
		\label{apx:proof-of-proposition}
		
		To prove Proposition 3.26, we need the following lemma.
		
		\begin{lemma}
			\label{lemma:one-sided-is-closed-under-conditioning}
			Consider a property $\mathcal{P}$ of distributions over $\{0,1\}^n$ that has a one-sided $\eps$-test for every $\eps > 0$, and consider some $P\in\mathcal{P}$. For every distribution $Q$ for which $\supp(Q) \subseteq \supp(P)$, $Q \in \mathcal{P}$ as well.
		\end{lemma}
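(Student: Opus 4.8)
The plan is to exploit the perfect completeness of one-sided tests together with the hypothesis $\supp(Q)\subseteq\supp(P)$, showing that $Q$ cannot be $\eps$-far from $\mathcal{P}$ for any $\eps>0$; a compactness argument then upgrades ``distance $0$'' to genuine membership. First I would fix an arbitrary $\eps>0$ and let $\mathcal{A}$ be a one-sided $\eps$-test for $\mathcal{P}$, say with $s$ samples and $q$ queries. Since $P\in\mathcal{P}$, the test $\mathcal{A}$ accepts $P$ with probability $1$.

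Next, suppose towards a contradiction that $d(Q,\mathcal{P})>\eps$. Then $\mathcal{A}$ rejects $Q$ with probability greater than $\tfrac23$, in particular with positive probability. Viewing $\mathcal{A}$ as a distribution over deterministic decision trees (as done throughout via Yao's observation), there exist a deterministic tree $T$ in the support of $\mathcal{A}$ and a sample matrix $M\in\{0,1\}^{s\times n}$, each of whose rows lies in $\supp(Q)$, such that the run of $T$ on $M$ ends at a rejecting leaf. Because $\supp(Q)\subseteq\supp(P)$, every row of $M$ also lies in $\supp(P)$, so this very matrix $M$ is populated with positive probability when the samples are drawn from $P$. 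Since $T$ itself has positive weight under $\mathcal{A}$, the test $\mathcal{A}$ reaches a rejecting leaf on input $P$ with positive probability, contradicting that $\mathcal{A}$ accepts $P$ with probability $1$. Hence $d(Q,\mathcal{P})\le\eps$.

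Finally, since $\eps>0$ was arbitrary, $d(Q,\mathcal{P})=0$. As $\mathcal{P}_n$ is compact, the infimum defining $d_\mathrm{EMD}(Q,\mathcal{P})$ is attained at some $Q'\in\mathcal{P}_n$ with $d_\mathrm{EMD}(Q,Q')=0$; and since $d_\mathrm{EMD}$ is a genuine metric on $\mathcal{D}(\{0,1\}^n)$ (the ground distance $d_H$ separates the points of $\{0,1\}^n$), this forces $Q=Q'\in\mathcal{P}_n$, i.e.\ $Q\in\mathcal{P}$. The only delicate points I expect are the passage from ``$\mathcal{A}$ rejects $Q$ with positive probability'' to a concrete deterministic tree together with a concrete sample matrix supported entirely on $\supp(Q)$ (this is exactly where Yao's view of a probabilistic algorithm is needed), and the final step that earth mover's closeness to the compact set $\mathcal{P}_n$ yields actual membership rather than merely a limit point.
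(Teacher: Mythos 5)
Your proof is correct and follows essentially the same argument as the paper's: view the probabilistic tester as a distribution over deterministic trees, note that any rejecting transcript on samples from $\supp(Q)$ would also arise with positive probability from $P$ (since $\supp(Q)\subseteq\supp(P)$), and use perfect completeness on $P$ to rule this out. The only difference is that you spell out the final step --- deducing $Q\in\mathcal{P}$ from $d(Q,\mathcal{P})=0$ via compactness of $\mathcal{P}_n$ and the fact that $d_\mathrm{EMD}$ is a genuine metric --- which the paper leaves implicit; making it explicit is a small but real improvement in rigor.
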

		\begin{proof}
			Let $\mathcal{A}_{\eps,n}$ be an $\eps$-testing algorithm for $\mathcal{P}$ in the Huge Object model that draws $s(\eps,n)$ samples and makes $q(\eps,n)$ queries. For every $\eps$, if $\mathcal{A}_{\eps,n}$ rejects $Q$ with positive probability, then there must be a sequence of samples $X_1,\ldots,X_{s(\eps,n)}$ and a sequence of random query choices $Y_1,\ldots,Y_{q(\eps,n)}$ for which the algorithm rejects. If we run $\mathcal{A}_{\eps,n}$ on $P$ as its input, there is a positive probability to draw exactly the same sequence $X_1,\ldots,X_{s(\eps,n)}$ (since $\supp(Q) \subseteq \supp(P)$), and to make the exact same random query choices $Y_1,\ldots,Y_{q(\eps,n)}$. In this case, $\mathcal{A}_{\eps,n}$ rejects $P$, a contradiction to its one-sideness. Hence $\mathcal{A}_{\eps,n}$ must always accept $Q$ and this holds for every $\eps > 0$, which implies $Q \in \mathcal{P}$.
		\end{proof}
		
		We now recall the proposition to be proved.
		
		\propZlblZinvZandZoneZsideZisZsupp*
		
		\begin{proof}
			Let $f(n)$ be as follows:
			\begin{align*}
				f(n) = \begin{cases}
					0 & \mathcal{P} \cap \mathcal{D}(\{0,1\}^n) = \emptyset \\
					\max_{P \in \mathcal{P}} |\supp(P)| & \mathrm{otherwise}
				\end{cases}
			\end{align*}
			
			In the first case, the property is empty for $n$-bit strings. In the second case, a maximum must exist, and it cannot be more than $2^n$. For every $n$ for which $f(n) > 0$, we also define $P_n$ as one of the distributions that demonstrate the maximum. By the definition of $f(n)$, $\mathcal{P}$ does not contain any distribution that is supported by more than $f(n)$ elements.
			
			Consider some $n$ for which $f(n) > 0$, and consider some distribution $P$ over $\{0,1\}^n$ that is supported by at most $f(n)$ elements. Let $\sigma : \{0,1\}^n \to \{0,1\}^n$ be a permutation for which $\supp(P) \subseteq \supp(\sigma(P_n))$. By the label invariance of $\mathcal{P}$, $\sigma(P_n) \in \mathcal{P}$. By Lemma \ref{lemma:one-sided-is-closed-under-conditioning}, $P \in \mathcal{P}$, because its support is a subset of the support of $\sigma(P_n)$ which belongs to $\mathcal{P}$. Hence $\mathcal{P}$ contains all distributions that are supported by at most $f(n)$ elements, as desired.
		\end{proof}
		
	\end{document}